\renewcommand{\Return}{\State \textbf{return} }
\numberwithin{equation}{section}
\newcounter{satze}
\numberwithin{satze}{section}
\theoremstyle{plain} 
\newtheorem{Lemma}[satze]{Lemma}
\newtheorem{Theorem}[satze]{Theorem}
\theoremstyle{definition}
\theoremstyle{remark} 
\newtheorem{remark}[satze]{Remark}
\newcommand{\R}{\mathbb{R}}
\newcommand{\N}{\mathbb{N}}
\newcommand{\E}{\mathbb{E}}       
\newcommand{\EINS}{\mathbb{1}}           
\newcommand{\OO}{\mathcal{O}} 
\newcommand{\oo}{\hbox{o}} 
\newcommand{\Pj}{\mathbb{P}}             
\newcommand\argmin{\operatornamewithlimits{argmin}}
\newcommand{\valmu}{m}
\newcommand{\valsd}{s}
\newcommand{\sn}{{d_n}}
\newcommand{\snn}{{d_n}}
\newcommand{\HSMUCE}{\operatorname{H-SMUCE}}
\newcommand{\SMUCE}{\operatorname{SMUCE}}
\newcommand{\PELT}{\operatorname{PELT}}
\newcommand{\BS}{\operatorname{BS}}
\newcommand{\CBS}{\operatorname{CBS}}
\newcommand{\LOOVF}{\operatorname{LOOVF}}
\newcommand{\cumSeg}{\operatorname{cumSeg}}
\begin{document}

\title{Heterogeneous Change Point Inference}

\author{Florian Pein$^1$, Hannes Sieling$^1$ and Axel Munk$^1,^2$}
\address{$^1$Institute for Mathematical Stochastics, Georg-August-University of G{\"o}ttingen, Goldschmidtstra{\ss}e 7, 37077 G{\"o}ttingen}
\address{$^2$Max Planck Institute for Biophysical Chemistry, Am Fa{\ss}berg 11, 37077 G{\"o}ttingen}
\email{\{fpein, hsieling, munk\}@math.uni-goettingen.de}

\date{\today}
\subjclass[2010]{62G08,62G15,90C39}
\keywords{change-point regression, deviation bounds, dynamic programming, heterogeneous noise, honest confidence sets, ion channel recordings, multiscale methods, robustness, scale dependent critical values}

\begin{abstract}
We propose $\HSMUCE$ (\textbf{h}eterogeneous \textbf{s}imultaneous \textbf{mu}l\-ti\-scale \textbf{c}hange-point \textbf{e}stimator) for the detection of multiple change-points of the signal in a heterogeneous gaussian regression model. A piecewise constant function is estimated by minimizing the number of change-points over the acceptance region of a multiscale test which locally adapts to changes in the variance. The multiscale test is a combination of local likelihood ratio tests which are properly calibrated by scale dependent critical values in order to keep a global nominal level $\alpha$, even for finite samples.\\
We show that $\HSMUCE$ controls the error of over- and underestimation of the number of change-points. To this end, new deviation bounds for \textit{F}-type statistics are derived. Moreover, we obtain confidence sets for the whole signal. All results are non-asymptotic and uniform over a large class of heterogeneous change-point models. $\HSMUCE$ is fast to compute, achieves the optimal detection rate and estimates the number of change-points at almost optimal accuracy for vanishing signals, while still being robust.\\
We compare $\HSMUCE$ with several state of the art methods in simulations and analyse current recordings of a transmembrane protein in the bacterial outer membrane with pronounced heterogeneity for its states. An R-package is available online.
\end{abstract}

\maketitle

\section{Introduction}
\subsection{Change-point regression} Multiple change-point detection is a long standing task in statistical research and related areas. One of the most fundamental models in this context is (homogeneous) gaussian change-point regression
\begin{equation}\label{eq:modelsmuce}
 Y_i = \mu(i/n) + \sigma\epsilon_i,\ i=1,\ldots,n.
\end{equation}
Here, $Y=(Y_1,\ldots,Y_n)$ denotes the observations, $\mu$ is an unknown piecewise constant mean function, $\sigma^2$ a constant (homogeneous) variance and $\epsilon_1,\ldots,\epsilon_n$ are independent standard gaussian distributed errors. For simplicity, we restrict ourself in this paper to an equidistant sampling scheme $x_{i, n}=i/n$, but extensions to other designs are straightforward. Methods for estimating the change-points in \eqref{eq:modelsmuce} and in related models are vast, see for instance \citep{{Yao88},{DonohoJohnstone94},{CsoergoHorvath97},{BaiPerron98},{Braunetal00},{BirgeMassart01},{KolaczykNowak05},{Boysenetal09},{HarchaouiLevy10},{Jengetal10},{PELT},{RigolletTsybakov12},{ZhangSiegmund12},{WBS}} and the references in these papers.\\
A crucial condition in most of the afore-mentioned papers is the assumption of homogeneous noise, i.e. a constant variance $\sigma^2$ in \eqref{eq:modelsmuce}. In many applications, however, this assumption is violated and the variance $\sigma^2$ varies over time, $\sigma^2(i/n)$, say. This problem arises for instance in the analysis of array CGH data, see \citep{{MuggeoAdelfio11},{ARLOT}}. Further examples include economic applications, e.g. the real interest rate is modelled by \citet{BaiPerron03} as piecewise linear regression with covariates and heterogeneous noise. In this paper we will discuss an example from membrane biophysics, the recordings of ion channels, see Section \ref{sec:ion}. It is well known that the noise of the open state can be much larger than the background noise, see \citep[Section 3.4.4]{SakmannNeher95} and the references therein, rendering the different states as a potential source for variance heterogeneity.\\
To illustrate the effects of missing heterogeneity we show in Figure \ref{fig:example} a reconstruction by $\SMUCE$\footnote{\url{http://cran.r-project.org/web/packages/stepR}, v. 1.0-3, 2015-06-18} \citep{SMUCE}, a method that has been designed for homogeneous noise.
\begin{figure}[ht]
\centering
\begin{subfigure}{\textwidth}
\includegraphics[width=0.98\textwidth]{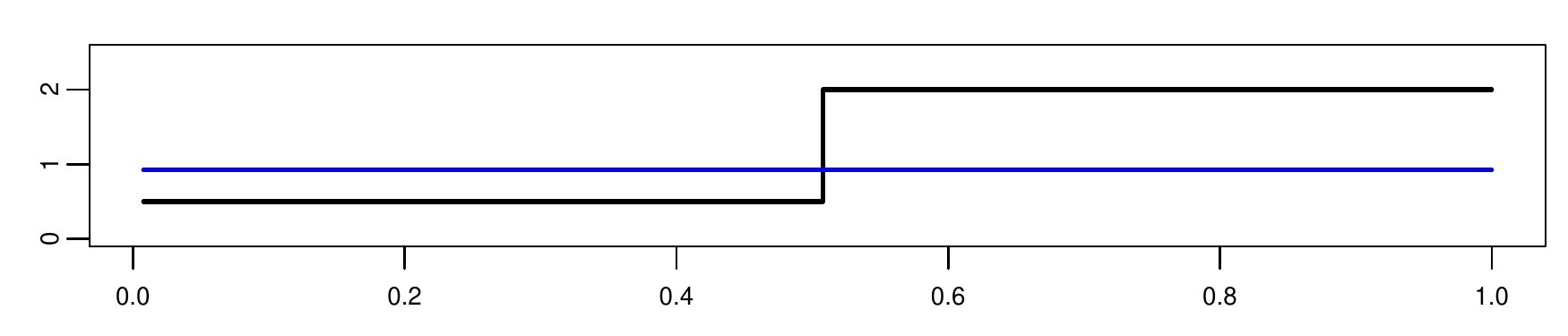}
\subcaption{True (black) and estimated (blue) standard deviation.}
\end{subfigure}
\begin{subfigure}{\textwidth}
\includegraphics[width=0.98\textwidth]{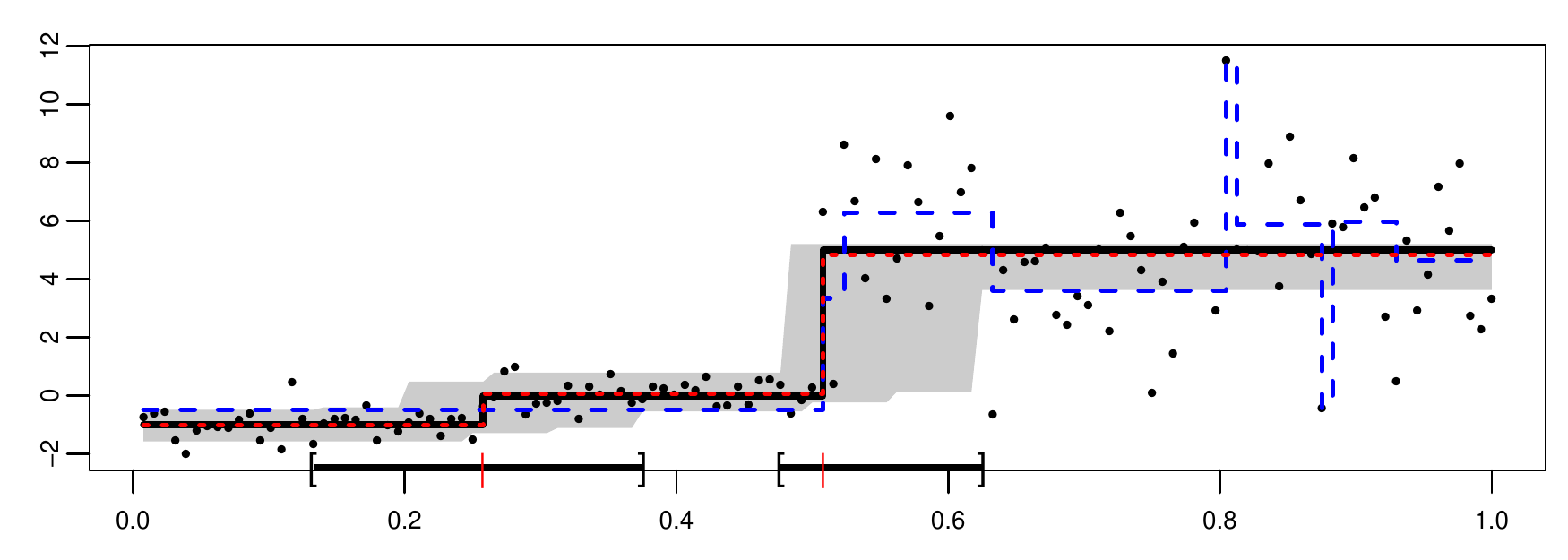}
\subcaption{Simulated observations (black dots) together with the true signal (black line), the confidence band (grey), the confidence intervals for the change-point locations (brackets and thick lines), estimated change-points locations (red dashes) as well as the estimates by $\operatorname{H-SMUCE}$ (red dotted line) and by $\operatorname{SMUCE}$ (blue dashed line), both with $\alpha=0.1$.}
\end{subfigure}
\caption{Illustration of missing heterogeneity.}
\label{fig:example}
\end{figure}
The constant variance assumption of $\SMUCE$ leads to an overestimation of the standard deviation (which is pre-estimated by a global $\operatorname{IQR}$ type estimator) in the first half and an underestimation in the second half. Therefore, in Figure \ref{fig:example} $\SMUCE$ misses the first change-point and includes artificial change-points in the second half to compensate for the too small variance it is forced to use, see also \citep{DiscussionZhou}. Note, that this flaw is not a particular feature of $\SMUCE$, it will occur for any sensible segmentation method which relies on a constant variance assumption. Hence, from Figure \ref{fig:example} the fundamental difficulty of the heterogeneous (multiscale) change-point regression problem becomes apparent: How to decide whether a change of fluctuations of the data result from high frequent changes in the mean $\mu$ or merely from an increase of the noise level? Apparently, if changes can occur on any scale (i.e. the length of an interval of neighbouring observations) this is a notoriously difficult issue and proper separation of signal and noise cannot be performed without extra information.\\
Indeed, the basis of the presented theory is that often a reasonable assumption is to exclude changes of the variance in constant segments of $\mu$ (see Section \ref{sec:model}). Under this relatively weak assumption, we show in this paper that estimation of $\mu$ for heterogeneous data in a multiscale fashion becomes indeed feasible. In addition, we also aim for a method which is robust when changes in the variance occur at locations where the signal is constant, as we believe that this cannot be excluded in many practical cases. To this end, we introduce a new estimator $\HSMUCE$ (\textbf{h}eterogeneous \textbf{s}imultaneous \textbf{mu}l\-ti\-scale \textbf{c}hange-point \textbf{e}stimator) which recovers the signal under heterogeneous noise over a broad range of scales, controls the family-wise error rate to overestimate the number of change-points, allows for confidence statements for all unknown quantities, obeys certain statistical optimality properties, and can be efficiently computed. At the same hand it is robust against heterogeneous noise on constant signal segments, which as a byproduct reveals it also as robust against more heavily tailed errors.

\subsection{Heterogeneous change-point model}\label{sec:model}
To be more specific, from now on we consider the \textit{heterogeneous} gaussian change-point model
\begin{equation}\label{eq:model}
 Y_i = \mu(i/n) + \sigma(i/n) \epsilon_i,\ i=1,\ldots,n,
\end{equation}
where now the variance $\sigma^2$ is also given by an unknown piecewise constant function. For the following theoretical results we assume that it only can have possible change-points at the same locations as the mean function $\mu$. In other words, $(\mu, \sigma^2)$ is a pair of unknown piecewise constant functions in
\begin{equation}\label{eq:S}
 \mathcal{S} := \left\{ (\mu,\sigma^2): [0,1]\mapsto \R^2,\ \mu=\sum_{k=0}^K \valmu_k \EINS_{[\tau_k,\tau_{k+1})},\ \sigma^2=\sum_{k=0}^K \valsd_k^2 \EINS_{[\tau_k,\tau_{k+1})},\ K\in\N\right\},
\end{equation}
with unknown change-point locations $\tau_0=0<\tau_1<\cdots<\tau_K<1=\tau_{K+1}$ for some unknown number of change-points $K\in\N$ and also unknown function values $\valmu_k\in\R$ and $\valsd_k^2\in \R_{+}$ of $\mu$ and $\sigma^2$. By technical reasons, we define $\mu(1)$ and $\sigma^2(1)$ by continuous extension of $\mu$ and $\sigma^2$, respectively. For identifiability of $\mu$ we assume $\valmu_k\neq \valmu_{k+1}\ \forall\ k=0,\ldots,K$ and exclude isolated changes in the signal by assuming that $\mu:[0,1]\to\R$ is a right continuous function. It is important to stress that in \eqref{eq:S} we allow the variance to \textit{potentially} have changes at the locations of the changes of the signal, but the variance $\sigma^2$ need not necessarily change when $\mu$ changes, as we do not assume $s_k^2\neq s_{k+1}^2$. In particular, homogeneous observations are still part of the model. The other way around, we assume that within a constant segment of $\mu$ it may \textit{not} happen that the variance changes, i.e. the local signal to noise ratio is assumed to be constant on $[\tau_k,\tau_{k+1})$ for all $k=0,\ldots,K$. We argue that this is a reasonable assumption in many applications (recall the examples given above and see our data example in Section \ref{sec:ion}), since a change-point represents typically a change of the condition of the underlying state. Moreover, for example, in many engineering applications locally a constant signal to noise ratio is assumed \citep{Guillaumeetal90}, which motivates our modelling as well. However, we stress that the restriction to model \eqref{eq:S} is only required for our theory. For the practical application we will show in simulations in Section \ref{sec:simrobust} that $\HSMUCE$ is in addition robust against a violation of this assumption (i.e. when a variance change may occur without a signal change) and hence works still well in the general heterogeneous change-point model \eqref{eq:model}ſ with arbitrary variance changes.

\subsection{Heterogeneous change-point regression}\label{sec:heterogeneitydiscussion}
Up to our best knowledge there are only few methods which explicitly take into account the heterogeneity of the noise in change-point regression, either in the model considered here or in related models. Thereby, we have to distinguish two settings.\\
First, that also changes in the variance are considered as relevant structural changes of the underlying data (even when the mean does not change) and to seek for changes in the mean and in the variance, respectively. In this spirit are local search methods, such as binary segmentation ($\BS$) \citep{{ScottKnott74},{Vostrikova81}} (if the corresponding single change-point detection method takes the heterogeneous variance into account), but also global methods can achieve this goal, e.g. $\PELT$ \citep{PELT}. For a Bayesian approach in this context see \citep{Duetal15} and the references therein. In addition, methods which search for more general structural changes in the distribution potentially apply to this setup as well, see e.g. \citep{{CsoergoHorvath97},{Arlotetal12},{MattesonJames13}}.\\
This is in contrast to the setting we address in this paper: The variance is considered as a nuisance parameter and we primarily seek for changes in the signal $\mu$. Hence, we aim for statistically efficient estimation of the mean function, but still being robust against heterogeneous noise. Obviously, this cannot be achieved by methods addressing the first setting. Although of great practical relevance, this situation has only rarely been considered and in particular no theory exists, to our knowledge. The cross-validation method $\LOOVF$ \citep{ARLOT} and $\cumSeg$ \citep{MuggeoAdelfio11} have been designed specifically to be robust against heterogeneous noise. Moreover, also circular binary segmentation ($\CBS$), see \citep{CBS07}, applies to this.\\
For a better understanding of the problem considered here it is illustrative to distinguish our setting further, namely from the case when it is known \textit{before hand} that changes in the variance will necessarily occur with changes in the signal. This will potentially increase the detection power as under this assumption variance changes can be used for finding signal changes, as well. The information gain due to the variance changes for this case has been recently quantified by \citet{Enikeevaetal15} in terms of the minimax detection boundary for single vanishing signal bumps of size $\delta_n\searrow 0$. More precisely, if the base line variance is $\sigma_0^2$ and the variance at the bump is $\sigma_0^2+\sigma_n^2$ then the constant in the minimax detection boundary is $b=\sqrt{2}\sigma_0\sqrt{2/(2+c^2)}$ for $c=\sigma_0^{-1}\lim_{n\to\infty}{\sigma_n/\delta_n}$, see \citep[Theorems 3.1-3.3]{Enikeevaetal15}. For the particular case of homogeneous variance, i.e. $\sigma_n^2=0$, we obtain $b=\sqrt{2}\sigma_0$ and the factor $\sqrt{2/(2+c^2)}=1$ becomes maximal, see also \citep{DuembgenWalther08, SMUCE}. This reflects that no additional information on the location of a change can be gained from the variance in the homogeneous case. Comparing this to the inhomogeneous case we see that when the variance change is known to be large enough, i.e. $\sigma_0^ {-1}\lim_{n\to\infty}{\sigma_n/\delta_n}>0$, additional information for the signal change can be gained from the variance change, as then $b<\sqrt{2}\sigma_0$, provided it is known that signal and variance change \textit{simultaneously}.\\
In contrast, in the present setting the variance need not necessarily change when the signal changes, hence the "worst case" of no variance change from above is contained in our model, which lower bounds the detection boundary. The situation is further complicated due to the fact that missing knowledge of a variance change can potentially even have an adverse effect because in model \eqref{eq:S} detection power will be potentially decreased further as the nuisance parameter $\sigma^2(\cdot)$ hinders estimation of change-points of $\mu$. For this situation the optimal minimax \textit{constants} are unknown to us, but from the fact that the model with a constant variance is a submodel of our model \eqref{eq:S} it immediately follows that the minimax constant for a single bump has to be at least $\sqrt{2}\sigma_0$. This will allow us to show that $\HSMUCE$ attains the same optimal minimax detection \textit{rate} as for the homogeneous case and $4\sigma_0$ instead of $\sqrt{2}\sigma_0$ as the constant appearing in the minimax detection boundary. Remarkably, the only extra assumption we have to suppose is that signal and variance have to be constant on segments at least of order $\log(n)/n$, see Theorem \ref{theorem:multiscalesignalsrate}. This reflects the additional difficulty to separate "locally" signal and noise levels in a multiscale fashion. In other words, when we assume that the number of i.i.d. neighbouring observations (no change in signal and variance) in each segment is at least of order $\log(n)$, separation of signal and noise will be done by $\HSMUCE$ in an optimal way (possibly up to a constant).

\subsection{Heterogeneous change-point inference}
We define $\HSMUCE$ as the multiscale constrained maximum likelihood estimator restricted to all solutions of the following optimisation problem
\begin{equation}\label{eq:optproblemhsmuce}
\argmin_{\mu\in\mathcal{M}}{\left\vert \mathcal{I}(\mu)\right\vert}\quad \text{s.t. }\max_{[\frac{i}{n},\frac{j}{n}]\in \mathcal{D}(\mu)}\left[T_i^j(Y,\mu([i/n, j/n]))-q_{ij}\right]\leq 0,
\end{equation}
see also \citep{Boysenetal09, Daviesetal12, SMUCE} for related approaches. Here, $\mathcal{M}$ (as a subset of $\mathcal{S}$) is the set of all piecewise constant mean functions, $\vert\mathcal{I}(\mu)\vert$ the cardinality of the set of change-points of $\mu$ and the right hand side of \eqref{eq:optproblemhsmuce} a multiscale constraint to be explained now. Given a candidate function $\mu$ this tests \textit{simultaneously} over the system of all intervals $\mathcal{D}(\mu)$ on which $\mu$ is constant, whether its function value $\mu([i/n, j/n])$ is the mean value of the observations on the respective interval $[i/n, j/n]$. In order to perform each test, i.e. to decide whether the observations $Y_i,\ldots,Y_j$ have constant mean $\mu([i/n, j/n])$, the local log-likelihood-ratio statistic
\begin{equation}\label{eq:localtest}
T_i^j(Y,\mu([i/n, j/n])):=\left(j-i+1\right)\frac{\left(\overline{Y}_{ij}-\mu([i/n, j/n])\right)^2}{\hat{\valsd}^2_{ij}},
\end{equation}
with $\overline{Y}_{ij}:=(j-i+1)^{-1}\sum_{l=i}^j{Y_l}$ and local variance estimate
$\hat{\valsd}^2_{ij}:=(j-i)^{-1}\sum_{l=i}^j{(Y_l-\overline{Y}_{ij})^2}$, is compared with a local threshold $q_{ij}$ in a multiscale fashion, to be discussed now.\\
In what follows, we restrict the multiscale test to intervals in the \textit{dyadic partition}
\begin{equation}\label{eq:dyapar}
\mathcal{D}:=\bigcup_{k=1}^{\sn}\mathcal{D}_k,
\end{equation}
where $\sn:=\lfloor\log_2(n)\rfloor$ is the number of different scales and
\begin{equation}\label{eq:dyapar_k}
\mathcal{D}_k:=\bigcup_{l=1}^{\left\lfloor \frac{n}{2^k}\right\rfloor}{\left[\frac{1+(l-1)2^k}{n},\frac{l 2^k}{n}\right]}
\end{equation}
the set of intervals from the dyadic partition with length $n^{-1}2^k$. This allows fast computation and simplifies the asymptotic analysis. Nevertheless, our methodology can be adapted to other intervals systems, see Remark \ref{remark:otherintervalsets}.\\
It remains to determine thresholds $q_{ij}$ for $\mathcal{D}$ in \eqref{eq:dyapar} that combine the local tests appropriately. To this end, note that logarithmic (or related) scale penalisation as in the homogeneous case \citep{DuembgenSpokoiny01, DuembgenWalther08, SMUCE} does not balance scales anymore appropriately in the heterogeneous case. In particular, this will give a multiscale statistic which diverges, since due to the local variance estimation the test statistic fails to have subgaussian (but still has subexponential) tails. To overcome this burden we introduce in Section \ref{sec:criticalvalues} scale dependent critical values such that the multiscale test has global significance level $\alpha$, see \eqref{eq:significancelevel}. To this end, the different scales are balanced appropriately by weights $\beta_1,\ldots,\beta_\sn$, with $\sn:=\lfloor\log_2(n)\rfloor$, see \eqref{eq:weights} and \eqref{eq:balancing}. More precisely, these weights determine the ratios between the rejection probabilities of the multiscale test on a corresponding scale. Existence and uniqueness of the so defined scale dependent critical values is shown in Lemma \ref{lemma:uniqunesscritval} and explicit bounds are given in Lemma \ref{lemma:boundcritval}. The weights also allow to incorporate prior scale information, see Section \ref{sec:tuningparameters}.\\
Using the so obtained thresholds $q_{ij}$ allows to obtain several confidence statements which are a main feature of $\HSMUCE$. First of all, we show in Section \ref{sec:theory} that the probability to overestimate the number of change-points is bounded by the significance level $\alpha$ uniformly over $\mathcal{S}$ in \eqref{eq:S}, $\Pj(\hat{K} > K)\leq \alpha$, see Theorem \ref{theorem:boundoverestimate}. More specifically, we show the \textit{overestimation} bound
\begin{equation}\label{eq:overestimationexponential}
\sup_{(\mu,\sigma^2)\in\mathcal{S}}\Pj_{(\mu, \sigma^2)}\left(\hat{K}>K+2k\right)\leq \alpha^{k+1},\ \forall\ k\in\N_{0},
\end{equation}
see Theorem \ref{theorem:overestimationk}. In Theorem \ref{theorem:boundunderestimate} we provide an exponential bound for the underestimation of the number of change-points by $\HSMUCE$, $\Pj(\hat{K}<K)$. To this end, we show new exponential deviation bounds for $F$-statistics (Section \ref{sec:prooflargedeviation}), which might be of interest by its own. Combining the over- and the underestimation bound provides upper bounds for the errors $\Pj(\hat{K}\neq K)$ and $\E[\vert \hat{K}-K\vert]$. For a fixed signal both bounds vanish super polynomially in $n$ if $\alpha=\alpha_n\searrow 0$ when the weights are chosen appropriately, see Remark \ref{remark:errorbounds}. Consequently, the estimated number of change-points converges almost surely to the true number, see Theorem \ref{theorem:consistency}. Further, these exponential bounds enable us to obtain a confidence band for the signal $\mu$ as well as confidence intervals for the locations of the change-points, for an illustration see Figures \ref{fig:example} and \ref{fig:example1000}. 
\begin{figure}[!htp]
\centering
\begin{subfigure}{\textwidth}
\includegraphics[width=0.98\textwidth]{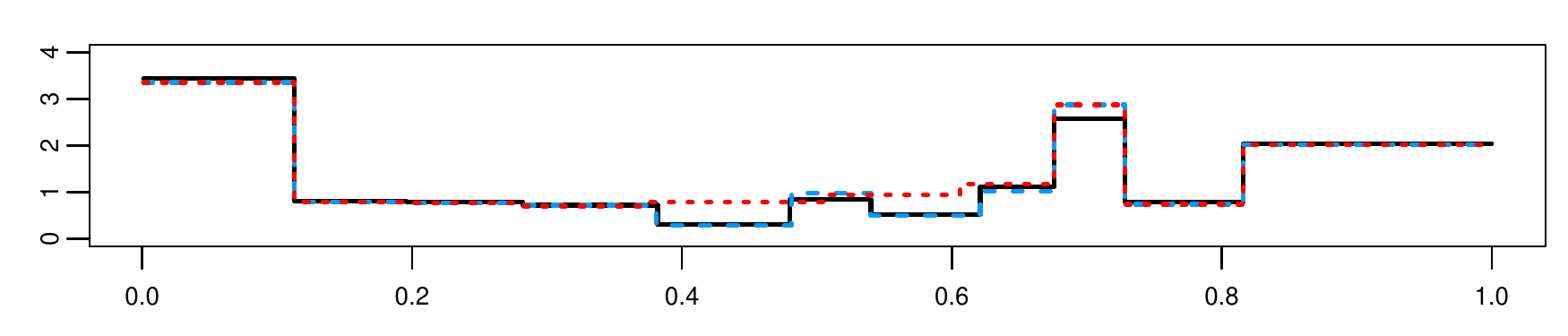}
\subcaption{True standard deviation (black) and estimates resulting from $\HSMUCE$ at $\alpha = 0.1$ (red line) and $\alpha = 0.3, 0.5$ (blue line).}
\end{subfigure}
\begin{subfigure}{\textwidth}
\includegraphics[width=0.98\textwidth]{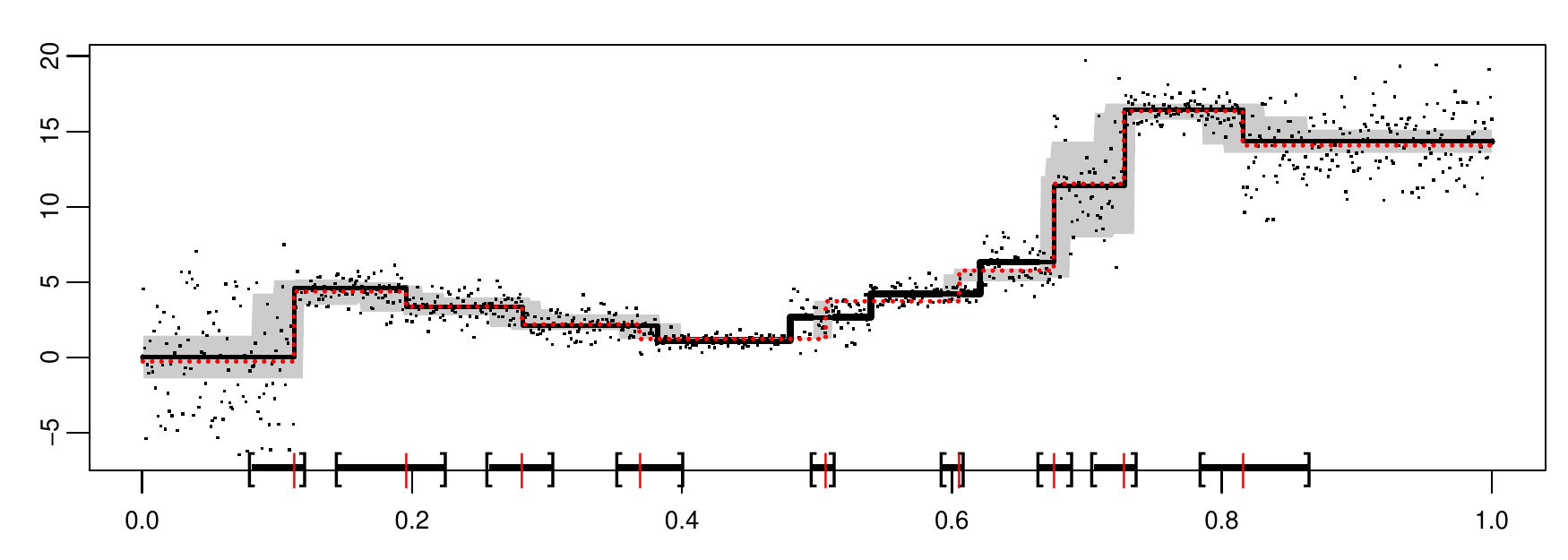}
\subcaption{$\alpha = 0.1$}
\label{subfig:B:example1000}
\end{subfigure}
\begin{subfigure}{\textwidth}
\includegraphics[width=0.98\textwidth]{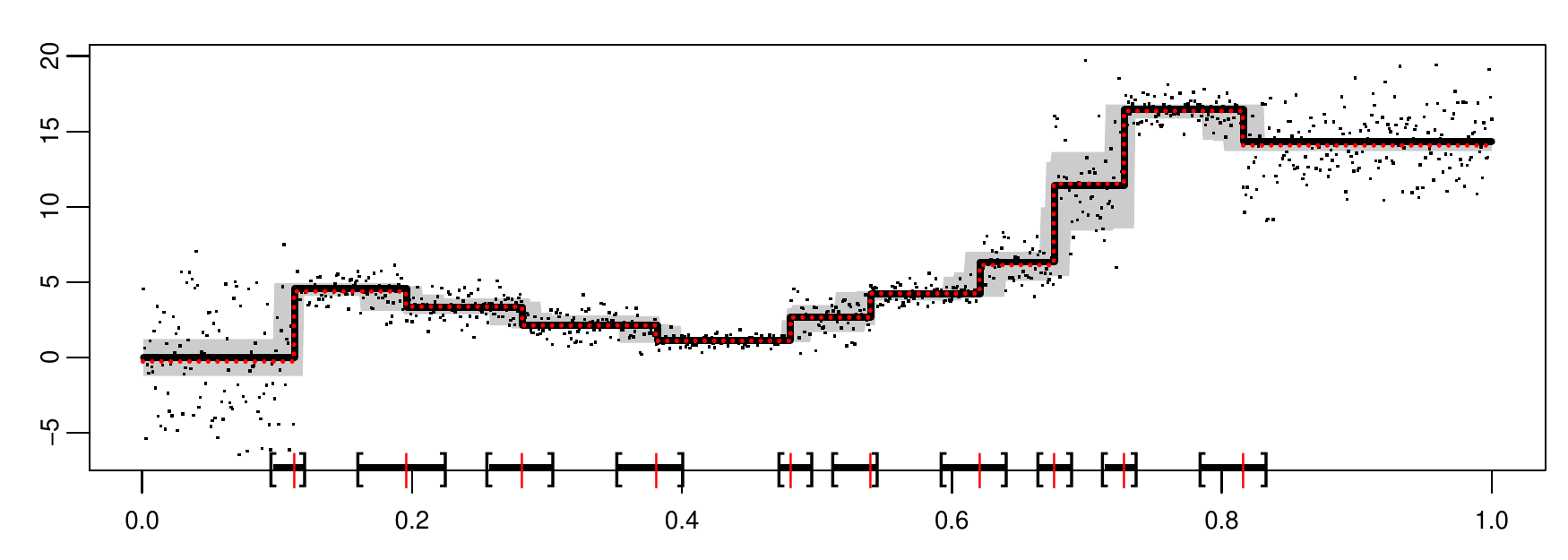}
\subcaption{$\alpha = 0.3$}
\end{subfigure}
\begin{subfigure}{\textwidth}
\includegraphics[width=0.98\textwidth]{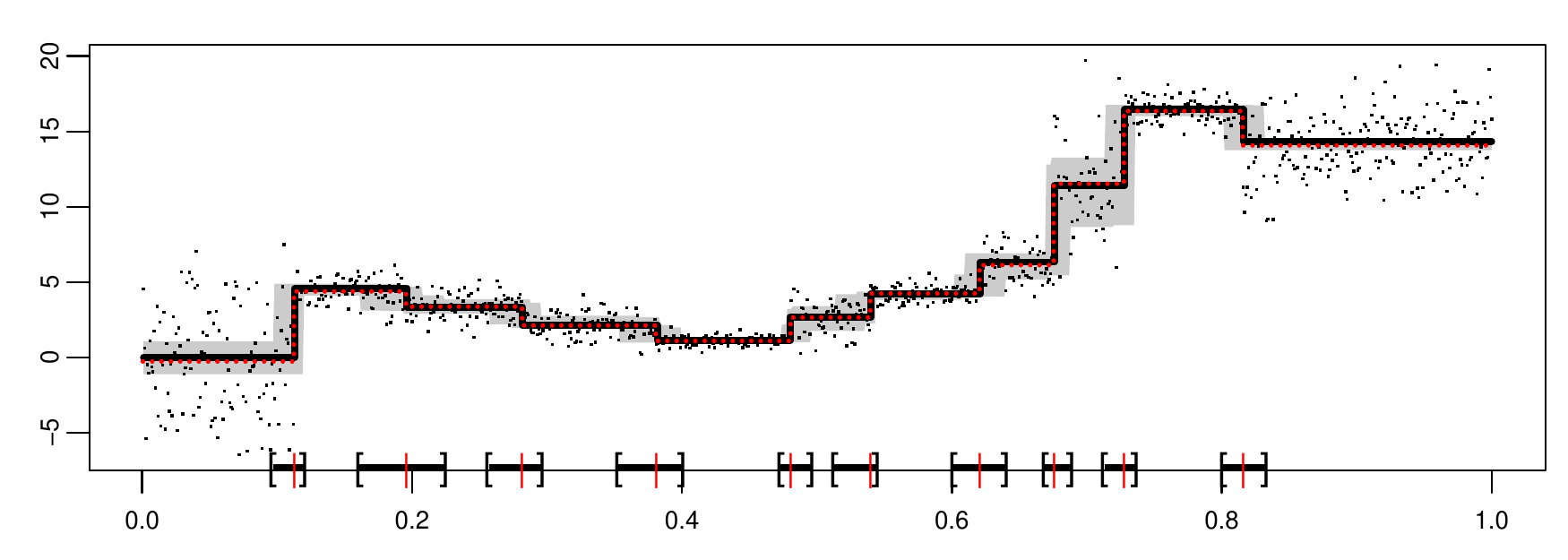}
\subcaption{$\alpha = 0.5$}
\label{subfig:D:example1000}
\end{subfigure}
\caption{\subref{subfig:B:example1000}-\subref{subfig:D:example1000}: Observations (black dots), true signal (black line), confidence band (grey), confidence intervals for the change-point locations (brackets and thick lines), estimated change-points locations (red dashes) and estimate (red line) $\HSMUCE$ at given $\alpha$ and with equal weights $\beta_1=\cdots=\beta_\sn=1/\sn$, see \eqref{eq:weights} and \eqref{eq:balancing}.}
\label{fig:example1000}
\end{figure}
We show that the diameters of these confidence intervals decrease asymptotically as fast as the (optimal) sampling rate up to a log factor. All confidence statements hold uniformly over $\mathcal{S}_{\Delta,\lambda}\subset\mathcal{S}$, all functions with minimal signal to noise ratio $\geq\Delta$ and minimal scale $\geq\lambda:=\min_{k=0,\ldots,K}{(\tau_{k+1}-\tau_k)}$, with $\Delta$ and $\lambda$ arbitrarily, but fixed, see Theorems \ref{theorem:confidencefunction} and \ref{theorem:confidencelocation}.

\subsection{H-SMUCE in action} Figure \ref{fig:example1000} illustrates the performance of $\HSMUCE$ in an example with $n=1\,000$ observations and $K=10$ change-points. We found that $\HSMUCE$ misses for $\alpha = 0.1$ one change-point (as the choice $\alpha=0.1$ tunes $\HSMUCE$ to provide the strong guarantee not to overestimate the number of change-points $K$ with probability $0.9$, see \eqref{eq:overestimationexponential}), whereas for $\alpha$ between $0.15$ and $0.99$ (only displayed for $\alpha = 0.3$ and $\alpha = 0.5$) the correct number of change-points is detected always (while providing a weaker guarantee for not overestimating $K$). In addition, for $\alpha$ between $0.15$ and $0.99$ each true change-point is covered by the associated confidence interval at level $1-\alpha$. This illustrates the influence of the significance level $\alpha$. Notably, we find that the reconstructions are remarkably stable in $\alpha$. In fact, combining Lemma \ref{lemma:boundcritval} and \eqref{eq:bounds} shows that the width of the confidence band is proportional to $\sqrt{\log(1/\alpha)}$ which decreases only logarithmically for increasing $\alpha$.\\
We compare the performance of $\HSMUCE$ with $\CBS$ \citep{CBS07}, $\cumSeg$ \citep{MuggeoAdelfio11} and $\LOOVF$ \citep{ARLOT} in several simulation studies in Section \ref{sec:simulations} (see also Figure \ref{fig:example1000other} in Supplement \ref{sec:add} for their performance on the data in Figure \ref{fig:example1000}), where we also examine robustness issues, see Section \ref{sec:simrobust}. In all of these simulations and in the subsequent application $\HSMUCE$ performs very robust and includes too many change-points only rarely in accordance with \eqref{eq:overestimationexponential}.\\
In Section \ref{sec:ion} we apply $\HSMUCE$ to current recordings of a transmembrane protein with pronounced heterogeneity for its states. In contrast to segmentation methods which rely on homogeneous noise, we found that $\HSMUCE$ provides a reasonable reconstruction, where all visible gating events are detected.\\
Finally, we stress that the confidence band and confidence intervals for the change-point locations provided by $\HSMUCE$ can be used to accompany any segmentation method to assess significance of its estimated change-points. This is illustrated in Section \ref{sec:ion} as well.\\
Computation of the estimator by a pruned dynamic program and of the critical values based on Monte-Carlo simulation is explained carefully in Supplement \ref{sec:computation}. There we also study the theoretical and empirical computation time of $\HSMUCE$. Due to the underlying dyadic partition the computation of $\HSMUCE$ is very fast, in some scenarios even linear in the number of observations. Additional simulations results are collected in Supplement \ref{sec:add} and all proofs are given together with some auxiliary statements in Supplement \ref{sec:proofs}. An R-package is available online\footnote{\url{http://www.stochastik.math.uni-goettingen.de/hsmuce}}.

\section{Scale dependent critical values}\label{sec:criticalvalues}
For the definition of $\HSMUCE$ it remains to determine the local thresholds $q_{ij}$ in \eqref{eq:optproblemhsmuce}. First of all, the multiscale test on the r.h.s. in \eqref{eq:optproblemhsmuce} should be a level $\alpha$ test, i.e.
\begin{equation}\label{eq:multiscaletestlevelalpha}
\sup_{(\mu,\sigma^2)\in\mathcal{S}}\Pj_{(\mu,\sigma^2)}\left(\max_{[\frac{i}{n},\frac{j}{n}]\in \mathcal{D}(\mu)} \left[T_i^j\big(Y,\mu([i/n,j/n])\big) - q_{ij}\right] > 0\right)\leq \alpha.
\end{equation}
Here, we use the same threshold for all intervals of the same length as no a-priori information on the change-point locations is assumed. More precisely, as we have restricted the multiscale test to the dyadic partition in \eqref{eq:dyapar} we aim to find a vector of critical values $\textbf{q}:=(q_1,\ldots,q_\sn)$, where now $q_{ij}:=q_k$ if and only if $j-i+1=2^k$. To this end, w.l.o.g., we may consider standard gaussian observations $Z_1,\ldots,Z_n$ instead of $Y_1,\ldots,Y_n$, since the supremum in \eqref{eq:multiscaletestlevelalpha} is attained at $\mu\equiv 0$ and $\sigma^2\equiv 1$, see the proof of Theorem \ref{theorem:boundoverestimate}. We then define the statistics $T_1,\ldots,T_\sn$ with $\mathcal{D}_k$ in \eqref{eq:dyapar_k} as
\begin{equation}\label{eq:Tk}
T_k:=\max_{[i/n,j/n]\in \mathcal{D}_k}{T_i^j(Z,0)} \quad \text{for} \quad k=1,\ldots,\sn.
\end{equation}
Then, the critical values $q_1,\ldots,q_\sn$ fulfil \eqref{eq:multiscaletestlevelalpha} if
\begin{equation}\label{eq:significancelevel}
\Pj\left(\max_{k=1,\ldots,\sn} \left[ T_k-q_{k} \right] > 0 \right) = 1-F\left(q_1,\ldots,q_\sn\right)=\alpha,
\end{equation}
with $F$ the cumulative distribution function of $(T_1,\ldots,T_\sn)$.\\
As the critical values $q_1,\ldots,q_\sn$ are not uniquely determined by \eqref{eq:significancelevel} they can be chosen to render the multiscale test particularly powerful for certain scales. To this end, we introduce weights
\begin{equation}\label{eq:weights}
\beta_1,\ldots,\beta_\sn \geq 0,\text{ with }\sum_{k=1}^{\sn}{\beta_k}=1,
\end{equation}
where $\beta_k=0$ means to omit the $k$-th scale, i.e. $q_k=\infty$. Finally, we define $q_1,\ldots,q_\sn$ implicitly through
\begin{equation}\label{eq:balancing}
\frac{1-F_1\left(q_1\right)}{\beta_1}=\cdots=\frac{1-F_\sn\left(q_{\sn}\right)}{\beta_\sn},
\end{equation}
with $F_k$ the cumulative distribution function of $T_k$. If $\beta_k=0$ this will not enter the systems of equations in \eqref{eq:balancing}. The weights determine the fractions between the probabilities that a test on a certain scale rejects, and hence regulate the allocation of the level $\alpha$ among the single scales. In summary, the choice of the local thresholds $q_{ij}$ boils down to choosing the significance level $\alpha$ and the weights $\beta_1,\ldots,\beta_\sn$, we discuss these choices in Section \ref{sec:tuningparameters} more carefully. If no prior information on scales is available a default option is always to set all weights equal, i.e. $\beta_1=\ldots=\beta_{\sn}=1/\sn$.\\
The next result shows that the vector of critical values satisfying \eqref{eq:significancelevel}-\eqref{eq:balancing} is always well-defined.
\begin{Lemma}[Existence and uniqueness]\label{lemma:uniqunesscritval}
For any $\alpha\in(0,1)$ and for any weights $\beta_1,\ldots,\beta_\sn$, s.t. \eqref{eq:weights} holds, there exits a unique vector of critical values $\textbf{q}=(q_1,\ldots,q_\sn)\in\R_+^\sn$ which fulfils the equations \eqref{eq:significancelevel} and \eqref{eq:balancing}.
\end{Lemma}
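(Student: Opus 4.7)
The plan is to parametrise all candidate critical value vectors by a single scalar through the balancing equation \eqref{eq:balancing}, and then solve the one-dimensional equation \eqref{eq:significancelevel} by an intermediate value argument. Without loss of generality I may restrict attention to the indices $k$ with $\beta_k>0$ (the others force $q_k=\infty$ and drop out of both equations). For such $k$, each marginal $F_k$ is the distribution function of a maximum of finitely many statistics of the form $(j-i+1)\overline{Z}_{ij}^2/\hat{s}_{ij}^2$; since these are continuous random variables (ratios of polynomials in standard Gaussian entries with positive densities on $\R_+$), $F_k$ is continuous and strictly increasing on $\R_+$, with $F_k(0)=0$ and $F_k(\infty)=1$.

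Next, introduce the auxiliary parameter $c\in[0,1/\max_k\beta_k)$ and define
\begin{equation*}
q_k(c):=F_k^{-1}(1-c\beta_k)\qquad\text{for $k$ with $\beta_k>0$.}
\end{equation*}
By the properties of $F_k$, the map $c\mapsto q_k(c)$ is continuous and strictly decreasing, with $q_k(0)=\infty$ and $q_k(c)\searrow 0$ as $c\nearrow 1/\beta_k$. Any vector satisfying \eqref{eq:balancing} must be of this form for some $c$, so the existence/uniqueness question reduces to showing that there is exactly one $c\in(0,1/\max_k\beta_k)$ with
\begin{equation*}
G(c):=\Pj\!\left(\max_k[T_k-q_k(c)]>0\right)=\alpha.
\end{equation*}

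For existence I would verify that $G$ is continuous on its domain, $G(0)=0$ (all thresholds infinite), and $G(c)\to 1$ as $c\nearrow 1/\max_k\beta_k$ (the threshold on the coordinate attaining the maximum $\beta_k$ tends to $0$, so that the event already has probability tending to $1$ just from that coordinate); then the intermediate value theorem yields some $c^\ast$ with $G(c^\ast)=\alpha$. For uniqueness I would show strict monotonicity of $G$: if $c_1<c_2$, then $q_k(c_1)>q_k(c_2)$ for every active $k$, and since $(T_1,\ldots,T_{d_n})$ is a continuous random vector with full support on $\R_+^{d_n}$, the event $\{\max_k[T_k-q_k(c_2)]>0\}\setminus\{\max_k[T_k-q_k(c_1)]>0\}$ has strictly positive probability, giving $G(c_1)<G(c_2)$.

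The main obstacle I expect is the joint continuity of $F$ and the full-support argument needed for strict monotonicity: one has to make sure that for every $c_1<c_2$ the difference set above carries positive mass even though the $T_k$'s are strongly dependent (they are built from overlapping windows of the same $Z_i$'s). This can be handled by noting that each individual statistic $T_i^j(Z,0)$ has a continuous distribution with density positive on $\R_+$, so conditional on the values of the other $T_k$'s one still finds positive probability of $T_{k^\ast}$ lying in the gap $(q_{k^\ast}(c_2),q_{k^\ast}(c_1))$ for the coordinate $k^\ast$ with $\beta_{k^\ast}>0$; alternatively, continuity of $G$ together with the observation that $G$ is a nondecreasing, not eventually constant function on a connected interval suffices to run the uniqueness argument via the strict monotonicity of each $q_k(\cdot)$.
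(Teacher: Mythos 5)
Your proposal is correct and follows essentially the same route as the paper: the paper also reduces \eqref{eq:balancing} to a one-parameter family (parametrised by $q_1$ via $q_k=F_k^{-1}(1-\tfrac{\beta_k}{\beta_1}(1-F_1(q_1)))$ rather than by your common ratio $c$, a trivial monotone reparametrisation) and then solves \eqref{eq:significancelevel} by continuity, strict monotonicity and the intermediate value theorem, using that each $F_k=F_{1,2^k-1}^{\vert\mathcal{D}_k\vert}$ is continuous and strictly increasing on $\R_+$. If anything, you are more explicit than the paper about why the one-dimensional map is \emph{strictly} monotone despite the cross-scale dependence of the $T_k$.
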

An explicit computation of the vector $\textbf{q}$ (or $F$) appears to be very hard, since the statistics $T_1,\ldots,T_\sn$ are dependent, although the dependence structure is explicitly known. Alternatively, it would be helpful to have an approximation for the distribution (and hence its quantiles) of the maximum in \eqref{eq:multiscaletestlevelalpha}, which, however, appears to be rather difficult, as well. For the case when $q_{ij}\equiv q$ (which does not apply to $\HSMUCE$), see \citet{Davies87, Davies02}. Therefore, we determine in Section \ref{sec:detcriticalvalues} the vector $\textbf{q}$ by Monte-Carlo simulations. Note that the distribution does not depend on the specific element $(\mu,\sigma^2)\in\mathcal{S}$ and hence the critical values can be computed in a universal manner. We stress that the determination of the scale dependent critical values is not restricted to our setting and can also be applied to multiscale testing in other contexts. Different to scale penalisation and like the block criterion in \citep{RufibachWalther10} no model dependent derivations are required and the critical values are adapted to the exact finite sample distribution of the local test statistics. However, our approach allows additionally a flexible scale calibration by the choice of the weights (see Section \ref{sec:tuningparameters}) and arbitrary interval sets can be used as the following remark points out. 

\begin{remark}[Other interval sets]\label{remark:otherintervalsets}
$\HSMUCE$ can be easily adjusted to other interval sets as follows. Let $\mathcal{I}$ be an arbitrary set of intervals. Then, we replace in the definition of $\HSMUCE$ in \eqref{eq:K} and \eqref{eq:defC} the set $\mathcal{D}$ by the set $\mathcal{I}$ and the vector $(T_1,\ldots,T_\sn)$ by the vector $(\widetilde{T}_2,\ldots,\widetilde{T}_{n})$ (empty scales should be omitted) in Section \ref{sec:criticalvalues}, with
\begin{equation}\label{eq:Tktilde}
\widetilde{T}_k:=\max_{\substack{[i/n,j/n]\in\mathcal{I},\\ j-i+1 = k}}{T_i^j(Z,0)}.
\end{equation}
Again it remains to choose the significance level $\alpha\in (0,1)$ and the weights $\beta_2,\ldots,\beta_n$ to determine the critical values required for $\HSMUCE$. Note, however, that the critical values and its bounds in Lemma \ref{lemma:boundcritval} and therefore the results in Section \ref{sec:theory} (besides of Theorems \ref{theorem:boundoverestimate} and \ref{theorem:overestimationk}) will depend on the specific system $\mathcal{I}$ and have to be computed for each $\mathcal{I}$ separately.\\
Employing a larger interval set than $\mathcal{D}$ may lead to a better detection power, but at the price of a larger computation time. Hence, in practice, a trade-off between computational and statistical efficiency may guide this choice as well. Our R-package includes beside of the dyadic partition also the system of all intervals (of order $\OO(n^2)$, statistically most efficient, but computationally expensive) and the system of all intervals of dyadic length ($\OO(n\log(n)$, intermediate efficiency and computational time). Interesting choices might be also approximating sets like $\mathcal{J}_{\operatorname{app}}$ introduced in \citep{{Walther10},{RiveraWalther13}} which are larger than the dyadic partition, but achieve the minimax boundary in the context of density estimation.
\end{remark}

\section{Theory}\label{sec:theory}
In this section we collect our theoretical results. We start with finite bounds for the critical values. These will allow to bound $\Pj(\hat{K} \neq K)$. With these bounds we obtain confidence statements for the signal $\mu$ and its main characteristics. Finally, we investigate asymptotic detection rates of $\HSMUCE$ for vanishing signals.

\subsection{Finite bounds for over- and underestimation}
In the following we require upper bounds for the critical values, since the definition of the critical values by the equations \eqref{eq:significancelevel}-\eqref{eq:balancing} is implicit.
\begin{Lemma}[Bound on critical values]\label{lemma:boundcritval}
Let $\textbf{q}=(q_1,\ldots,q_\sn)$ be the vector of critical values defined by \eqref{eq:significancelevel}-\eqref{eq:balancing}, then for every $k\in\{2,\ldots,\sn\}$ such that 
\begin{equation}\label{eq:conditionboundcritval}
2^{-k}\log\left(\frac{n}{2^k\alpha\beta_k}\right)\leq \frac{1}{2}
\end{equation}
we have
\begin{equation}\label{eq:boundcritval}
q_k \leq 8\log\left(\frac{n}{2^k\alpha\beta_k}\right).
\end{equation}
\end{Lemma}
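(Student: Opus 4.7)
The plan has three elements: turn \eqref{eq:significancelevel}--\eqref{eq:balancing} into a one-sided bound $\Pj(T_k>q_k)\geq \alpha\beta_k$, exploit independence of the local $F$-statistics on the disjoint intervals of $\mathcal{D}_k$, and prove a single-interval $F$-tail bound $\Pj(F_{1,m}>t)\leq e^{-t/8}$ on the range of $t$ permitted by \eqref{eq:conditionboundcritval}.

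For the first point, let $c := (1-F_k(q_k))/\beta_k$, which by \eqref{eq:balancing} is independent of $k$. A union bound applied to \eqref{eq:significancelevel} combined with $\sum_k\beta_k = 1$ gives
\begin{equation*}
\alpha \;=\; \Pj\Bigl(\bigcup_{k=1}^{\sn}\{T_k>q_k\}\Bigr) \;\leq\; \sum_{k=1}^{\sn}\Pj(T_k>q_k) \;=\; c\sum_{k=1}^{\sn}\beta_k \;=\; c,
\end{equation*}
so $\Pj(T_k>q_k) = c\beta_k \geq \alpha\beta_k$. Since $t\mapsto \Pj(T_k>t)$ is non-increasing, it is enough to prove the reverse inequality $\Pj(T_k>t_k^*)\leq \alpha\beta_k$ with $t_k^* := 8\log(n/(2^k\alpha\beta_k))$ in order to conclude $q_k\leq t_k^*$.

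For the second point, the intervals comprising $\mathcal{D}_k$ are pairwise disjoint and each contains $2^k$ consecutive indices, so under the null $Z_i\stackrel{\text{iid}}{\sim}\mathcal{N}(0,1)$ the local statistics $\{T_i^j(Z,0):[i/n,j/n]\in\mathcal{D}_k\}$ are $N_k := \lfloor n/2^k\rfloor$ i.i.d.\ copies of the $F_{1,m}$-distribution with $m := 2^k-1$. Writing any such statistic as $mW^2/Y$ with $W\sim\mathcal{N}(0,1)$ and $Y\sim\chi^2_m$ independent, conditioning on $Y$ and using the Gaussian tail $\Pj(W^2>s)\leq e^{-s/2}$ yield
\begin{equation*}
\Pj(F_{1,m}>t) \;\leq\; \E\bigl[e^{-tY/(2m)}\bigr] \;=\; (1+t/m)^{-m/2}.
\end{equation*}
The elementary inequality $\log(1+u)\geq 2u/(u+2)$ valid for $u\geq 0$ then gives $(m/2)\log(1+t/m) \geq tm/(t+2m) \geq t/8$ whenever $t\leq 6m$, so that $\Pj(F_{1,m}>t)\leq e^{-t/8}$ on that range.

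The hypothesis \eqref{eq:conditionboundcritval} rewrites as $t_k^* \leq 4\cdot 2^k = 4(m+1)$, and the assumption $k\geq 2$ forces $m\geq 3$, so that $4(m+1)\leq 6m$ and the tail bound applies at $t=t_k^*$. Combining with a union bound over the $N_k$ disjoint intervals yields
\begin{equation*}
\Pj(T_k>t_k^*) \;\leq\; N_k\,\Pj(F_{1,m}>t_k^*) \;\leq\; \frac{n}{2^k}\cdot e^{-t_k^*/8} \;=\; \frac{n}{2^k}\cdot\frac{2^k\alpha\beta_k}{n} \;=\; \alpha\beta_k,
\end{equation*}
and comparison with the lower bound from the first step gives $q_k\leq t_k^*$. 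The main technical obstacle is proving the $F$-tail estimate uniformly on the comparatively wide range $t/m\leq 6$ forced by \eqref{eq:conditionboundcritval}; this rules out classical Bernstein-type expansions such as $\log(1+u)\geq u-u^2/2$, which are sharp only for $u\ll 1$, and makes the monotone inequality $\log(1+u)\geq 2u/(u+2)$ the natural tool, with the factor $8$ in \eqref{eq:boundcritval} ultimately arising from the arithmetic $1/(t/m+2)\geq 1/8$ at the edge $t/m = 6$.
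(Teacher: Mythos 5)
Your proof is correct, and in its second half it reaches the paper's bound by a genuinely different route. The first step coincides with the paper's Lemma \ref{lemma:boundqU} (there phrased through the transformation $U_k = F_{1,2^k-1}(T_k)^{\vert\mathcal{D}_k\vert}$): the weighted Bonferroni argument combining \eqref{eq:significancelevel}, \eqref{eq:balancing} and $\sum_k\beta_k=1$ yields $1-F_k(q_k)\geq\alpha\beta_k$. From there the paper works on the quantile side, writing $q_k \leq F_{1,2^k-1}^{-1}\bigl((1-\alpha\beta_k)^{2^k/n}\bigr)$ and invoking a cited two-sided estimate of Fujikoshi and Mukaihata for $F$-quantiles together with normal-quantile bounds, then cleaning up with Bernoulli's inequality and $c^x\leq 1+2x\log(c)$. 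You instead stay on the tail-probability side: the Chernoff-type bound $\Pj(F_{1,m}>t)\leq\E[e^{-tY/(2m)}]=(1+t/m)^{-m/2}$ obtained by conditioning on the $\chi^2_m$ denominator, the elementary inequality $\log(1+u)\geq 2u/(u+2)$, and a union bound over the $\lfloor n/2^k\rfloor$ disjoint intervals of $\mathcal{D}_k$. Your observation that \eqref{eq:conditionboundcritval} forces $t_k^*\leq 4(m+1)\leq 6m$ for $m=2^k-1\geq 3$ is exactly where the restriction $k\geq 2$ is consumed, mirroring its role in the paper. What your approach buys is self-containedness (no external quantile estimates) and a transparent origin for the constant $8$; the only step worth making explicit is that the final comparison $\Pj(T_k>t_k^*)\leq\alpha\beta_k\leq\Pj(T_k>q_k)$ implies $q_k\leq t_k^*$ because $F_k=F_{1,2^k-1}(\cdot)^{\vert\mathcal{D}_k\vert}$ is strictly increasing on $(0,\infty)$, a fact the paper records in the proof of Lemma \ref{lemma:uniqunesscritval}.
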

\begin{remark}
The log term of the bound \eqref{eq:boundcritval} can be split into a scale dependent penalty term $\log(n2^{-k})$ which is of the same order as the penalties in the homogeneous case in \citep{{DuembgenSpokoiny01},{SMUCE}}, and into the term $\log((\alpha\beta_k)^{-1})$ which incorporates the significance level $\alpha$ and the weight $\beta_k$.
\end{remark}
The following theorem shows that the significance level $\alpha$ controls the probability to overestimate the number of change-points.
\begin{Theorem}[Overestimation control I]\label{theorem:boundoverestimate}
Assume the heterogeneous gaussian change-point model \eqref{eq:model}. Let $K:=\vert \mathcal{I}(\mu)\vert$ be the number of change-points of a signal $\mu\in\mathcal{M}$. Let further $\hat{K}$ be the estimated number of change-points by $\HSMUCE$, i.e.
\begin{equation}\label{eq:K}
\hat{K}:=\min\left\{\vert \mathcal{I}(\mu)\vert\ :\ \mu\in\mathcal{M} \text{ with } \max_{[\frac{i}{n},\frac{j}{n}]\in \mathcal{D}(\mu)} \left[T_i^j\big(Y,\mu([i/n,j/n])\big) - q_{ij}\right] \leq 0  \right\}.
\end{equation}
Then, for any vector of critical values $\textbf{q}$ with significance level $\alpha\in (0,1)$ and weights $\beta_1,\ldots,\beta_{\sn}$ in \eqref{eq:significancelevel}-\eqref{eq:balancing}, uniformly over $\mathcal{S}$ in \eqref{eq:S} it holds
\[\sup_{(\mu,\sigma^2)\in\mathcal{S}}\Pj_{(\mu, \sigma^2)}\left(\hat{K}>K\right)\leq \alpha.\]
\end{Theorem}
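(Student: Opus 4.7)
The plan is to show that the true pair $(\mu,\sigma^2)$ itself satisfies the multiscale constraint in \eqref{eq:optproblemhsmuce} with probability at least $1-\alpha$. Since $\hat K$ is defined as the minimal number of change-points over the feasibility set, on the event $\{\mu \text{ is feasible}\}$ one automatically has $\hat K \le |\mathcal I(\mu)| = K$, so that $\Pj(\hat K > K) \le \Pj(\mu \text{ not feasible})$. The entire task therefore reduces to bounding the probability that there exists $[i/n,j/n] \in \mathcal{D}(\mu)$ with $T_i^j(Y,\mu([i/n,j/n])) > q_{ij}$.

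The key observation is a scale/location invariance of $T_i^j$ on constant segments of $\mu$. By the model class $\mathcal S$ in \eqref{eq:S}, on any interval $[i/n,j/n]\in\mathcal{D}(\mu)$ on which $\mu$ is constant both $\mu$ and $\sigma^2$ take constant values, say $m$ and $s^2$ respectively. Setting $Z_l := (Y_l-\mu(l/n))/\sigma(l/n)$, the $Z_l$ are i.i.d.\ standard gaussian globally, and on such an interval $Y_l - m = s Z_l$. A direct substitution into the definition \eqref{eq:localtest} shows
\[
T_i^j\bigl(Y,\mu([i/n,j/n])\bigr) \;=\; (j-i+1)\frac{s^2\bar Z_{ij}^2}{s^2\hat{\sigma}_{ij,Z}^2} \;=\; T_i^j(Z,0),
\]
so the distribution of $T_i^j(Y,\mu([i/n,j/n]))$ is that of $T_i^j(Z,0)$ under standard gaussian noise, independently of the actual values of $m,s$.

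Combining these observations with $\mathcal{D}(\mu)\subset\mathcal{D}$ and $q_{ij}=q_k$ on scale $k$, I obtain
\[
\max_{[i/n,j/n]\in\mathcal{D}(\mu)}\bigl[T_i^j(Y,\mu([i/n,j/n]))-q_{ij}\bigr]
\;\le\; \max_{k=1,\ldots,\sn}[T_k-q_k],
\]
with $T_k$ as in \eqref{eq:Tk} computed from standard normals $Z$. The right-hand side is exactly the multiscale statistic used to \emph{define} the critical values, so by \eqref{eq:significancelevel}
\[
\Pj_{(\mu,\sigma^2)}\!\left(\exists\,[i/n,j/n]\in\mathcal{D}(\mu):T_i^j(Y,\mu([i/n,j/n]))>q_{ij}\right)
\;\le\; \Pj\!\left(\max_{k}[T_k-q_k]>0\right)=\alpha.
\]
Since the distribution of $(T_1,\ldots,T_\sn)$ is free of the particular $(\mu,\sigma^2)\in\mathcal S$, the bound is uniform over $\mathcal S$, which is precisely the claim.

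I do not anticipate a genuine obstacle: the argument is essentially a feasibility/invariance calculation. The only subtle point to get right is that the invariance needs the variance to be \emph{constant on every constant segment of $\mu$}, which is exactly the assumption built into $\mathcal S$ in \eqref{eq:S}; this is where the model restriction is actually used. (Without it, on a constant segment of $\mu$ the $Y_l-m$ would have unequal variances and $T_i^j(Y,m)$ would no longer equal $T_i^j(Z,0)$ in distribution.) Apart from this, the proof is immediate from the definition of $\hat K$ and the definition of the critical values.
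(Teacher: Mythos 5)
Your proposal is correct and follows essentially the same route as the paper's own proof: reduce $\Pj(\hat K>K)$ to the probability that the true $\mu$ violates the multiscale constraint, use the pivotality of the local $F$-type statistic on intervals where both $\mu$ and $\sigma^2$ are constant to replace $Y$ by standard normals $Z$, and then bound via $\mathcal{D}(\mu)\subset\mathcal{D}$ and the defining equation \eqref{eq:significancelevel} of the critical values. Your explicit remark on where the model restriction \eqref{eq:S} enters matches exactly the role it plays in the paper's argument.
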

The theorem gives us a direct interpretation of the parameter $\alpha$ as the probability to overestimate the number of change-points. This even holds locally, i.e. on every union of adjoining segments of the estimator $\HSMUCE$ with probability $1-\alpha$ there are at least as many change-points as detected. Moreover, we strengthen the result by showing that the probability to estimate additional changes decays exponentially fast and hence the expected overestimation is small.
\begin{Theorem}[Overestimation control II]\label{theorem:overestimationk}
Under the assumptions of Theorem \ref{theorem:boundoverestimate}, we have
\begin{equation*}
\sup_{(\mu,\sigma^2)\in\mathcal{S}}\Pj_{(\mu, \sigma^2)}\left(\hat{K}>K+2k\right)\leq \alpha^{k+1},\ \forall\ k\in\N_{0}.
\end{equation*}
Moreover,
\begin{equation*}
\sup_{(\mu,\sigma^2)\in\mathcal{S}}\E_{(\mu, \sigma^2)}\left[(\hat{K}-K)_+\right]\leq \frac{2\alpha}{1-\alpha}.
\end{equation*}
\end{Theorem}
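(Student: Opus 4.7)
The plan is to strengthen the bound $\Pj(\hat K > K) \leq \alpha$ of Theorem \ref{theorem:boundoverestimate} by identifying, on the event $\{\hat K > K + 2k\}$, a collection of $k+1$ pairwise disjoint \emph{local rejection events} in the multiscale test, each occurring inside a different segment of true constancy of $(\mu, \sigma^2)$. Because observations on different true-constancy segments are independent, the joint probability of $k+1$ such rejections will factorise into a product, each factor controlled by $\alpha$ via the level identity \eqref{eq:significancelevel}, giving the required $\alpha^{k+1}$.

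For the combinatorial step I would first reduce to the standard gaussian case segment-by-segment as in the proof of Theorem \ref{theorem:boundoverestimate}, and then observe that on $\{\hat K > K + 2k\}$ the estimator $\hat\mu$ has at least $K + 2k + 2$ constant segments. Since at most $K$ of these can contain a true change-point, at least $2k + 2$ are \emph{clean}, i.e.\ contained entirely in some $[\tau_\ell, \tau_{\ell+1})$. I would group these $2k + 2$ clean segments into $k + 1$ consecutive pairs. For each pair, the minimality of $\hat K = |\mathcal I(\hat\mu)|$ in \eqref{eq:K} precludes collapsing the pair into a single constant segment carrying the true value $\valmu_\ell$: if every dyadic interval inside the pair accepted $\valmu_\ell$, the pair could be flattened to $\valmu_\ell$, producing a feasible signal with two fewer change-points and contradicting the definition of $\hat K$. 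Hence each pair supplies a dyadic interval $[i/n,j/n] \in \mathcal D$, contained in some $[\tau_{\ell_j}, \tau_{\ell_j + 1})$, with $T_i^j(Y, \valmu_{\ell_j}) > q_{ij}$; these intervals are pairwise disjoint by construction.

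Given this decomposition, within $[\tau_\ell, \tau_{\ell+1})$ the observations are i.i.d.\ $\mathcal N(\valmu_\ell, \valsd_\ell^2)$ and, after rescaling, the multiscale test restricted to dyadic intervals inside one true-constancy segment is again a level-$\alpha$ test by \eqref{eq:significancelevel} applied to a subcollection of scales. Because the events on different true-constancy segments depend on disjoint blocks of independent observations, they are mutually independent, and their joint probability is at most $\alpha^{k+1}$, uniformly in $(\mu, \sigma^2)\in\mathcal S$. The expectation bound then follows by pairing tails and invoking monotonicity $\Pj(\hat K > K + 2j + 1) \leq \Pj(\hat K > K + 2j) \leq \alpha^{j+1}$:
\[
\E_{(\mu, \sigma^2)}\bigl[(\hat K - K)_+\bigr] = \sum_{m \geq 1}\Pj_{(\mu, \sigma^2)}(\hat K \geq K + m) \leq \sum_{j \geq 0}\bigl[\Pj(\hat K > K + 2j) + \Pj(\hat K > K + 2j + 1)\bigr] \leq \sum_{j \geq 0} 2 \alpha^{j+1} = \frac{2\alpha}{1 - \alpha}.
\]

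The main obstacle is the combinatorial step when $K$ is too small to supply $k+1$ distinct true-constancy segments (e.g.\ $K = 0$ and $k \geq 1$): the $k+1$ independent rejections must then be realised on pairwise disjoint dyadic sub-intervals of a \emph{single} true-constancy segment. Independence is still ensured because the clean $\hat\mu$-segments are themselves disjoint index sets, but one must verify that the pairing of $2k+2$ clean segments into $k+1$ groups always produces a family of pairwise disjoint dyadic intervals lying in $\mathcal D$, and that the choice of these intervals is measurable in $Y$. The factor $2$ in ``$K + 2k$'' originates precisely from this pairing: each minimality-based merge eliminates two change-points at once.
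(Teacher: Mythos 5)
Your high-level intuition is right — the factor $2$ does come from the fact that neutralising one local rejection costs two change-points, and the bound is obtained by multiplying $k+1$ probabilities each controlled by $\alpha$ — but two steps of your argument do not go through as written. First, the combinatorial extraction fails: you pair the $2k+2$ clean segments of $\hat\mu$ into ``consecutive pairs'', but clean segments need not be adjacent in $\hat\mu$, and a pair separated by a dirty segment cannot be flattened at all. Concretely, with $K=3$ and $k=1$ the estimator may have $7$ segments arranged clean--dirty--clean--dirty--clean--dirty--clean, so there is not a single adjacent clean pair, and your merge argument produces no rejecting interval. (Also, merging one adjacent pair removes one change-point, not two — still enough for a contradiction with the minimality of $\hat K$, but the ``two fewer'' bookkeeping is off, and you must additionally rule out the new straddling intervals created when the flattened value coincides with a neighbouring value of $\hat\mu$.) The paper sidesteps all of this by first reducing to the pure null case $\mu\equiv 0$, $\sigma^2\equiv 1$, $K=0$ (so every segment is clean), and then arguing not via $\hat\mu$ but via the non-existence of \emph{any} feasible candidate with $2k$ change-points: if only $k$ ordered rejections occur, one builds a feasible candidate by placing an isolated two-change-point spike at each rejection endpoint, which is where the factor $2$ really comes from.

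Second, the probabilistic step is a genuine gap: you assert that the joint probability of $k+1$ rejections on disjoint blocks is at most $\alpha^{k+1}$ ``by independence'', but the blocks (segments of $\hat\mu$, or the rejecting intervals themselves) are data-dependent, so you are really bounding the probability that \emph{there exist} $k+1$ pairwise disjoint rejecting dyadic intervals. Independence of disjoint index sets does not give a product bound for such an existential event — a naive union over configurations would inflate the constant. The paper's proof supplies the missing device: it defines the ordered first-rejection times $R_1<R_2<\cdots$ (so $\{\hat K>2k\}\subseteq\{R_{k+1}\leq n\}$), conditions on $R_1=t$, and uses that the remaining $k$ rejections must occur in $\{t+1,\ldots,n\}$, whence $\Pj(R_{k+1}\leq n\mid R_1=t)\leq\Pj(R_k\leq n)$; iterating gives $\Pj(R_{k+1}\leq n)\leq\Pj(R_1\leq n)^{k+1}\leq\alpha^{k+1}$. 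Your derivation of the expectation bound from the tail bounds is correct and matches the paper's.
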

To control the probability $\Pj(\hat{K}\neq K)$ we need additionally an upper bound for the probability to \textit{underestimate} $K$. Unlike to the \textit{overestimation} bounds in the Theorems \ref{theorem:boundoverestimate} and \ref{theorem:overestimationk} the probability to underestimate cannot be bounded uniformly over $\mathcal{S}$, since size and scale of changes could be arbitrarily small. This is made more precise in Theorem \ref{theorem:multiscalesignalsrate} which gives the detection boundary in terms of the  smallest (standardized) jump size $\Delta$ and the smallest scale $\lambda$. The next theorem provides an exponential bound uniformly over the subset 
\begin{equation}\label{eq:subsetSdeltalambda}
\mathcal{S}_{\Delta,\lambda}:=\left\{(\mu,\sigma^2)\in\mathcal{S}\ :\ \Delta\leq\inf_{1\leq k\leq K}{\frac{\vert\mu_k-\mu_{k-1}\vert}{\max\left(\valsd_{k-1},\valsd_{k}\right)}} \text{ and } \lambda\leq\inf_{0\leq k\leq K}{(\tau_{k+1}-\tau_{k})}\right\},
\end{equation}
with $\Delta,\lambda >  0$ arbitrary, but fixed.
\begin{Theorem}[Underestimation control]\label{theorem:boundunderestimate}
Let $\mathcal{S}_{\Delta,\lambda}$ be as in \eqref{eq:subsetSdeltalambda} with $\Delta,\lambda >0$ arbitrary, but fixed, and $k_n:=\lfloor\log_2(n\lambda/4)\rfloor$. We define
\begin{equation*}\label{eq:eta}
\eta:=\left[1-3\exp\left(-\frac{1}{48}\left(\sqrt{\frac{n\lambda\Delta^2}{32}}-\sqrt{16\log\left(\frac{8}{\lambda\alpha\beta_{k_n}}\right)}\right)_+^2\right)\right]_+^2.
\end{equation*}
Under the assumptions of Theorem \ref{theorem:boundoverestimate} and if $n\lambda \geq 32$ and
\begin{equation*}
\left(n\lambda\right)^{-1}\log\left(\frac{8}{\lambda\alpha\beta_{k_n}}\right)\leq \frac{1}{512}
\end{equation*}
are satisfied, then uniformly in $\mathcal{S}_{\Delta,\lambda}$
\begin{equation}\label{eq:epineq}
\Pj_{(\mu, \sigma^2)}\left(\hat{K} < K\right)\leq 1-\eta^K \text{ and }\E_{(\mu, \sigma^2)}\left[\left(K-\hat{K}\right)_+\right]\leq K \left(1-\eta\right).
\end{equation}
\end{Theorem}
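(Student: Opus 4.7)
The plan is a two-stage reduction: a pigeonhole argument forces $\hat\mu$ to be constant in a neighborhood of some true change-point, and then a ``two-probe'' local contradiction conflicts with the jump hypothesis on a high-probability event. First, I would set $N_k := (\tau_k - \lambda/2,\tau_k + \lambda/2)$ for $k = 1,\ldots,K$; because $\lambda \leq \tau_{k+1}-\tau_k$, these windows are pairwise disjoint. If every $N_k$ contained a change-point of $\hat\mu$, one would have $\hat K \geq K$, so on $\{\hat K < K\}$ at least one $N_k$ is free of estimated change-points and $\hat\mu$ is constant there with some value $c$.

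For such a $k$ I would pick dyadic intervals $I^L_k \subset (\tau_k - \lambda/2,\tau_k)$ and $I^R_k \subset (\tau_k,\tau_k + \lambda/2)$ of length $2^{k_n}/n \leq \lambda/4$; one complete dyadic interval of scale $k_n$ fits in each half-window, and by the scale assumption $I^L_k \subset [\tau_{k-1},\tau_k)$, $I^R_k \subset [\tau_k,\tau_{k+1})$. Both lie in $\mathcal D(\hat\mu)$, so the multiscale constraint in \eqref{eq:optproblemhsmuce} forces
\[
|\overline{Y}_{I^\bullet_k} - c| \leq \sqrt{q_{k_n}/2^{k_n}}\,\hat{\valsd}_{I^\bullet_k},\quad \bullet\in\{L,R\}.
\]
A triangle inequality involving $\mu_{k-1}$ and $\mu_k$ then yields, on the ``good event''
\[
G_k := \bigcap_{\bullet\in\{L,R\}}\left\{|\overline{Y}_{I^\bullet_k}-\mu(I^\bullet_k)|\leq \tau^*\sigma(I^\bullet_k)/\sqrt{2^{k_n}},\ \hat{\valsd}^2_{I^\bullet_k}\leq 2\sigma^2(I^\bullet_k)\right\},
\]
the inequality $\sqrt{2^{k_n}}\,\Delta/2 \leq \tau^* + \sqrt{2 q_{k_n}}$. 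Lemma \ref{lemma:boundcritval} (whose hypothesis follows from the theorem's conditions and $2^{k_n} \geq n\lambda/8$) bounds $q_{k_n} \leq 8\log(8/(\lambda\alpha\beta_{k_n}))$, so the choice $\tau^* := (\sqrt{n\lambda\Delta^2/32} - \sqrt{16\log(8/(\lambda\alpha\beta_{k_n}))})_+$ makes the inequality impossible. Hence $G_k$ rules out ``no $\hat\tau$ in $N_k$'', and $\Pj(\hat K < K) \leq \Pj(\bigcup_{k=1}^K G_k^c)$.

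I would then control each $\Pj(G_k^c)$ by splitting it on each side into three sub-events: two one-sided Gaussian deviations of $\overline{Y}$ at rate $\exp(-\tau^{*2}/2)$ and a chi-squared upper deviation of $\hat{\valsd}^2$ with $2^{k_n}-1$ degrees of freedom. The quantitative hypotheses $n\lambda \geq 32$ and $(n\lambda)^{-1}\log(8/(\lambda\alpha\beta_{k_n})) \leq 1/512$ force $\tau^{*2}$ to be small enough relative to $2^{k_n}$ that the chi-squared tail is also bounded by $\exp(-\tau^{*2}/48)$, giving a per-side bound $3\exp(-\tau^{*2}/48)$. Because $I^L_k$ and $I^R_k$ are disjoint and lie in distinct true segments, the two sides are independent and $\Pj(G_k)\geq\eta$. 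Disjointness of the probes across $k$ makes $\{G_k\}_{k=1}^K$ mutually independent, hence $\Pj(\hat K < K) \leq 1-\eta^K$. The expectation bound follows from a union bound $\E[(K-\hat K)_+] \leq \sum_{k=1}^K \Pj(G_k^c) \leq K(1-\eta)$, since on the complement of $G_k$ alone one can only ``lose'' the change-point $\tau_k$.

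The hard part will be the last calibration, i.e.\ forcing the three tails into the common exponent $\tau^{*2}/48$. The chi-squared deviation has natural rate $\exp(-c\cdot 2^{k_n})$ rather than $\exp(-c\tau^{*2})$, so matching rates requires $\tau^{*2}$ to be dominated by a constant multiple of $2^{k_n}$; this is precisely what the two quantitative hypotheses are engineered to ensure. Equivalently, it amounts to the clean exponential deviation inequality for the $F$-type ratio $2^{k_n}(\overline{Y}-c)^2/\hat{\valsd}^2$ announced in the introduction, whose derivation will presumably be carried out in Section \ref{sec:prooflargedeviation}.
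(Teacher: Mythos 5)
Your overall architecture coincides with the paper's proof (which is the proof of Theorem \ref{theorem:boundunderestimatesharp} specialized to equal window lengths): disjoint windows around each $\tau_k$, a pigeonhole forcing $\hat\mu$ to be constant on some window when $\hat K<K$, one dyadic probe of scale $k_n$ on each side of the change, independence across disjoint probes giving the product $\eta^K$, and a per-side failure probability of $3\exp\bigl(-\tfrac{1}{48}(\cdot)_+^2\bigr)$. The only structural difference is cosmetic: the paper reduces the feasibility of an arbitrary constant $c$ to a one-sided event at the midpoint $(\valmu_{k-1}+\valmu_k)/2$ via monotonicity of $T$ in its second argument (Lemma 7.1 of the $\SMUCE$ paper), and then invokes the $F$-deviation bound of Lemma \ref{lemma:boundcdfstatfurther}; you instead run a symmetric triangle inequality through $c$ on a good event. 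Both yield the same constant $\sqrt{n\lambda\Delta^2/32}$.

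There is, however, one genuine flaw in the step you yourself single out as the hard part. You claim that the hypotheses $n\lambda\geq 32$ and $(n\lambda)^{-1}\log(8/(\lambda\alpha\beta_{k_n}))\leq 1/512$ force $\tau^{*2}$ to be dominated by a constant multiple of $2^{k_n}$, so that the fixed variance-inflation event $\{\hat\valsd^2\leq 2\sigma^2\}$ has a chi-squared tail bounded by $\exp(-\tau^{*2}/48)$. This is false: over $\mathcal{S}_{\Delta,\lambda}$ the standardized jump is only bounded \emph{below} by $\Delta$, and $\tau^{*2}\asymp n\lambda\Delta^2/32$ grows without bound in the jump size while $2^{k_n}\asymp n\lambda/4$ does not; the two quantitative hypotheses constrain $q_{k_n}/(n\lambda)$, not the jump. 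With a fixed inflation factor the per-side bound degenerates to $2\exp(-\tau^{*2}/2)+\exp(-c\,2^{k_n})$, whose second term does not decay in $\Delta$, so the claimed $3\exp(-\tau^{*2}/48)$ fails once $\Delta$ is moderately large. The repair is exactly the content of Lemmas \ref{lemma:boundcdfstat} and \ref{lemma:boundcdfstatfurther}: condition on $\hat\valsd^2/\sigma^2\leq 1+z$ with $z$ chosen as a function of $\Delta$ and $q$ (solving the quadratic balance between the Gaussian and $z-\log(1+z)$ exponents under $q/n\leq 1/8$), which is what produces the uniform constant $1/48$. Since you defer to precisely this deviation inequality at the end, the proof closes once you replace the fixed threshold $2\sigma^2$ by that optimization rather than by the (incorrect) appeal to the theorem's hypotheses.
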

Roughly speaking, $\HSMUCE$ detects any change-point of the signal $\mu$ under assumptions of Theorem \ref{theorem:boundunderestimate} at least with probability $\eta$. A sharper version with different probabilities $\eta_1,\ldots,\eta_K$ is given in Theorem \ref{theorem:boundunderestimatesharp} in the supplement. Such a result clarifies the dependence on the different weights, but is technically way more difficult. Combining Theorems \ref{theorem:boundoverestimate}, \ref{theorem:overestimationk} and \ref{theorem:boundunderestimate} gives upper bounds for the probability $\Pj(\hat{K}\neq K)$ and the expectation $\E[\vert\hat{K}-K\vert]$ that $\HSMUCE$ missspecifies the number of change-points.
\begin{remark}[Vanishing errors]\label{remark:errorbounds}
For a fixed signal (fixed $\Delta$ and $\lambda$ are sufficient) both errors vanish asymptotically if $\alpha=\alpha_n\to 0$ is chosen such that $\log(\alpha_n\beta_{k_n,n})/n \to 0$, with triangular scheme $\beta_{1,n},\ldots,\beta_{\snn,n}$ for the weights in \eqref{eq:balancing}. We can achieve a rate arbitrary close to the exponential rate by the choice $\alpha_n = \exp(-n/r_n)$, with $r_n\to\infty$ arbitrarily slow. The condition on the sequence $\beta_{k_n,n}$ allows a variety of possible choices of the weights, too. For instance, the choice $\beta_{1,n}=\cdots=\beta_{\snn,n}=1/\snn$, which weights all scales equally, fulfils this condition.
\end{remark} 
A direct consequence is the \textit{strong model consistency} of $\HSMUCE$.
\begin{Theorem}[Strong model consistency]\label{theorem:consistency}
Assume the setting of Theorem \ref{theorem:boundoverestimate} and let $(\hat{K}_n)_n$ be the sequence of estimated numbers of change-points by $\HSMUCE$, where $\hat{K}_n$ is as $\hat{K}$ with significance level $\alpha_n$ and corresponding weights $\beta_{1,n},\ldots,\beta_{\snn,n}$. Moreover, let $\mathcal{S}_{\Delta,\lambda}$ be as in \eqref{eq:subsetSdeltalambda} with $\Delta,\lambda>0$ arbitrary, but fixed, and $k_n:=\lfloor\log_2(n\lambda/4)\rfloor$. Let $\rho>0$ be arbitrary, but fixed. If
\begin{equation}\label{eq:conditionconsistency}
\lim_{n\rightarrow \infty}{\frac{n^{1+\rho}}{\alpha_n}}=0 \text{ and } \lim_{n\rightarrow \infty}{\frac{\log\left(\alpha_n\beta_{k_n,n}\right)}{n}}=0
\end{equation}
holds, then $\hat{K}_n\to K$, almost surely and uniformly in $\mathcal{S}_{\Delta,\lambda}$.
\end{Theorem}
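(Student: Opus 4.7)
The plan is to combine the over- and underestimation bounds established in Theorems \ref{theorem:boundoverestimate} and \ref{theorem:boundunderestimate} with a uniform Borel--Cantelli argument. Starting from the elementary decomposition
\[\Pj_{(\mu,\sigma^2)}\bigl(\hat{K}_n\neq K\bigr)\leq \Pj_{(\mu,\sigma^2)}\bigl(\hat{K}_n>K\bigr)+\Pj_{(\mu,\sigma^2)}\bigl(\hat{K}_n<K\bigr),\]
Theorem \ref{theorem:boundoverestimate} directly yields the uniform bound $\Pj_{(\mu,\sigma^2)}(\hat{K}_n>K)\leq \alpha_n$, and the first condition in \eqref{eq:conditionconsistency} (read as $\alpha_n=o(n^{-1-\rho})$) secures $\sum_n\alpha_n<\infty$.

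For the underestimation probability I would invoke Theorem \ref{theorem:boundunderestimate}, which applies as soon as $n$ is large enough that $n\lambda\geq 32$ and $(n\lambda)^{-1}\log(8/(\lambda\alpha_n\beta_{k_n,n}))\leq 1/512$; the second condition in \eqref{eq:conditionconsistency} together with $\lambda$ being fixed ensures this for all sufficiently large $n$. It remains to show that $\eta_n\to 1$ fast enough. The argument inside the exponential in $\eta_n$ is
\[\sqrt{\frac{n\lambda\Delta^2}{32}}-\sqrt{16\log\left(\frac{8}{\lambda\alpha_n\beta_{k_n,n}}\right)};\]
the first term grows like $c_1\sqrt{n}$ with $c_1=\sqrt{\lambda\Delta^2/32}>0$, while the second is $o(\sqrt{n})$ precisely because $\log(1/(\alpha_n\beta_{k_n,n}))=o(n)$. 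Hence the positive part of this difference is eventually at least $\tfrac{1}{2}c_1\sqrt{n}$, so that $1-\eta_n\leq 3\exp(-c\, n)$ for some $c>0$. Since any $(\mu,\sigma^2)\in\mathcal{S}_{\Delta,\lambda}$ admits at most $K\leq 1/\lambda$ change-points, the inequality $1-\eta_n^K\leq K(1-\eta_n)\leq 3\lambda^{-1}\exp(-cn)$ holds uniformly in $\mathcal{S}_{\Delta,\lambda}$ and is summable.

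Combining both estimates gives
\[\sum_{n}\sup_{(\mu,\sigma^2)\in\mathcal{S}_{\Delta,\lambda}}\Pj_{(\mu,\sigma^2)}\bigl(\hat{K}_n\neq K\bigr)<\infty,\]
so the first Borel--Cantelli lemma, applied uniformly, yields
\[\sup_{(\mu,\sigma^2)\in\mathcal{S}_{\Delta,\lambda}}\Pj_{(\mu,\sigma^2)}\bigl(\hat{K}_n\neq K\text{ infinitely often}\bigr)=0,\]
which is exactly the claimed strong, uniform model consistency. I expect the only delicate step to be the asymptotic bookkeeping for $\eta_n$: one must carefully track that the condition $\log(\alpha_n\beta_{k_n,n})/n\to 0$ is strong enough to keep the subtracted square-root term in the exponent of strictly smaller order than $c_1\sqrt{n}$, and simultaneously to verify the technical smallness hypotheses of Theorem \ref{theorem:boundunderestimate} for large $n$. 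The remaining steps are routine applications of the already established exponential bounds.
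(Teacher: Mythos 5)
Your proposal is correct and follows essentially the same route as the paper: decompose $\Pj(\hat{K}_n\neq K)$ into over- and underestimation, bound the first by $\alpha_n$ via Theorem \ref{theorem:boundoverestimate} and the second by the exponential bound of Theorem \ref{theorem:boundunderestimate} (whose hypotheses hold for large $n$ by the second condition in \eqref{eq:conditionconsistency}), observe that both bounds are summable uniformly over $\mathcal{S}_{\Delta,\lambda}$, and conclude with Borel--Cantelli. Your reading of the first condition as $\alpha_n=o(n^{-1-\rho})$ and your explicit use of $K\leq 1/\lambda$ for uniformity are exactly the bookkeeping the paper's proof relies on implicitly.
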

Again, there is a wide range of sequences $\alpha_n$ and $\beta_{k_n,n}$ to satisfy \eqref{eq:conditionconsistency}. Moreover, we still have (weak) model consistency, if $\alpha_n\to 0$ and the second condition of \eqref{eq:conditionconsistency} holds. 

\subsection{Confidence sets}\label{sec:confidencestatements}
In this section we obtain confidence sets for the signal $\mu$ and for the locations of the change-points. First, we show that the set of all solutions of \eqref{eq:optproblemhsmuce}
\begin{equation}\label{eq:defC}
C(\textbf{q}):=\left\{\mu\in\mathcal{M}\ :\ \vert\mathcal{I}(\mu)\vert = \hat{K}\text{ and } \max_{[\frac{i}{n},\frac{j}{n}]\in \mathcal{D}(\mu)} \left[T_i^j\big(Y,\mu([i/n,j/n])\big) - q_{ij}\right] \leq 0\right\}
\end{equation}
is a confidence set for the unknown signal $\mu$.
\begin{Theorem}[Confidence set]\label{theorem:confidencefunction}
Assume the setting of Theorem \ref{theorem:boundoverestimate} and let $\mathcal{S}_{\Delta,\lambda}$ be as in \eqref{eq:subsetSdeltalambda} with $\Delta,\lambda >0$ arbitrary, but fixed, and $k_n:=\lfloor\log_2(n\lambda/4)\rfloor$. Let $C(\cdot)$ be as in \eqref{eq:defC} and $\textbf{q}_n$ be a vector of critical values determined by significance level $\alpha$ and weights $\beta_{1,n},\ldots,\beta_{\snn,n}$, with $\lim_{n\rightarrow \infty}{n^{-1}\log(\beta_{k_n,n})}=0$. Then,
\begin{equation}\label{eq:confidenceset}
\lim_{n\rightarrow \infty}{\inf_{(\mu,\sigma^2)\in\mathcal{S}_{\Delta,\lambda}}\Pj_{(\mu, \sigma^2)}\left(\mu\in C\left(\textbf{q}_n\right)\right)}\geq 1-\alpha.
\end{equation}
\end{Theorem}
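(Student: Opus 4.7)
The plan is to decompose the event $\{\mu\in C(\textbf{q}_n)\}$ according to the two defining clauses of $C(\textbf{q}_n)$ in \eqref{eq:defC}: the multiscale feasibility constraint evaluated at the true signal, and the cardinality identity $|\mathcal{I}(\mu)|=\hat K$. Let $A_n$ denote the event that the true signal $\mu$ itself satisfies the multiscale constraint, i.e.\ $\max_{[i/n,j/n]\in\mathcal{D}(\mu)}[T_i^j(Y,\mu([i/n,j/n]))-q_{ij}]\leq 0$. Since the critical values $\textbf{q}_n$ are constructed so that the multiscale test has exact level $\alpha$, see \eqref{eq:significancelevel}--\eqref{eq:balancing} together with \eqref{eq:multiscaletestlevelalpha}, we immediately have $\Pj_{(\mu,\sigma^2)}(A_n^c)\leq\alpha$ uniformly in $\mathcal{S}$.

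The key observation is that on $A_n$ the true signal $\mu$ is a feasible candidate in the optimization \eqref{eq:optproblemhsmuce}, so $\hat K\leq K$. Therefore on $A_n\cap\{\hat K\geq K\}$ one has $\hat K=K=|\mathcal{I}(\mu)|$, whence $\mu\in C(\textbf{q}_n)$. A union bound then yields
\[
\Pj_{(\mu,\sigma^2)}(\mu\in C(\textbf{q}_n))\ \geq\ 1-\Pj_{(\mu,\sigma^2)}(A_n^c)-\Pj_{(\mu,\sigma^2)}(\hat K<K)\ \geq\ 1-\alpha-\Pj_{(\mu,\sigma^2)}(\hat K<K).
\]
Taking $\liminf_{n\to\infty}$ and the infimum over $\mathcal{S}_{\Delta,\lambda}$, it suffices to prove that
\[
\lim_{n\to\infty}\sup_{(\mu,\sigma^2)\in\mathcal{S}_{\Delta,\lambda}}\Pj_{(\mu,\sigma^2)}(\hat K<K)=0.
\]

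For this I invoke Theorem \ref{theorem:boundunderestimate}. The assumption $n^{-1}\log(\beta_{k_n,n})\to 0$, combined with fixed $\alpha,\Delta,\lambda>0$, implies $\log(8/(\lambda\alpha\beta_{k_n,n}))=o(n)$, so in particular the conditions $n\lambda\geq 32$ and $(n\lambda)^{-1}\log(8/(\lambda\alpha\beta_{k_n,n}))\leq 1/512$ hold for all sufficiently large $n$. Moreover, the leading order term $\sqrt{n\lambda\Delta^2/32}$ dominates $\sqrt{16\log(8/(\lambda\alpha\beta_{k_n,n}))}$, so the argument of the exponential appearing in the definition of $\eta$ tends to $-\infty$, giving $\eta\to 1$ at a rate uniform over $\mathcal{S}_{\Delta,\lambda}$ (since $\eta$ depends on $(\mu,\sigma^2)$ only through $\Delta$ and $\lambda$).

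The main obstacle is that the underestimation bound $1-\eta^K$ from Theorem \ref{theorem:boundunderestimate} depends on the unknown $K$, which is not bounded a priori on $\mathcal{S}_{\Delta,\lambda}$ by the definition alone. However, the minimum-scale constraint $\lambda\leq\min_k(\tau_{k+1}-\tau_k)$ forces $K\leq \lfloor 1/\lambda\rfloor$, yielding a uniform upper bound on $K$ over $\mathcal{S}_{\Delta,\lambda}$. Consequently $1-\eta^K\leq 1-\eta^{\lfloor 1/\lambda\rfloor}\to 0$ uniformly, which finishes the argument and yields the claimed asymptotic coverage $1-\alpha$.
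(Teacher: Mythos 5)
Your proposal is correct and follows essentially the same route as the paper: the paper proves a non-asymptotic version (Lemma \ref{lemma:confidencefunction}) via exactly your decomposition --- on the event that the true $\mu$ is feasible one has $\hat{K}\leq K$, so $\{\mu\in C(\textbf{q}_n)\}$ fails only if the multiscale test rejects the truth (probability $\leq\alpha$) or $\hat{K}<K$ (controlled by Theorem \ref{theorem:boundunderestimate}) --- and then lets $n\to\infty$. Your explicit remark that $K\leq\lfloor 1/\lambda\rfloor$ uniformly over $\mathcal{S}_{\Delta,\lambda}$, so that $1-\eta^{K}\to 0$ uniformly, is a detail the paper leaves implicit but is exactly what is needed.
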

This shows that the asymptotic coverage of $C\left(\textbf{q}_n\right)$ is at least $1-\alpha$. Lemma \ref{lemma:confidencefunction} gives an exponential inequality similar to \eqref{eq:epineq} which shows that $C\left(\textbf{q}_n\right)$ is also a non-asymptotic confidence set. We further derive from this set confidence intervals for the change-point locations.
\begin{Theorem}[Change-point locations]\label{theorem:confidencelocation}
Assume the setting of Theorem \ref{theorem:confidencefunction}, where $\alpha$ is replaced by a sequence $\alpha_n\rightarrow 0$. Let $c_n:=r_n/n\leq \lambda/2$ and $k_n:=\lfloor\log_2(nc_n/2)\rfloor$ s.t. 
\begin{equation}\label{eq:conditioncnvanishing}
\liminf_{n\to\infty}{\frac{r_n}{\log(n)}} > \frac{216}{\min(\Delta^2,1)} \text{ and } \lim_{n\to\infty}{\frac{\log\left(\alpha_n\beta_{k_n,n}\right)}{r_n}}=0.
\end{equation}
Then,
\begin{equation}\label{eq:confidenceinterval}
\lim_{n\rightarrow \infty}{\sup_{(\mu,\sigma^2)\in \mathcal{S}_{\Delta,\lambda}}\Pj_{(\mu, \sigma^2)}\left(\sup_{\hat{\mu}\in C(\textbf{q}_n)}{\max_{k=1,\ldots,K}c_n^{-1}\left\vert \tau_k - \hat{\tau}_k\right\vert} > 1\right)}=0.
\end{equation}
\end{Theorem}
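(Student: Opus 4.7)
The plan is to combine the confidence-set guarantee of Theorem \ref{theorem:confidencefunction} with a localization argument that exploits the multiscale constraint on dyadic intervals straddling a true change-point. First I would work on the intersection of the events $\{\mu\in C(\mathbf{q}_n)\}$ and $\{\hat{K}=K\}$, which under \eqref{eq:conditioncnvanishing} has probability tending to one uniformly over $\mathcal{S}_{\Delta,\lambda}$: the first event by Theorem \ref{theorem:confidencefunction}, the second by combining Theorem \ref{theorem:boundoverestimate} (noting $\alpha_n\to 0$ since \eqref{eq:conditioncnvanishing} forces $r_n\to\infty$ and hence $\alpha_n\to 0$ through $\log(\alpha_n\beta_{k_n,n})/r_n\to 0$) with Theorem \ref{theorem:boundunderestimate} applied at the coarser scale $\lfloor\log_2(n\lambda/4)\rfloor$. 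On this good event every $\hat{\mu}\in C(\mathbf{q}_n)$ has exactly $K$ change-points since $C(\mathbf{q}_n)$ is the set of optimizers of \eqref{eq:optproblemhsmuce}.

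Next, since $c_n\leq\lambda/2$ the windows $J_k:=(\tau_k-c_n,\tau_k+c_n)$ are disjoint, so it suffices to show that each $J_k$ contains a change-point of $\hat{\mu}$: pigeonhole together with monotonicity of $(\tau_k)$ and $(\hat{\tau}_l)$ then force $\hat{\tau}_k\in J_k$ for every $k$. I would argue by contradiction: if some $J_{k_0}$ contains no $\hat{\tau}_l$, then the constant segment of $\hat{\mu}$ covering $\tau_{k_0}$ contains the closed window $[\tau_{k_0}-c_n,\tau_{k_0}+c_n]$; call $m$ its common value. Because $2\cdot 2^{k_n}/n\leq c_n$ by the choice $k_n=\lfloor\log_2(nc_n/2)\rfloor$, I can select dyadic intervals $I_L=[i_L/n,j_L/n]$ and $I_R=[i_R/n,j_R/n]$ in $\mathcal{D}_{k_n}$ with $I_L\subset(\tau_{k_0}-c_n,\tau_{k_0})$ and $I_R\subset(\tau_{k_0},\tau_{k_0}+c_n)$ (a one-level drop from $k_n$ to $k_n-1$ handles the pathological alignments of $\tau_{k_0}$ with the dyadic grid, at no cost to the asymptotics). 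On $I_L$ the true mean and variance equal $\valmu_{k_0-1}$ and $\valsd_{k_0-1}^2$, and analogously on $I_R$ with $\valmu_{k_0},\valsd_{k_0}^2$.

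Standard Gaussian and chi-square tail bounds, used uniformly over all $\OO(n)$ dyadic intervals via a union bound, give $|\overline{Y}_{i_Lj_L}-\valmu_{k_0-1}|=\OO(\valsd_{k_0-1}\sqrt{\log(n)/2^{k_n}})$ and $\hat{\valsd}^2_{i_Lj_L}=\valsd_{k_0-1}^2(1+\oo(1))$, and analogously on $I_R$, off an event of vanishing probability. Since $\hat{\mu}\in C(\mathbf{q}_n)$ the constraint $T_{i_L}^{j_L}(Y,m)\leq q_{k_n}$ holds, and combining it with the bound $q_{k_n}\leq 8\log(n/(2^{k_n}\alpha_n\beta_{k_n,n}))$ from Lemma \ref{lemma:boundcritval} and $2^{k_n}\geq r_n/4$ rearranges to
\begin{equation*}
(m-\valmu_{k_0-1})^2 \;\leq\; c\,\valsd_{k_0-1}^2\,\frac{\log\!\left(n/(2^{k_n}\alpha_n\beta_{k_n,n})\right)}{r_n},
\end{equation*}
and analogously $(m-\valmu_{k_0})^2\leq c\,\valsd_{k_0}^2\log(n/(2^{k_n}\alpha_n\beta_{k_n,n}))/r_n$. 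The triangle inequality yields $(\valmu_{k_0}-\valmu_{k_0-1})^2\leq 4c\max(\valsd_{k_0-1}^2,\valsd_{k_0}^2)\log(n/(2^{k_n}\alpha_n\beta_{k_n,n}))/r_n$. Under \eqref{eq:conditioncnvanishing}, i.e.\ $r_n/\log(n)$ eventually exceeds $216/\min(\Delta^2,1)$ and $\log(\alpha_n\beta_{k_n,n})/r_n\to 0$, the right-hand side is eventually strictly smaller than $\Delta^2\max(\valsd_{k_0-1}^2,\valsd_{k_0}^2)$, contradicting $(\mu,\sigma^2)\in\mathcal{S}_{\Delta,\lambda}$.

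The main obstacle is tightening the constants to reach exactly the threshold $216/\min(\Delta^2,1)$ in \eqref{eq:conditioncnvanishing}. This forces careful bookkeeping in the union bound for the local statistics $T_i^j$, whose tails are only sub-exponential because of the variance estimator appearing in the denominator (the very issue that motivated the $F$-type deviation bounds of Section \ref{sec:prooflargedeviation}), and in balancing the three error contributions: the fluctuation of $\overline{Y}_{ij}$, the fluctuation of $\hat{\valsd}^2_{ij}$ relative to the local variance, and the critical-value bound. A secondary technicality is verifying the existence of the "side" intervals $I_L,I_R$ uniformly in the position of $\tau_{k_0}$ within the dyadic grid, which is purely combinatorial.
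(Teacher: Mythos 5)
Your proposal is correct in outline and follows the same architecture as the paper's proof: the paper likewise decomposes the error as $\Pj(\hat{K}>K)$ plus the probability that some $\hat{\mu}\in C(\textbf{q}_n)$ is constant on a window $[\tau_j-c_n,\tau_j+c_n]$ (Lemma \ref{lemma:confidencelocation}), and rules out the latter by selecting dyadic intervals of scale $k_n$ on either side of $\tau_j$ and exploiting the jump size $\Delta$ together with the critical-value bound of Lemma \ref{lemma:boundcritval}; your pigeonhole step and the combinatorial existence of $I_L,I_R$ (using $2^{k_n}/n\leq c_n/2$) match the paper exactly, and your invocation of Theorem \ref{theorem:boundunderestimate} to get $\hat{K}=K$ is redundant but harmless, since the window argument itself already forces $\hat{K}\geq K$. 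Where you genuinely diverge is the key probabilistic estimate: the paper splits on whether the constant value $\hat{m}$ lies above or below the midpoint $(\valmu_{j-1}+\valmu_j)/2$ and applies the tailored non-central $F$-deviation bound (Lemma \ref{lemma:boundcdfstatfurther}) to $T(Y,(\valmu_{j-1}+\valmu_j)/2)$ on the side-interval whose mean is far from $\hat{m}$, exactly as in the proof of Theorem \ref{theorem:boundunderestimatesharp}; you instead control $\overline{Y}_{ij}$ and $\hat{\valsd}^2_{ij}$ separately by Gaussian and chi-square union bounds and then rearrange the constraint deterministically. Your route is asymptotically sufficient and gives uniformity over $C(\textbf{q}_n)$ for free, but, as you anticipate, it loses the explicit non-asymptotic exponential bound of Lemma \ref{lemma:confidencelocation} and makes the constant $216/\min(\Delta^2,1)$ hard to reach; the paper's route buys exactly that explicit bound by reusing the $F$-deviation machinery, which is the reason Section \ref{sec:prooflargedeviation} exists. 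One small caveat: your chi-square concentration $\hat{\valsd}^2_{ij}=\valsd^2(1+\oo(1))$ is false uniformly over \emph{all} dyadic intervals (it fails at the smallest scales); you only need it, and it only holds, at scales $k\geq k_n-1$ where $2^{k}\asymp r_n\gg\log n$ by \eqref{eq:conditioncnvanishing}, so the union bound must be restricted accordingly.
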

Here, the rate $c_n$ is equal to the sampling rate $1/n$ up to the (logarithmic) rate $r_n$ depending on the tuning parameters $\alpha_n\beta_{k_n,n}$. For example, if $\alpha_n\beta_{k_n,n} \asymp n^{-\gamma}, \gamma \geq 0$, $r_n/\log(n)\to\infty$ is sufficient to satisfy \eqref{eq:conditioncnvanishing}. A non-asymptotic statement is given in Lemma \ref{lemma:confidencelocation} in the supplement. For visualization of the confidence statements it is useful to further derive a confidence band $B(\textbf{q}_n)$ for the signal as in \citep[Corollary 3 and the explanation around]{SMUCE}. It can be shown that also the collection $I(\textbf{q}_n)=\{\hat{K}_n, B(\textbf{q}_n),[L_k,R_k]_{k=1,\ldots,\hat{K}_n}\}$, with $[L_k,R_k]$ confidence intervals for the change-point locations according to Theorem \ref{theorem:confidencelocation}, satisfies \eqref{eq:confidenceset}. Recall Figures \ref{fig:example} and \ref{fig:example1000} for an illustration. It is also possible to strengthen the statements of this section to sequences of vanishing signals with $\Delta_n\to 0$ and $\lambda_n\to 0$ slow enough, but we omit such results.

\subsection{Asymptotic detection rates for vanishing signals}\label{sec:rates}
For the detection of a single vanishing bump against a noisy background see Theorem \ref{theorem:onesignalrate} in the supplement. The following theorem deals with the detection of a signal with several vanishing change-points.
\begin{Theorem}[Multiple vanishing change-points]\label{theorem:multiscalesignalsrate}
Assume the heterogeneous gaussian change-point model \eqref{eq:model}. Let $(K_n)_n:=(\vert \mathcal{I}(\mu_n)\vert)_n$ be the sequence of true number of change-points. Let further $(\hat{K}_n)_n$ be the sequence of the estimated numbers of change-points by $\HSMUCE$ \eqref{eq:K}, with significance levels $\alpha_n$ and weights $\beta_{1,n},\ldots,\beta_{\snn,n}$. Let $\mathcal{S}_{\Delta_n,\lambda_n}\subset \mathcal{S}$ be a sequence of submodels as in \eqref{eq:subsetSdeltalambda} and $k_n:=\lfloor\log_2(n\lambda_n/4)\rfloor$. We further assume
\begin{equation}\label{conditionmultiscalesignalsrate}
\liminf_{n\to \infty}{\frac{n\lambda_n}{\log(n)}} > 512 \text{ and } \lim_{n\to\infty}{\frac{\log\left(\alpha_n\beta_{k_n,n}\right)}{n\lambda_n}}=0
\end{equation}
as well as
\begin{itemize}
\item[(1)] for large scales, i.e. $\liminf_{n>0}{\lambda_n}>0$, the limit $n\lambda_n\Delta_n^2\log(1/(\alpha_n\beta_{k_n,n}))^{-1}\rightarrow \infty$,
\item[(2)] for small scales, i.e. $\lambda_n\rightarrow 0$, the inequality
\begin{equation}\label{eq:ratesmallscales}
\sqrt{n\lambda_n}\Delta_n\geq \left(\sqrt{512}+C+\epsilon_n\right)\sqrt{-\log(\lambda_n)}
\end{equation}
with possibly $\epsilon_n\to 0$, but such that $\epsilon_n\sqrt{-\log(\lambda_n)}\rightarrow \infty$ and 
\begin{equation*}
\limsup_{n\to\infty}\frac{\sqrt{\log(8/(\alpha_n\beta_{k_n,n}))}}{\epsilon_n\sqrt{-\log(\lambda_n)})} < \frac{1}{\sqrt{512}},
\end{equation*}
with $C=0$ for $K_n$ bounded and $C=16\sqrt{6}$ for $K_n$ unbounded.
\end{itemize}
Then,
\begin{equation*}
\lim_{n\rightarrow \infty}{\sup_{(\mu_n,\sigma_n^2)\in\mathcal{S}_{\Delta_n,\lambda_n}}\Pj_{(\mu_n, \sigma_n^2)}\left(\hat{K}_n < K_n\right)}=0.
\end{equation*}
\end{Theorem}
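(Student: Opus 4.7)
The plan is to reduce the claim to the non-asymptotic underestimation bound of Theorem~\ref{theorem:boundunderestimate}, applied to the sequence $(\mathcal{S}_{\Delta_n,\lambda_n})_n$, and then to analyse the resulting quantity $\eta_n$. I would first verify that the two hypotheses of Theorem~\ref{theorem:boundunderestimate}, namely $n\lambda_n\geq 32$ and $(n\lambda_n)^{-1}\log(8/(\lambda_n\alpha_n\beta_{k_n,n}))\leq 1/512$, hold for all sufficiently large $n$. The first is immediate from $\liminf n\lambda_n/\log n>512$; for the second, split the logarithm as $\log 8-\log\lambda_n-\log(\alpha_n\beta_{k_n,n})$ and use $-\log\lambda_n\leq\log n$ (since $\lambda_n\geq 1/n$ eventually) together with \eqref{conditionmultiscalesignalsrate} to conclude that both terms divided by $n\lambda_n$ are $o(1)$. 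Theorem~\ref{theorem:boundunderestimate} combined with Bernoulli's inequality then yields
\begin{equation*}
\Pj_{(\mu_n,\sigma_n^2)}\bigl(\hat{K}_n<K_n\bigr)\leq 1-\eta_n^{K_n}\leq K_n(1-\eta_n),
\end{equation*}
and the pigeonhole bound $(K_n+1)\lambda_n\leq 1$ supplies $K_n\leq 1/\lambda_n$.

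Abbreviate $a_n:=\sqrt{n\lambda_n\Delta_n^2/32}$ and $b_n:=\sqrt{16\log(8/(\lambda_n\alpha_n\beta_{k_n,n}))}$. Once $a_n>b_n$ and $(a_n-b_n)^2\geq 48\log 3$, the elementary identity $1-(1-3x)^2=6x-9x^2\leq 6x$ gives
\begin{equation*}
1-\eta_n\leq 6\exp\!\bigl(-(a_n-b_n)^2/48\bigr).
\end{equation*}
In case~(1), $\liminf\lambda_n>0$ forces $K_n$ to be bounded (by $1/\inf\lambda_n$) and $-\log\lambda_n=O(1)$, so it suffices to prove $(a_n-b_n)^2\to\infty$; since $b_n^2=O(1)+O(\log(1/(\alpha_n\beta_{k_n,n})))$ while condition~(1) gives $a_n^2/\log(1/(\alpha_n\beta_{k_n,n}))\to\infty$, this is automatic.

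The substantive case is~(2). When $K_n$ is unbounded we use $K_n\leq 1/\lambda_n$ and need the stronger inequality $\lambda_n^{-1}\exp(-(a_n-b_n)^2/48)\to 0$, equivalently $(a_n-b_n)^2>48(-\log\lambda_n)(1+o(1))$. Using the limsup hypothesis in~(2) together with $\log(8/\lambda_n)=-\log\lambda_n+O(1)$, one writes $b_n=4\sqrt{-\log\lambda_n}+o\bigl(\epsilon_n\sqrt{-\log\lambda_n}\bigr)$, and after substituting $a_n=\sqrt{n\lambda_n}\,\Delta_n/\sqrt{32}$ the target reduces to
\begin{equation*}
\sqrt{n\lambda_n}\,\Delta_n\;>\;\sqrt{32}\bigl(4+\sqrt{48}\bigr)\sqrt{-\log\lambda_n}+o\!\bigl(\epsilon_n\sqrt{-\log\lambda_n}\bigr).
\end{equation*}
Since $\sqrt{32}\cdot 4=\sqrt{512}$ and $\sqrt{32}\cdot\sqrt{48}=\sqrt{1536}=16\sqrt{6}$, this is exactly the asymptotic content of \eqref{eq:ratesmallscales} with $C=16\sqrt{6}$. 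For the sub-case of~(2) in which $K_n$ is bounded, the factor $1/\lambda_n$ is unnecessary and the requirement collapses to $a_n-b_n\to\infty$; this only costs a $o(\sqrt{-\log\lambda_n})$ margin above $b_n\approx 4\sqrt{-\log\lambda_n}$, so the $\sqrt{48}$ term drops out and $C=0$ suffices, as stated.

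The main obstacle is purely the bookkeeping of constants: one has to check that after the quadratic expansion of $\eta_n$ the cross-term $-2a_nb_n$ combines with $b_n^2$ to produce exactly the $4+\sqrt{48}$ factor that, after being amplified by the $\sqrt{32}$ scaling of $a_n$, yields $\sqrt{512}+16\sqrt{6}$. The slack assumption $\epsilon_n\sqrt{-\log\lambda_n}\to\infty$ together with the limsup condition on $\sqrt{\log(8/(\alpha_n\beta_{k_n,n}))}$ is used precisely to absorb the $O(1)$ contributions from $\log 8$, the residual $\log(1/(\alpha_n\beta_{k_n,n}))$, and the $(1+o(1))$ factor incurred when discarding $-9x^2$ in the Bernoulli step, so that the strict inequality in \eqref{eq:ratesmallscales} translates into a strict inequality $(a_n-b_n)^2\geq 48(-\log\lambda_n)(1+\delta_n)$ with $\delta_n\to 0$ slowly enough.
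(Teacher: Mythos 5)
Your proposal is correct and follows essentially the same route as the paper's proof: invoke Theorem \ref{theorem:boundunderestimate} after checking its hypotheses via \eqref{conditionmultiscalesignalsrate}, bound $1-\eta_n^{K_n}\leq 6K_n\exp\bigl(-\Gamma_n^2/48\bigr)$ with $K_n\leq 1/\lambda_n$, split $\sqrt{16\log(8/(\lambda_n\alpha_n\beta_{k_n,n}))}\leq 4\sqrt{-\log\lambda_n}+4\sqrt{\log(8/(\alpha_n\beta_{k_n,n}))}$, and absorb the extra $\log(1/\lambda_n)$ from the unbounded-$K_n$ case into the $C=16\sqrt{6}$ term, whose square after the $\sqrt{32}$ rescaling is exactly $48$. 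The constant bookkeeping ($\sqrt{512}=\sqrt{32}\cdot 4$, $16\sqrt{6}=\sqrt{32}\cdot\sqrt{48}$) and the use of the limsup condition to dominate the residual $\log(8/(\alpha_n\beta_{k_n,n}))$ term by $\epsilon_n\sqrt{-\log\lambda_n}$ match the paper's argument.
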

Theorems \ref{theorem:onesignalrate} and \ref{theorem:multiscalesignalsrate} state conditions on the tuning parameters $\alpha_n$ and $\beta_{k_n,n}$ as well as on the length of the minimal scale $\vert I_n\vert=:\lambda_n$ (to simplify notations we only write $\lambda_n$ in the following) and the standardized jump size $\Delta_n$ to detect the vanishing signals uniformly over $\mathcal{S}_{\Delta_n,\lambda_n}$. If, in addition, $\lim_{n\rightarrow \infty}{\alpha_n}=0$ holds, then we control also the probability to overestimate the number of change-points and therefore the estimation of the number of change-points is still consistent in the case of a vanishing signal. The main condition in both theorems is that $\sqrt{n\lambda_n}\Delta_n$ has to be at least of order $\sqrt{-\log(\lambda_n)}$, see \eqref{eq:onesignalrate} and \eqref{eq:ratesmallscales}. This is optimal in the sense that no signal with a smaller rate can be detected asymptotically with probability one, see \citep{{DuembgenSpokoiny01},{ChanWalther13},{SMUCE}} for the case of homogeneous observations, and note that this is a sub-model of our model. But different to the homogeneous case we need, in addition, that $\lambda_n$ is at least of order $\log(n)/n$, see \eqref{eq:onesignalrateIn} and \eqref{conditionmultiscalesignalsrate}. Such a restriction appears reasonable, since for the additional variance estimation only the number of observation on the segment is relevant and not the size of the change. Finally, we observe that the constants encountered in the lower detection bound for $\HSMUCE$ in \eqref{eq:onesignalrate} and \eqref{eq:ratesmallscales} increase with the difficulty of the estimation problem, where the difficulty is represented by the number of vanishing segments. All of these constants are a little bit larger as the analogue constants for $\SMUCE$ in \citep[Theorem 5 and 6]{SMUCE} reflecting the additional difficulty encountered by the heterogeneous noise. More precisely, we have $4$ instead of the optimal $\sqrt{2}$ for one vanishing segment, $\sqrt{512}$ instead of $4$ for a bounded number of vanishing segments and $\sqrt{512}+16\sqrt{6}$ instead of $12$ for an unbounded number of vanishing segments. Note again, that the optimal constants for the heterogeneous case are unknown to us.

\subsection{Choice of the tuning parameters}\label{sec:tuningparameters}
In this section we discuss the choice of the tuning parameters $\alpha$ and $\beta_1,\ldots,\beta_{\sn}$. 
\paragraph*{\textbf{Choice of $\alpha$}}
As illustrated in Figure \ref{fig:example1000} the choice depends on the application. If a strict overestimation control of the number of change-points $K$ is desirable $\alpha$ should be chosen small, e.g. $0.05$ or $0.1$, recall Theorems \ref{theorem:boundoverestimate} and \ref{theorem:overestimationk}. This might come at the expense of missing change-points but with large probability not detecting too many (recall Figure \ref{fig:example1000} and see also the simulations in Section \ref{sec:simulations}). If change-point screening is the primarily goal, i.e. we aim to avoid missing of change-points, $\alpha$ should be increased, e.g. $\alpha = 0.5$ or even higher, since Theorem \ref{theorem:boundunderestimate} shows that the error probability to underestimate the number of change-points decreases with increasing $\alpha$. If model selection, i.e. $\hat{K}=K$, is the major aim, an intermediate level that balances the over- and underestimation error should be chosen, e.g. $\alpha$ between $0.1$ and $0.5$. Both errors vanish super polynomially for the asymptotic choice $\alpha=\alpha_n\in\exp(-o(n))$, see Remark \ref{remark:errorbounds}. A finite sample approach is to weight these error probabilities $\gamma\Pj(\hat{K}>K)+(1-\gamma)\Pj(\hat{K}<K)$, with $\gamma\in(0,1)$, and to choose $\alpha$ such that its upper bound
\begin{equation*}
\gamma\alpha + (1-\gamma)\left(1-\left[1-3\exp\left(-\frac{1}{48}\left(\sqrt{\frac{n\lambda\Delta^2}{32}}-\sqrt{16\log\left(\frac{8}{\lambda\alpha\beta_k}\right)}\right)_+^2\right)\right]_+^{2K}\right)
\end{equation*}
is minimized. This also allows to incorporate prior information on $(\lambda,\Delta)$. Alternatively, the bound on the expectation $\E[\vert\hat{K}-K\vert]$ by combining Theorems \ref{theorem:overestimationk} and \ref{theorem:boundunderestimate} can be minimized to take the size of the missestimation into account. Despite of all possibilities to choose the 'best' $\alpha$ for a given application, comparing estimates at different $\alpha$ can be helpful to trace the "stability of evidence" of the estimated change-points at different significance levels. Of course, the interpretation of such a "significance screening" does not allow for a frequentist interpretation of a significance level anymore as $\alpha$ has to be fixed in advance, see e.g.~\citep{Schervish96}. Nevertheless, it might give for instance some indication whether to perform further experiments. Despite of this, for a fixed $\alpha$ the confidence statements of $\HSMUCE$ can also be used to support findings by other estimators. This is illustrated in Section \ref{sec:ion} for the ion channel application. 
\paragraph*{\textbf{Choice of $\beta_1,\ldots,\beta_{\sn}$}} 
As a default choice we recommend equal weights $\beta_1=\cdots=\beta_\sn=1/\sn$. This choice fulfils (together with many other choices) the conditions of the Theorems \ref{theorem:consistency} and \ref{theorem:confidencefunction}. Unlike as for the significance level $\alpha$ only the bound for the underestimation of the number of change-points depends on these weights. Note, that this gives the user the possibility to incorporate prior information on the scales without violating the overestimation control in Theorems \ref{theorem:boundoverestimate} and \ref{theorem:overestimationk}: If for instance changes are expected to occur only on small segments then the detection power on these scales can be increased if the first weights are chosen large and the other ones small (or even zero). In contrast, if the general signal to noise ratio is expected to be very small then it is nearly impossible to detect changes on small scales and larger scales should be weighted more to detect at least the changes on these scales. A quantitative influence of the weights on the detection power can be seen in the underestimation bound in Theorem \ref{theorem:boundunderestimatesharp} in Supplement \ref{sec:proofs} which is a refinement of Theorem \ref{theorem:boundunderestimate}. We also investigate such choices quantitatively in simulations in Section \ref{sec:simulations}. 

\section{Simulations}\label{sec:simulations}
In this section we compare $\HSMUCE$\footnote{\url{http://www.stochastik.math.uni-goettingen.de/hsmuce}, v. 0.0.0.9000, 2015-04-15} in simulations with $\CBS$ \citep{CBS07}, $\cumSeg$ \citep{MuggeoAdelfio11} and $\LOOVF$ \citep{ARLOT} as they are also designed to be robust against heterogeneous noise. Moreover, we include $\SMUCE$ \citep{SMUCE} in simulations with a constant variance as a benchmark to examine how much the detection power of $\HSMUCE$ decreases in this case, which may be regarded as the price for adaptation to heterogeneous noise. We fix the weights $\beta_1,\ldots,\beta_\sn=1/\sn$ and vary the significance level $\alpha$. A simulation with tuned weights can be found in Section \ref{sec:simprior} in the Supplement. For circular binary segmentation ($\CBS$) we call the function \textit{segmentByCBS}\footnote{\url{http://cran.r-project.org/web/packages/PSCBS/}, v. 0.40.4, 2014-02-04} with the standard parameters. 
For the cross-validation method $\LOOVF$ we use the Matlab function \textit{proc\_LOOVF}\footnote{\url{http://www.di.ens.fr/~arlot/code/CHPTCV.htm}, v. 1.0, 2010-10-27} with the parameter choice of the demo file. For $\cumSeg$ we call the method \textit{jumpoints}\footnote{\url{http://cran.r-project.org/web/packages/cumSeg/}, v. 1.1, 2011-10-14} with the parameter $k$ large enough such that the estimation is not influenced by this choice. For $\SMUCE$ we call the function \textit{smuceR}\footnote{\url{http://cran.r-project.org/web/packages/stepR}, v. 1.0-3, 2015-06-18} with the standard parameters, in particular the interval set of all intervals is used if $n\leq 1\,000$.\\
To avoid specific interactions between the signal and the dyadic partition we generate in each repetition a random pair $(\mu_R,\sigma_R^2)\in\mathcal{S}$ (all random variables are independent from each other).
\begin{enumerate}
\item[(a)] We fix the number of observations $n$, the number of change-points $K$, a constant $C$ and a minimum value for the smallest scale $\lambda_{\min}$.
\item[(b)] We draw the locations of the change-points $\tau_0:=0<\tau_1< \cdots <\tau_K<1=:\tau_{K+1}$ uniformly distributed with the restriction that $\lambda:=\min_{k=0,\ldots,K}{\vert \tau_{k+1}-\tau_{k}\vert} \geq \lambda_{\min}$.
\item[(c)] We choose the function values $\valsd_0,\ldots,\valsd_K$ of the standard deviation function $\sigma_R$ by $\valsd_k:=2^{U_k}$, where $U_0,\ldots,U_K$ are uniform distributed on $[-2,2]$.
\item[(d)] We determine the function values $\valmu_0,\ldots,\valmu_K$ of the signal $\mu_R$ such that
\begin{equation}\label{eq:C}
\vert \valmu_k - \valmu_{k-1}\vert = \sqrt{\frac{C}{n}\min\left(\frac{\tau_{k+1}-\tau_{k}}{\valsd_k^2}, \frac{\tau_k-\tau_{k-1}}{\valsd_{k-1}^2}\right)^{-1}}\ \forall\ k = 1,\ldots,K.
\end{equation}
Thereby, we start with $\valmu_0 = 0$ and choose randomly with probability $1/2$ whether the expectation increases or decreases. 
\end{enumerate}
By \eqref{eq:C} we provide a situation where all change-points are similarly hard to find, recall the minimax detection boundary from Section \ref{sec:rates}. An example has been displayed in Figure \ref{fig:example1000} in the introduction, where $\HSMUCE$ misses at $\alpha = 0.1$ one change-point and detects for $\alpha$ between $0.15$ and $0.99$ (only displayed for $\alpha = 0.3$ and $\alpha = 0.5$) the correct number of change-points. In Figure \ref{fig:example1000other} (Supplement \ref{sec:add}) we see that $\CBS$ \citep{CBS07} finds also all change-points, but detects further changes. Less good is the performance of $\cumSeg$ \citep{MuggeoAdelfio11} and $\LOOVF$ \citep{ARLOT} which both miss several changes and $\LOOVF$ adds also a false positive. We examine these methods now more extensively. All simulations are repeated $10\,000$ times.\\
In the following we report the difference between the estimated $\hat{K}$ and the true number $K$ of change-points as well as the mean of the absolute value of this difference. Additionally, we use the false positive sensitive location error
\begin{equation*}
\operatorname{FPSLE}=\frac{n}{2\hat{K}}\sum_{k=1}^{\hat{K}+1}{\vert \tau_{l_k-1} - \hat{\tau}_{k-1}\vert + \vert \tau_{l_k} - \hat{\tau}_{k}\vert},
\end{equation*}
with $l_k\in\{1,\ldots,K+1\}$ such that $(\hat{\tau}_{k-1}+\hat{\tau}_k)/2\in (\tau_{l_k-1},\tau_{l_k}]$, i.e. the left and right neighbouring change-points to the middle point of $(\hat{\tau}_{k-1},\hat{\tau}_k]$, and the false negative sensitive location error \begin{equation*}
\operatorname{FNSLE}=\frac{n}{2K}\sum_{k=1}^{K+1}{\vert \tau_{k-1}-\hat{\tau}_{l_k-1}\vert + \vert \tau_{k}-\hat{\tau}_{l_k}\vert},
\end{equation*}
with $l_k\in\{1,\ldots,\hat{K}+1\}$ such that $(\tau_{k-1}+\tau_k)/2\in (\hat{\tau}_{l_k-1},\hat{\tau}_{l_k}]$, see \citep[Section 3.1]{Futschiketal14}, to rate the estimation of the locations of the change-points. We also show the mean integrated squared (absolute) error $\operatorname{MISE}$ ($\operatorname{MIAE}$) for all methods.

\subsection{Simulation results}\label{sec:simresults}
In this section we discuss the results of the simulations for model \eqref{eq:model} and \eqref{eq:S}. We start in Table \ref{tab:fixed} (Supplement \ref{sec:add}) with the simple setting of a single change at the midpoint, where we vary the variances on the adjoining segments. In Table \ref{tab:constvar} (Supplement \ref{sec:add}) we display results for a constant variance and in Table \ref{tab:random10000} (Supplement \ref{sec:add}) for heterogeneous errors. We excluded $\LOOVF$ from simulations for larger $n$ due to its large computation time, confer the run time simulations in Section \ref{sec:computationtime} in the supplement.\\
All simulations confirm the overestimation control $\alpha$ for $\HSMUCE$ from Theorem \ref{theorem:boundoverestimate} and the exponential decay of the overestimation in Theorem \ref{theorem:overestimationk}. The simulations with a single change-point confirm that the size of the variance change has no influence, rather the size of the variances matters. We found that $\HSMUCE$ performs well compared to all other methods. A small $\alpha$ avoids overestimation, but risks to miss changes that are harder to detect. Thus, the comparison of the estimates of $\HSMUCE$ for different $\alpha$ shows in accordance with our theory that it is reasonable to relax $\alpha$ if changes are expected to be harder to detect (recall the discussion in Section \ref{sec:tuningparameters}). From the other methods $\cumSeg$ performs best in the easier and $\LOOVF$ in the difficult scenarios, whereby $\CBS$ and in particular $\LOOVF$ shows a tendency to overestimate the number of change-points.\\
For a constant signal (corresponding to $K=0$ in Table \ref{tab:constvar}) $\HSMUCE$ overestimates the number of change-points even slightly less than $\SMUCE$, whereas $\CBS$ and $\cumSeg$ overestimate hardly ever. In the case of a constant variance we found that the detection power of $\HSMUCE$ is only slightly worse than $\SMUCE$ for $K=2$, although $\SMUCE$ used instead of the dyadic partition $\mathcal{D}$ the system of all intervals. The difference is larger for $K=10$ and in this case also $\CBS$ and $\cumSeg$ performs better than $\HSMUCE$, since the detection power of $\HSMUCE$ depends strongly on the lengths of the constant segments. Moreover, $\lambda_{\min}$ plays a similar role as the number of change-points $K$, since the average constant segments length decreases if $\lambda_{\min}$ decreases or $K$ increases. Worse results for smaller lengths are due to the family\-wise error control $\alpha$ of $\HSMUCE$ as it guarantees a strict control of overestimating the number of change-points.\\
Similar results can be observed for $n=100$ with heterogeneous errors. $\CBS$ performs better than $\cumSeg$ and $\LOOVF$, and in particular better than in the single change-point setting. $\CBS$ outperforms $\HSMUCE$ for $K=5$, although $\HSMUCE$ has a much smaller tendency to overestimate the number of change-points, whereas in particular $\CBS$ and $\LOOVF$ tend to overestimation. This can also be seen for the $\operatorname{MISE}$ and $\operatorname{MIAE}$ as these measures are much more affected by underestimation than by overestimation. These findings are also supported by the $\operatorname{FPSLE}$ and the $\operatorname{FNSLE}$, the $\operatorname{FPSLE}$ is heavily affected by overestimation, whereas the $\operatorname{FNSLE}$ is larger in case of underestimation.\\ 
In all simulations with heterogeneous errors and $1\,000$ observations $\HSMUCE$ outperforms the other methods, for $10\,000$ observations this becomes even more pronounced. In comparison to the simulation with $100$ observations the tendency of $\CBS$ to overestimate the number of change-points becomes then also more prominent. Finally, in further simulations (not displayed) we found that the detection power of all methods decreases for smaller $C$ in (d), but all results remain qualitatively the same. All in all, we found that $\HSMUCE$ performs well as sample size becomes larger, in particular if the constant segments are not too short as indicated by assumption \eqref{conditionmultiscalesignalsrate} in Theorem \ref{theorem:multiscalesignalsrate}.\\
A comparison of Table \ref{tab:random10000} and \ref{tab:beta10000} (Supplement \ref{sec:add}) shows that tuned weights increase the detection power of $\HSMUCE$ for all significance levels, so we encourage the user to adapt the weights if prior information on the scales where changes occur is available. Details how the weights are chosen can be found in Section \ref{sec:simprior} in the supplement.

\subsection{Robustness against model violations}\label{sec:simrobust}
We begin by investigating how robust the methods are against a violation of the assumption that the standard deviation changes only at the same locations as the mean changes. We consider continuous changes as well as abrupt changes. The exact functions for the standard deviation can be seen in Figure \ref{fig:std} (Supplement \ref{sec:add}). In Table \ref{tab:robustvar} (Supplement \ref{sec:add}) we see that $\HSMUCE$ and $\CBS$ perform very robust against heterogeneous noise on the constant segments, whereas, remarkably, the detection power of $\cumSeg$ is even improved. Moreover, in additional simulations (not displayed) with less observations we found that $\LOOVF$ is very robust, too.\\
Moreover, we examine robustness against small periodic trends in the mean in simulations similar to those in \citep{CBS04}, also adapted to the inhomogeneous variance. The exact simulation setting can be found in Section \ref{sec:addrobustness} (Supplement B). We obtain from Table \ref{tab:robustnesstrend} that $\HSMUCE$ shows similar results for small trends compared to the simulation without trend for small trends, but is affected by larger trends, in particular if these are not scaled by the standard deviation. $\CBS$ overestimates heavily in all cases, whereas $\cumSeg$ (although not affected by the trend) shows over- and underestimation.\\
Furthermore, we investigate robustness against heavy tails of the error distribution. In Table \ref{tab:robustnonnormal} (Supplement \ref{sec:add}) we consider $t_3$-distributed errors which are scaled such that the expectation and the standard deviation are the same as in Section \ref{sec:simresults}. As expected $\SMUCE$ is not robust against heavy tails (as it misinterprets extreme values as a change in the signal, whereas $\HSMUCE$ provides reasonable results. In comparison to gaussian errors $\HSMUCE$ is not influenced for $K=0$, underestimation is more distinct in the constant variance scenario and detection power is even increased in the scenario with heterogeneous errors. In comparison, $\CBS$ is not influenced for $K=0$, too, underestimates and overestimates in the constant variance scenario and is slightly worse with a tendency to underestimation in the scenario with heterogeneous errors, whereas $\cumSeg$ overestimates rarely, but heavily for $K=0$, underestimates and overestimates in the constant variance scenario and is robust in the last scenario.\\
In summary, $\HSMUCE$ seems to be robust against a wide range of variance changes on constant segments and seems to be only slightly affected by larger tails than gaussian, in particular no tendency to overestimation was visible in our simulations. This may be explained by the fact that the local likelihood tests of $\HSMUCE$ are quite robust against heterogeneous noise, see for instance \citep{{BakirovandSzekely05},{IbragimovandMueller10}}, and against non-normal errors, see \citep{LehmannRomano05} and the references therein. Unlike the number of change-points, the locations are sometimes miss-estimated, since the restricted maximum likelihood estimator is influenced by changes of the variance. Instead, more robust estimators, for instance local median and MAD estimators, could be used.

\section{Application to ion channel recordings}\label{sec:ion}
In this section we apply $\HSMUCE$ to current recordings of a porin in planar lipid bilayers performed in the Steinem lab (Institute of Organic and Biomolecular Chemistry, University of G{\"o}ttingen). Porins are $\beta$-barrel proteins present in the outer membrane of bacteria and in the outer mitochondrial membrane of eukaryotes \citep{Benz94,Schirmer98}. Due to their large pore diameter they enable passive diffusion of small solutes like ions or sugars. The partial blockade of the pore by an internal loop results in gating that can be detected using the voltage clamp technique \citep{SakmannNeher95}. We aim to detect the gating automatically, since in many ion channel applications hundred or more datasets each with several hundredthousands data points have to be analysed. For noise reduction the data was automatically preprocessed in the amplifier with an analogue four-pole Bessel low-pass filter of 1 kHz. Hence, the noise is coloured, but the correlation is less than $10^{-3}$ if the trace is subsampled by eleven or more observations, see \citep[(6)]{Hotzetal13}, which has been done in the following. Finally, we apply to $882$ subsampled observations $\HSMUCE$, $\CBS$, $\cumSeg$ and $\LOOVF$.

\begin{figure}[!htp]
\centering
\begin{subfigure}{\textwidth}
\includegraphics[width=0.98\textwidth]{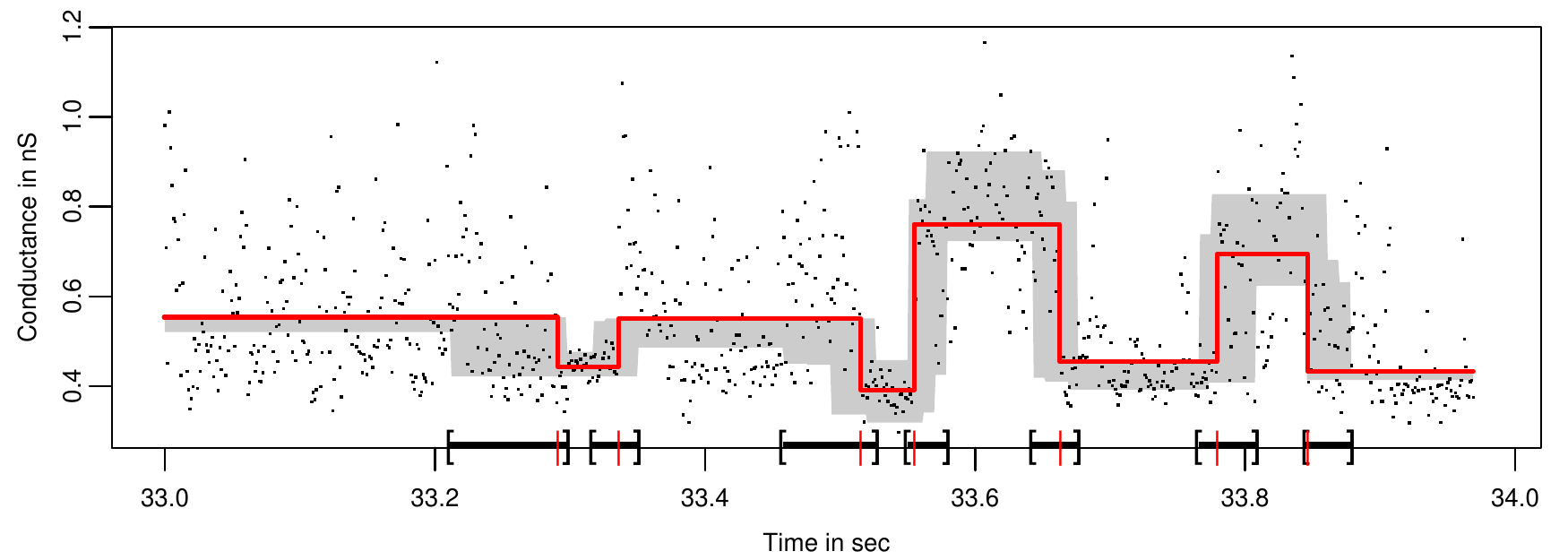}
\subcaption{$\alpha = 0.05$.}
\label{subfig:ion2}
\end{subfigure}

\begin{subfigure}{\textwidth}
\includegraphics[width=0.98\textwidth]{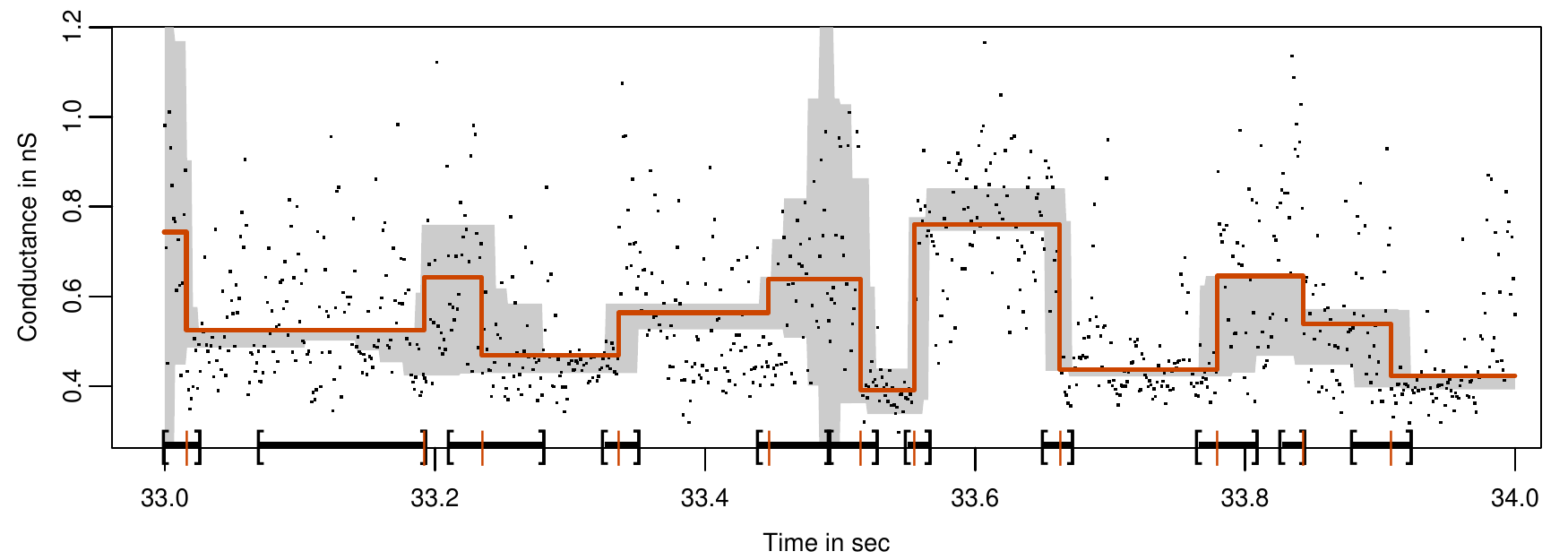}
\subcaption{$\alpha = 0.5$.}
\label{subfig:ionalpha}
\end{subfigure}

\begin{subfigure}{\textwidth}
\includegraphics[width=0.98\textwidth]{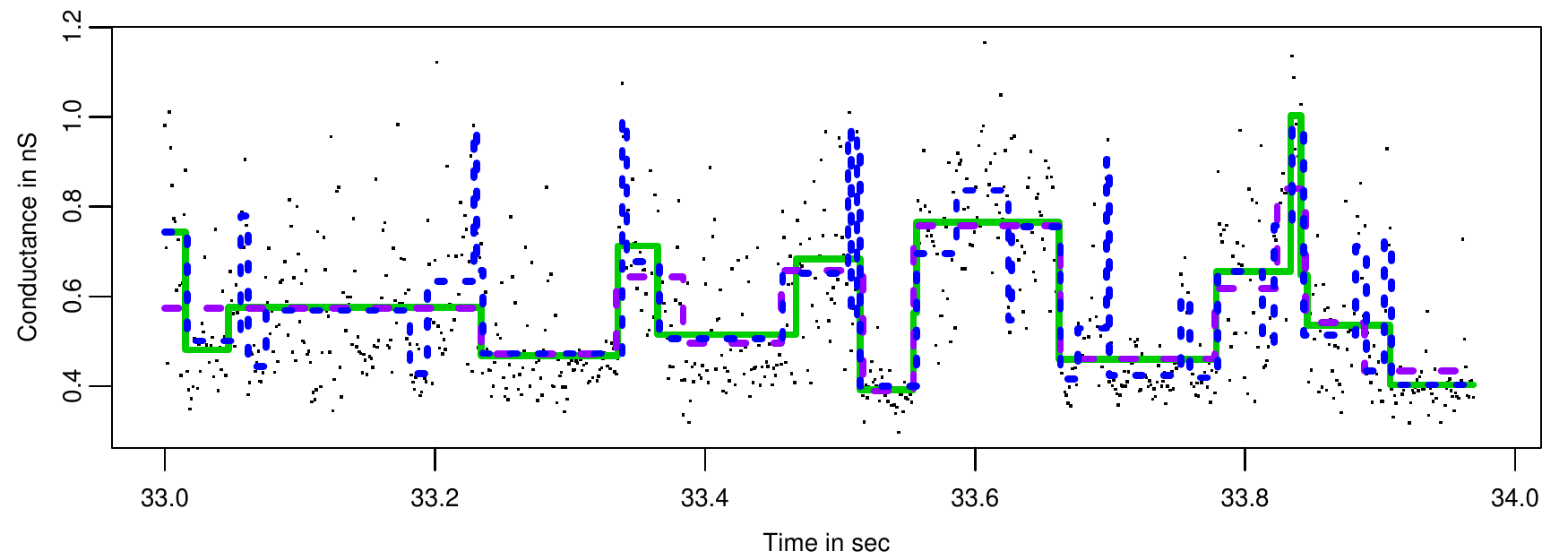}
\subcaption{Estimates by $\CBS$ (green), $\cumSeg$ (purple) and $\LOOVF$ (blue).}
\label{subfig:other2}
\end{subfigure}

\caption{\subref{subfig:ion2}, \subref{subfig:ionalpha}: Subsampled observations (black points) together with the confidence band (grey), the confidence intervals for the change-point locations (brackets and thick lines), the estimated change-points locations (red dashes) and the estimate (red line) by $\HSMUCE$ at different $\alpha$. \subref{subfig:other2}: other estimates.}
\label{fig:ion2all}

\end{figure}

In Figure \ref{subfig:ion2} we see that the signal fluctuates around two or more levels, the so called open (higher conductivity, larger current measurements) and closed (lower conductivity, smaller current measurements) states. Moreover, the variance in the open states is larger than in the closed state, a well known phenomenon denoted as open channel noise (\citet[Section 3.4.4]{SakmannNeher95} and the references therein) which arises for larger ion channels such as porins from conformational fluctuations in the channel protein \citep{Sigworth85}. Due to the pronounced heterogeneity in the variance, methods which assume a constant variance fail to reconstruct the gating, see Figure \ref{fig:example} in the introduction for an illustration. In contrast, $\HSMUCE$ at $\alpha=0.05$ provides a reasonable fit that covers the main features of the data. Additional smaller changes are found by $\CBS$, $\cumSeg$ and $\LOOVF$, see Figure \ref{subfig:other2}. These changes might be explained by some uncontrollable base line fluctuations caused for instance by small holes in the membrane due to movements of the lipids. On the other hand, we found in the simulations, see Table \ref{tab:random10000} and the example in Figure \ref{fig:example1000other} (both Supplement \ref{sec:add}), that $\CBS$ and $\LOOVF$ tend to include small artificial changes, whereas we saw in Table \ref{tab:robustnesstrend} (Supplement \ref{sec:add}) that $\HSMUCE$ is quite robust against small periodic trends in the signal. For illustrative purposes, in order to examine these changes further we increase in Figure \ref{subfig:ionalpha} the significance level $\alpha=0.5$ and detect for instance changes around 33.0s and 33.2s, too. Taking also the confidence regions of $\HSMUCE$ into account confirms several changes with high "significance" (e.g. the reconstruction of $\CBS$ between 33.5 and 33.8) and further changes with less "significance" (e.g. the changes around 33.0s and 33.2s). Other changes could not be confirmed by $\HSMUCE$ at any reasonable significance level (e.g. the peaks of $\CBS$ and $\LOOVF$ at 33.85). In this spirit $\HSMUCE$ can always be used to accompany any segmentation method to help to identify its significant changes. Recall, that of course a frequentist statistical error control is only given when $\alpha$ is fixed in advance.

\section*{Acknowledgement}
Support of DFG CRC 803 - Z2 and FOR 916 - B3 is gratefully acknowledged. We are grateful to various comments by two referees and an associate editor which lead to a significant improvement of the paper. We thank A. Bartsch, O. Sch{\"u}tte and C. Steinem (Institute of Organic and Biomolecular Chemistry, University of G{\"o}ttingen) for providing the ion channel data and helpful discussions. We also thank T. Aspelmeier, M. Behr, A. Hartmann, H. Li and I. Tecuapetla for helpful comments to the paper and the R-package.

\newpage
\section*{Supplement to\\ Heterogeneous Change Point Inference\\ by Florian Pein, Hannes Sieling and Axel Munk}
\appendix

\section{Computation}\label{sec:computation}
In this section we detail the computation of the estimator $\HSMUCE$ (Section \ref{sec:comphsmuce}) and of the critical values $q_1,\ldots,q_\sn$ (Section \ref{sec:detcriticalvalues}). We also examine the computation time (Section \ref{sec:computationtime}) theoretically and empirically. An R-package is available online\footnote{\url{http://www.stochastik.math.uni-goettingen.de/hsmuce}}.

\subsection{Computation of the estimator}\label{sec:comphsmuce}
First of all, we obtain from the multiscale test the bounds
\begin{equation}\label{eq:bounds}
[\underline{b}_{i,j},\overline{b}_{i,j}]:=\left[\overline{Y}_{ij} - \sqrt{\frac{q_{ij}\hat{\valsd}^2_{ij}}{j-i+1}},\overline{Y}_{ij} + \sqrt{\frac{q_{ij}\hat{\valsd}^2_{ij}}{j-i+1}}\right]
\end{equation}
for $\mu$ on the interval $[i/n,j/n]\in\mathcal{D}$. Therefore, $\HSMUCE$ can be computed as in \citep[Section 3]{SMUCE} for $\SMUCE$ described. However, in what follows  we give a modification of the algorithm which reduces the computation time remarkably due to the small number of intervals $\OO(n)$ in the dyadic partition $\mathcal{D}$. Here, we compute first left and right limits for the location of the change-points and then start the dynamic program restricted to these intervals. A notable difference to \citep{{PELT},{SMUCE}} is that this approach leads also to pruning in the forward step of the dynamic program. More precisely, we define the intersected bounds as
\begin{equation*}
\underline{B}_{i,j}:=\max_{\substack{i \leq s < t \leq j\\ [s/n,t/n]\in\mathcal{D}}}{\underline{b}_{s,t}} \text{ and } \overline{B}_{i,j}:=\min_{\substack{i \leq s < t \leq j\\ [s/n,t/n]\in\mathcal{D}}}{\overline{b}_{s,t}}
\end{equation*}
and set recursively
\begin{equation*}
L_k:=\min\left\{1 < r \leq L_{k+1}-1\ :\ \underline{B}_{r,L_{k+1}-1}\leq \overline{B}_{r,L_{k+1}-1}\right\},
\end{equation*}
for $k=\hat{K},\ldots,1$, with $L_{\hat{K}+1}:=n+1$. The right limits are defined as
\begin{equation*}
R_k:=\min\left\{R_{k-1} < r \leq n\ :\ \underline{B}_{R_{k-1},r}> \overline{B}_{R_{k-1},r}\right\},
\end{equation*}
for $k=1,\ldots,\hat{K}$, with $R_0:=1$. In other words, the left limit for the $k$-th change-point $L_k$ is the smallest number $1 < r \leq n$ such that between $Y_r$ and $Y_n$ a piecewise constant solution with $\hat{K}-k$ change-points exists which respects the bounds \eqref{eq:bounds}. Analogously, the right limit $R_k$ is the smallest number $1 < r \leq n$ such that between $Y_1$ and $Y_r$ no piecewise constant solution with $k-1$ change-points exists which fulfils the bounds \eqref{eq:bounds}. Note, that we do not have to compute the right limits separately, since we can just start the dynamic program at $L_k$ and stop if another change-point has to be included. It follows that the $k$-th change-point $\hat{\tau}_k$ has to be in the confidence interval $[L_k/n,R_k/n]$, since otherwise an additional change-point would be necessary to fulfil the multiscale constraints.

\subsection{Computation of the critical values}\label{sec:detcriticalvalues}
In this section we show how the critical values can be computed by Monte-Carlo simulations. Note first that the following method uses only the continuity and the monotonicity of the cumulative distribution functions of the statistics $T_1,\ldots,T_\sn$ and therefore the methodology can also be used for other multiscale tests, see for instance the extension to other interval sets in Remark \ref{remark:otherintervalsets}.\\
Let $M$ be the number of simulations and $(T_{1,1},\ldots,T_{\sn,1}),\ldots,(T_{1,M},\ldots,T_{\sn,M})$ be i.i.d. copies of the vector $(T_1,\ldots,T_\sn)$. Moreover, we denote by $F_M(\cdot)$ the empirical distribution function of $(T_1,\ldots,T_\sn)$ and by $F_{M,k}(\cdot)$ the empirical distribution function of the random variable $T_k$. Then, we aim to find a vector of critical values $\widehat{\textbf{q}}_M=(\widehat{q}_{M,1},\ldots,\widehat{q}_{M,\sn})$ which satisfies with
\begin{equation}\label{eq:empsignificancelevel}
\alpha-\frac{1}{M}< 1-F_M\left(\widehat{\textbf{q}}_M\right)\leq \alpha,
\end{equation}
an empirical version of condition \eqref{eq:significancelevel}, and with
\begin{equation}\label{eq:empbalancing}
\frac{1-F_{M,j_1}(\widehat{q}_{M,j_1})}{\beta_{j_1}}\leq\frac{1-F_{M,j_2}(\widehat{q}_{M,j_2})+\frac{1}{M}}{\beta_{j_2}}\quad \text{for all}\quad j_1,j_2\in\{1,\ldots,\sn\},
\end{equation}
an empirical version of condition \eqref{eq:balancing}. In the following we propose an iterative method to determine such a vector and show afterwards that this vector converges almost surely to the vector of critical values defined by \eqref{eq:significancelevel} and \eqref{eq:balancing}. As the $k$-th entry of the starting vector we choose the empirical $(1-\alpha\beta_k)$-quantile of the statistic $T_k$, since the vector with these values satisfies condition \eqref{eq:empbalancing} and the inequality
\begin{equation*}
1-F_M\left(\cdot\right)\leq\alpha.
\end{equation*}
Afterwards, we reduce the entries until the lower bound from condition \eqref{eq:empsignificancelevel} is satisfied, too. To ensure condition \eqref{eq:empbalancing} in every iteration, we always reduce the entry which has the smallest ratio 
\begin{equation*}
\frac{1-F_{M,k}(\widehat{q}_{M,k})}{\beta_{k}}.
\end{equation*}
In Algorithm \ref{alg:balancingq} the determination of the critical values is summarized in pseudocode.
\begin{algorithm}[ht]
\caption{Determination of the critical values.}
\label{alg:balancingq}
\begin{algorithmic}[1]
\Require The statistics $T_1,\ldots,T_\sn$ as well as the significance level $\alpha\in(0,1)$, the weights $\beta_1,\ldots,\beta_\sn>0$, with $\sum_{k=1}^{\sn}{\beta_k}=1$, and the number of simulations $M\in\N$.
\Ensure The vector of critical values $\widehat{\textbf{q}}_M=(\widehat{q}_{M,1},\ldots,\widehat{q}_{M,\sn})$ which fulfils the conditions \eqref{eq:empsignificancelevel} and \eqref{eq:empbalancing}.
\For{$i=1,\ldots,M$}
\State $(T_{1,i},\ldots,T_{\sn,i}) \leftarrow \text{ realisation of } (T_1,\ldots,T_\sn)$
\EndFor
\For{$k=1,\ldots,\sn$}
\State $(S_{k,1},\ldots,S_{k,M}) \leftarrow \operatorname{sort}\left((T_{k,1},\ldots,T_{k,M})\right)$
\State $w_k \leftarrow M-\lfloor \alpha\beta_k M\rfloor$
\EndFor
\Repeat
\State $\hat{k}\leftarrow \argmin_{k=1,\ldots,\sn}{\beta_k^{-1}\left(1-F_{M,k}(S_{k,w_k})\right)}$
\State $w_{\hat{k}}\leftarrow w_{\hat{k}} - 1$
\Until{$1-F_M\left(S_{1,w_1},\ldots,S_{m,w_\sn}\right) > \alpha$}
\State $w_{\hat{k}}\leftarrow w_{\hat{k}} + 1$
\Return $S_{1,w_1},\ldots,S_{m,w_\sn}$
\end{algorithmic}
\end{algorithm}

The method has the advantage that we do not need specific assumptions on the distribution of the vector $(T_1,\ldots,T_\sn)$ and still get critical values which are adapted to the exact finite sample distribution of $(T_1,\ldots,T_\sn)$ and ensure therefore even for a finite number of observations the significance level $\alpha$.\\
The following theorem shows the convergence of this algorithm to $\textbf{q}=(q_1,\ldots,q_\sn)$.
\begin{Theorem}[Consitency of Monte-Carlo critical values]\label{theorem:convergencecritval}
The empirical vector of critical values $\widehat{\textbf{q}}_M=(\widehat{q}_{M,1},\ldots,\widehat{q}_{M,\sn})$ converges almost surely in the number of simulations $M$ to the vector of critical values $\textbf{q}=(q_1,\ldots,q_\sn)$ defined by \eqref{eq:significancelevel} and \eqref{eq:balancing}.
\end{Theorem}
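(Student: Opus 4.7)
My plan is to use a subsequence argument combined with the multivariate Glivenko--Cantelli theorem and the uniqueness statement from Lemma \ref{lemma:uniqunesscritval}. The strategy rests on three facts: (i) the joint cumulative distribution function $F$ and each marginal $F_k$ are continuous, since $T_k$ is a continuous random variable (maximum of finitely many continuous statistics with densities), so Glivenko--Cantelli yields $\sup_{\mathbf{x}\in\R^\sn}|F_M(\mathbf{x})-F(\mathbf{x})|\to 0$ and $\sup_{x\in\R}|F_{M,k}(x)-F_k(x)|\to 0$ almost surely; (ii) the empirical conditions \eqref{eq:empsignificancelevel} and \eqref{eq:empbalancing} are $1/M$-perturbations of the defining equations \eqref{eq:significancelevel} and \eqref{eq:balancing}; (iii) the vector $\mathbf{q}$ is uniquely characterised by those equations.

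First I would establish almost sure tightness of $(\widehat{\mathbf{q}}_M)_M$. Each coordinate is non-negative since the statistics $T_k$ are non-negative, so a lower bound is trivial. For the upper bound I would argue by contradiction: if along some subsequence $\widehat{q}_{M_l,j}\to\infty$ for some index $j$, then the empirical balancing condition \eqref{eq:empbalancing} forces $1-F_{M_l,k}(\widehat{q}_{M_l,k})\to 0$ for every $k$ (up to the vanishing $1/M$ slack), which via the obvious inequality $1-F_{M_l}(\widehat{\mathbf{q}}_{M_l})\le \sum_k(1-F_{M_l,k}(\widehat{q}_{M_l,k}))$ would give $1-F_{M_l}(\widehat{\mathbf{q}}_{M_l})\to 0$, contradicting $1-F_{M_l}(\widehat{\mathbf{q}}_{M_l})>\alpha-1/M_l$. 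Hence the sequence is a.s. bounded in $\R_+^\sn$.

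Given tightness, I would take an arbitrary a.s.\ convergent subsequence $\widehat{\mathbf{q}}_{M_l}\to\mathbf{q}^\ast\in\R_+^\sn$ and pass to the limit in \eqref{eq:empsignificancelevel}--\eqref{eq:empbalancing}. By continuity of $F$ together with the uniform convergence $F_{M_l}\to F$, one obtains $F_{M_l}(\widehat{\mathbf{q}}_{M_l})\to F(\mathbf{q}^\ast)$; similarly $F_{M_l,k}(\widehat{q}_{M_l,k})\to F_k(q_k^\ast)$. Letting $M_l\to\infty$ in \eqref{eq:empsignificancelevel} yields $1-F(\mathbf{q}^\ast)=\alpha$, and in \eqref{eq:empbalancing} (for every pair $j_1,j_2$) yields the chain of equalities \eqref{eq:balancing}. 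By Lemma \ref{lemma:uniqunesscritval} the only such vector in $\R_+^\sn$ is $\mathbf{q}$, so $\mathbf{q}^\ast=\mathbf{q}$. Since every subsequence has a further subsequence with the same almost sure limit $\mathbf{q}$, the full sequence converges almost surely.

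The main technical obstacle is the tightness step, because one must exploit both empirical conditions simultaneously: the balancing condition alone does not preclude a coordinate from drifting to $\infty$, and the significance condition alone does not preclude a coordinate from collapsing to $0$ (compensated by another coordinate growing). A secondary care point is that the multivariate Glivenko--Cantelli needs to be invoked in the right form; the cleanest way is to note that, by continuity of $F$, uniform convergence of $F_M$ on any fixed compact set suffices, and then combine with the tightness from the previous step so that evaluation of $F_M$ at the (random) points $\widehat{\mathbf{q}}_{M_l}$ can be replaced by evaluation on a fixed compactum. Once these two points are handled, the subsequential passage to the limit and the appeal to uniqueness are routine.
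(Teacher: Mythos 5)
Your argument is correct, but it follows a genuinely different route from the paper. The paper casts $\textbf{q}$ as a Z-estimator: it defines the criterion $\Psi(\theta):=\vert F(\theta)-(1-\alpha)\vert+\sum_{k=2}^{\sn}\vert\beta_1^{-1}(1-F_1(\theta_1))-\beta_k^{-1}(1-F_k(\theta_k))\vert$ and its empirical analogue $\Psi_M$, observes that \eqref{eq:empsignificancelevel} and \eqref{eq:empbalancing} force $\Psi_M(\widehat{\textbf{q}}_M)\leq M^{-1}(1+(\sn-1)/\min_k\beta_k)=\oo(1)$ almost surely, invokes the uniform a.s.\ convergence $F_M\to F$ and $F_{M,k}\to F_k$ (Glivenko--Cantelli) together with the uniqueness from Lemma \ref{lemma:uniqunesscritval}, and concludes via the standard consistency theorem for Z-estimators (\citealp[Theorem 5.9]{vanderVaart07}). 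You instead run a direct tightness-plus-subsequence argument: rule out escape to infinity by playing the balancing inequality \eqref{eq:empbalancing} off against the level condition \eqref{eq:empsignificancelevel} via the union bound $1-F_M(\widehat{\textbf{q}}_M)\le\sum_k(1-F_{M,k}(\widehat{q}_{M,k}))$, then pass to the limit along convergent subsequences and appeal to uniqueness. What the paper's route buys is brevity, since the black-box theorem absorbs the bookkeeping; what your route buys is that it makes explicit precisely the point the cited theorem requires but the paper does not verify in detail, namely the well-separatedness of the zero of $\Psi$ on the non-compact parameter set $[0,\infty)^{\sn}$ (plain uniqueness does not by itself preclude $\Psi$ from decaying to zero along a sequence escaping to infinity); your tightness step is exactly the coercivity argument that closes this. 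Two small points you should make explicit when writing it up: the limit point $\textbf{q}^\ast$ of a subsequence must have all coordinates strictly positive before Lemma \ref{lemma:uniqunesscritval} applies, which follows since $q_k^\ast=0$ would give $F_k(q_k^\ast)=0$ and hence $1-F(\textbf{q}^\ast)\geq 1-F_k(q_k^\ast)=1>\alpha$; and the two one-sided inequalities in \eqref{eq:empbalancing} (for the ordered pairs $(j_1,j_2)$ and $(j_2,j_1)$) are both needed to recover the exact equalities \eqref{eq:balancing} in the limit.
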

The computation time is dominated by the generation of the $M$ i.i.d. copies of the vector $(T_1,\ldots,T_\sn)$. Therefore, we store the generated realizations and recycle them. To avoid memory problems we only store the realizations for every dyadic number, because the significance level $\alpha$ is still satisfied if we determine the critical values based on realizations with a larger number of observations, since then the maxima in $(T_1,\ldots,T_\sn)$ are taken over more intervals. To this end, the choice $M=10\,000$ seems to be a good trade-off between computation time and approximation accuracy. 

\subsection{Computation time}\label{sec:computationtime}
In this section we discuss the theoretical computation time of $\HSMUCE$ and compare it later in simulations with $\CBS$, $\cumSeg$ and $\LOOVF$. We stress that the computation time for the bounds, for the limits $L_1,\ldots,L_{\hat{K}}$ (and so for $\hat{K}$) and for the optimization problem \eqref{eq:optproblemhsmuce}, and therefore of all confidence sets, is always $\OO(n)$. Hence, the computation time is dominated by the determination of the restricted maximum likelihood estimator by dynamic programming. 
\begin{Lemma}[Computation time]\label{lemma:computationtime}
The algorithm has data depended computation time 
\begin{equation}\label{eq:computationtime}
\OO\left(n+\sum_{k=1}^{\hat{K}-1}{(R_k-L_k+1)(R_{k+1}-L_{k+1}+1)}\right).
\end{equation}
\end{Lemma}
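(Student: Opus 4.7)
The plan is to decompose the algorithm of Section~\ref{sec:comphsmuce} into three stages and add up their costs.

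First, I would bound the size of the dyadic partition: $|\mathcal{D}|=\sum_{k=1}^{\sn}\lfloor n/2^k\rfloor\le n$. Precomputing the cumulative sums $\sum_{l=1}^m Y_l$ and $\sum_{l=1}^m Y_l^2$ in $O(n)$ allows each $\overline{Y}_{ij}$ and $\hat{s}^2_{ij}$, and hence every interval bound $[\underline{b}_{i,j},\overline{b}_{i,j}]$ in \eqref{eq:bounds}, to be evaluated in $O(1)$. Since Lemma~\ref{lemma:boundcritval} and the Monte-Carlo routine of Section~\ref{sec:detcriticalvalues} give the critical values as a precomputed table, this stage costs $O(n)$ in total.

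Second, I would show that the limits $L_1,\ldots,L_{\hat{K}}$ and $R_1,\ldots,R_{\hat{K}}$ (together with $\hat{K}$ itself) can be produced in $O(n)$. The recursive definitions of $L_k$ and $R_k$ can be evaluated by a single backward, respectively forward, sweep in which the intersected extrema $\underline{B}_{\cdot,\cdot}$ and $\overline{B}_{\cdot,\cdot}$ are maintained by updating only the $O(\log n)$ dyadic intervals whose right (resp.~left) endpoint changes when $r$ advances by one; amortized over the whole sweep this gives $O(n)$ and accounts for the $+n$ term in \eqref{eq:computationtime}.

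Third, and this is the dominant stage, I would analyse the restricted dynamic program that produces the constrained MLE. For each $k=1,\ldots,\hat{K}-1$, the candidate positions of $\hat{\tau}_k$ and $\hat{\tau}_{k+1}$ lie in $[L_k,R_k]$ and $[L_{k+1},R_{k+1}]$, so the $k$-th update examines at most $(R_k-L_k+1)(R_{k+1}-L_{k+1}+1)$ pairs $(r_k,r_{k+1})$. For each such pair the DP must (i) decide whether $[r_k+1,r_{k+1}]$ is feasible as a constant segment, and (ii) combine the local least-squares value with the previous DP cell. Step~(ii) is $O(1)$ using the partial sums; feasibility in~(i) amounts to checking $\underline{B}_{r_k+1,r_{k+1}}\le\overline{B}_{r_k+1,r_{k+1}}$, and by scanning $r_{k+1}$ from $L_{k+1}$ to $R_{k+1}$ for fixed $r_k$ these intersected bounds are maintained incrementally in $O(1)$ per step, exactly as in the sweep for the $L_k,R_k$. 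Summing over $k$ yields the bound \eqref{eq:computationtime}.

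The main technical obstacle I anticipate is the amortized $O(1)$ per-pair claim in (i): a generic interval $[r_k+1,r_{k+1}]$ need not be dyadic and in principle overlaps $O(\log n)$ intervals of $\mathcal{D}$, so a naive bound would introduce a spurious $\log n$ factor. This I would resolve as in the second stage by processing the pairs in a scan order that only adds or removes one dyadic atom at a time, so each elementary update to $\underline{B}$ and $\overline{B}$ is charged to a single pair, preserving the stated complexity.
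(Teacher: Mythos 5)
Your proposal is correct and follows essentially the same route as the paper's proof: precomputed cumulative sums give each bound in $\OO(1)$, the limits $L_k$ (and hence $\hat K$) come out of a linear sweep, and the dynamic program contributes the product term because each candidate in $[L_{k+1},R_{k+1}]$ is matched against each candidate in $[L_k,R_k]$ at $\OO(1)$ cost per pair. You are somewhat more explicit than the paper about why maintaining the intersected bounds $\underline{B},\overline{B}$ does not incur a spurious $\log n$ factor (the paper only remarks that they "can be computed iteratively"), but this is a refinement of the same argument rather than a different one.
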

This can be bounded by $\OO(n^2)$ in the worst case, but the computation time is in many cases much smaller. In particular, if the signal to noise ratios are large enough such that the change-points are easy to detect, i.e. $R_k-L_k$ is small. This is for instance the case for a fixed signal, where $R_k-L_k$ stays more or less constant. More precisely, by combining \eqref{eq:computationtime} with equation \eqref{eq:confidenceinterval} we see that with probability tending to one the computation time of $\HSMUCE$ is even linear, if $\alpha_n \to 0$, but $n^{-\frac{1}{2}}\log((\alpha_n\beta_{k_n,n})^ {-1})\to 0$. In comparison to the computation time of $\SMUCE$, see \citep[(4.3)]{Sieling14}, which is dominated by the term 
\begin{equation*}
\OO\left(\sum_{k=1}^{\hat{K}-1}{(R_k-R_{k-1})(R_{k+1}-R_{k})}\right),
\end{equation*}
we see that the computation time is further reduced. In particular, if no change-point is present the computation time is $\OO(n)$ instead of $\OO(n^2)$. The computation time is also $\OO(n)$ if the number of change-points increases linear in the number of observations and the change-points are evenly enough distributed.\\
In the following we examine the computation time empirically in a similar simulation study as in \citep{DiscussionMaidstonePickering}. More precisely, we generate data with varying number of observations $n$ and equidistant change-points. Thereby, we consider $K = 10$, $K = \sqrt{n}$ and $K = n/100$. In all scenarios we choose the values of the mean and the standard deviation function randomly like in Section \ref{sec:simulations}, once again with $C=200$. All simulations are repeated $100$ times and terminated after ten seconds. The simulations were performed on a single core system with $1.8$ GHz and $8$ GB RAM in a $64$-bit OS.\\
We fix the significance level $\alpha = 0.1$ as well as the weights $\beta_1=\cdots=\beta_\sn=1/\sn$ and compare $\HSMUCE$ with $\CBS$, $\LOOVF$ and $\cumSeg$. Note, that we restore the Monte-Carlo simulations at the first use to reduce further loading times, here we only take the already restored simulations into account. Furthermore, we set for $\cumSeg$ the maximal number of change-points $k = \max(2K,10)$, since for the default parameter $k=\min(30,n/10)$ the program requires manual increase of $k$ for many simulations runs. Note, that the choice above already incorporates prior knowledge about the true signal. We stress (not displayed) that the computation time (and the required memory space) increases severely in the parameter $k$.

\setcounter{figure}{7}

\begin{figure}[!htp]
\centering
\begin{subfigure}{\textwidth}
\includegraphics[width=0.98\textwidth]{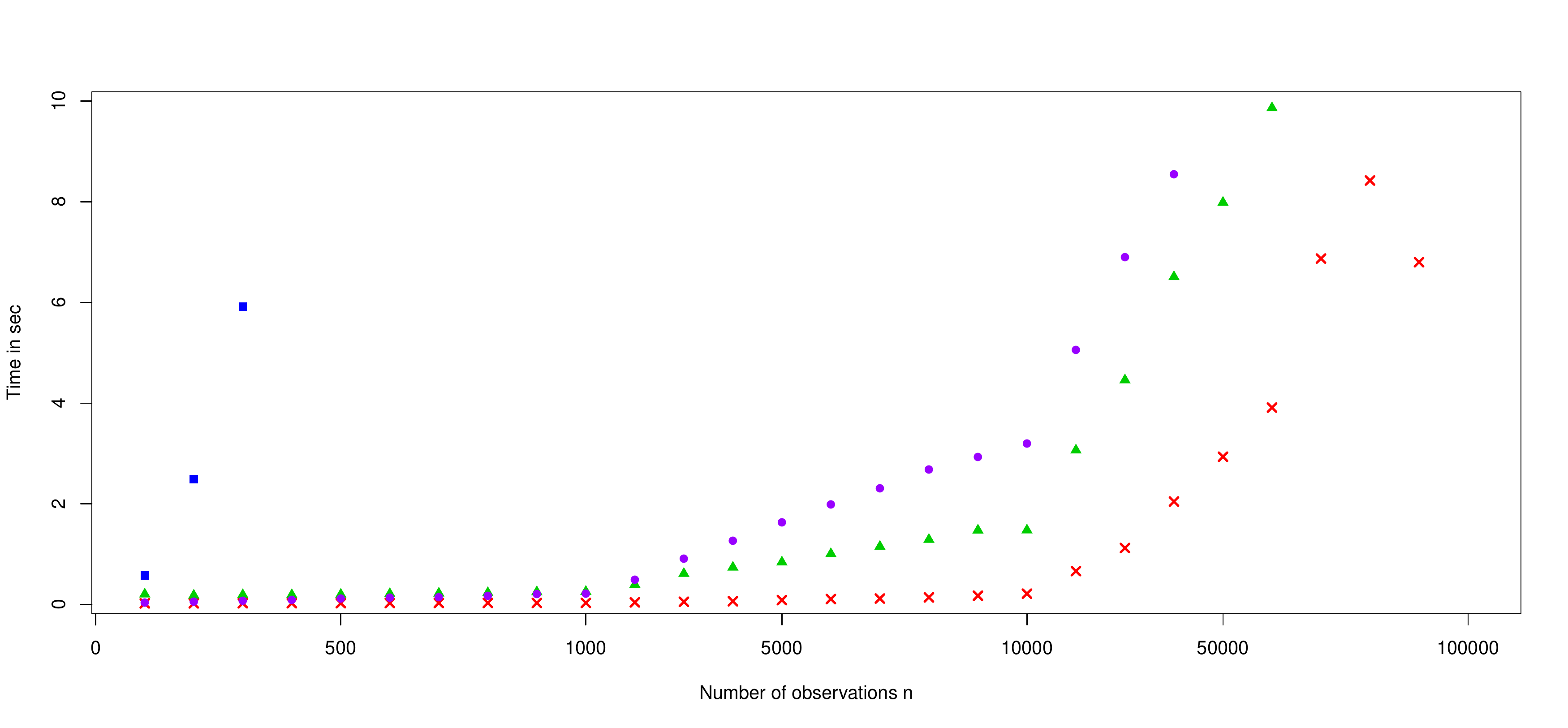}
\subcaption{$K = 10$.}
\end{subfigure}
\begin{subfigure}{\textwidth}
\includegraphics[width=0.98\textwidth]{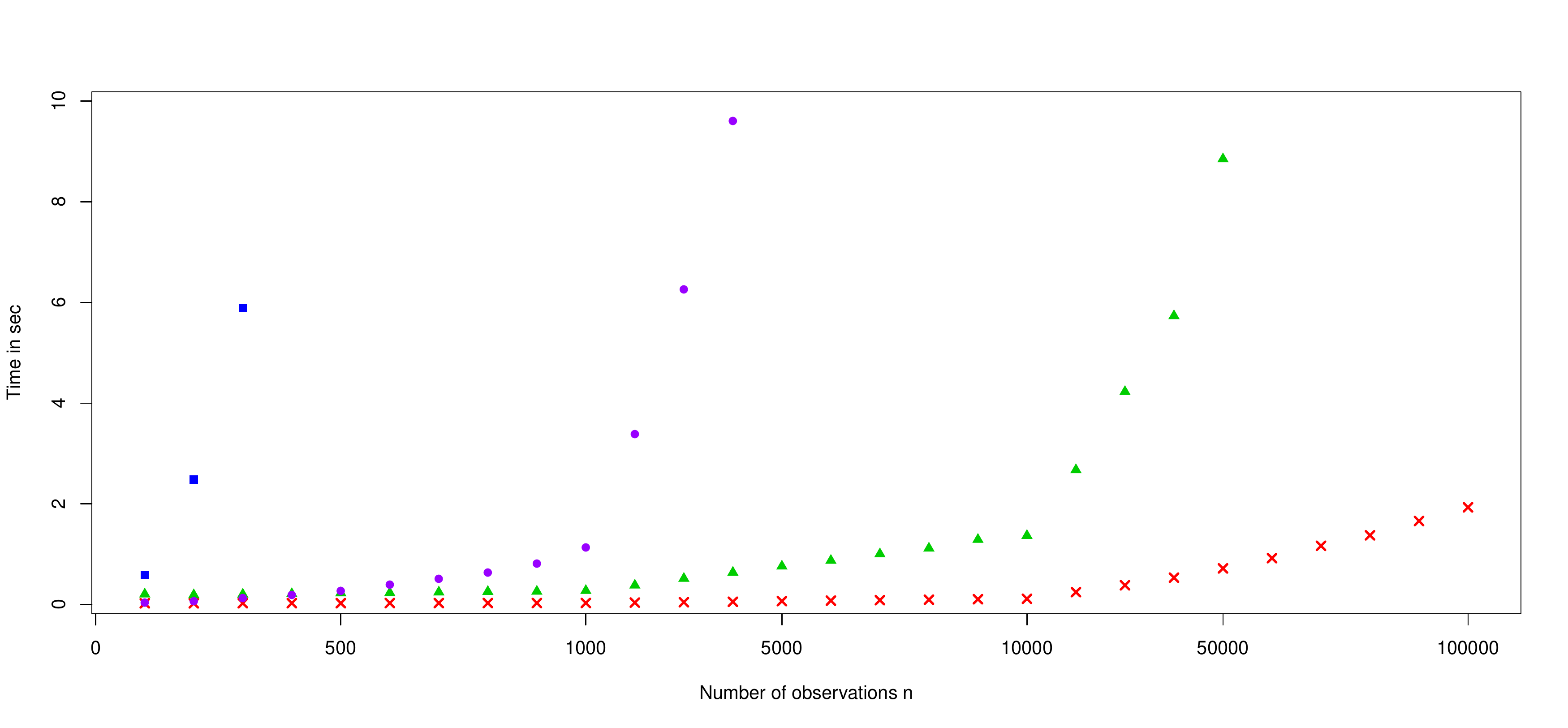}
\subcaption{$K = \sqrt{n}$.}
\end{subfigure}
\begin{subfigure}{\textwidth}
\includegraphics[width=0.98\textwidth]{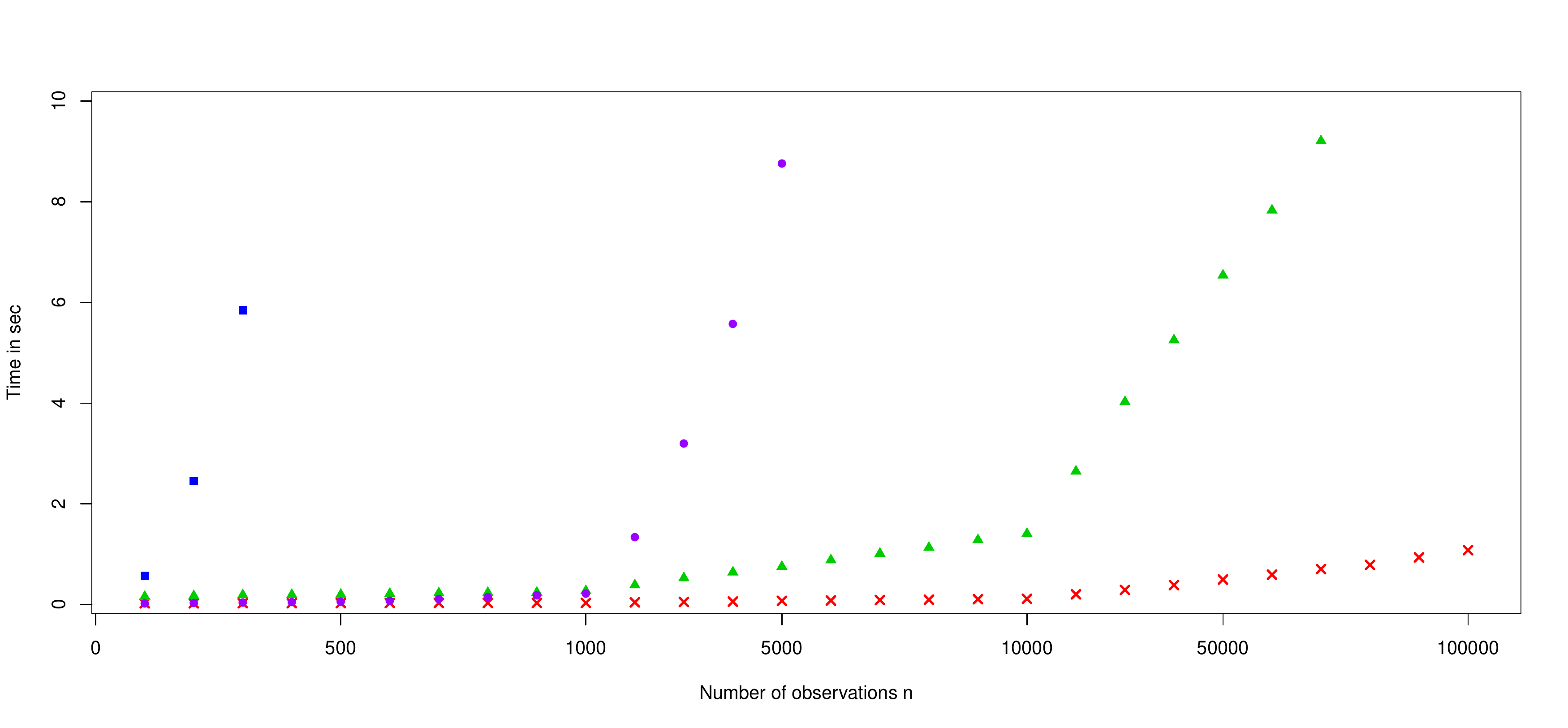}
\subcaption{$K = n/100$.}
\end{subfigure}
\caption{Mean computation time of $\HSMUCE$ (red crosses), $\CBS$ (green triangles), $\cumSeg$ (purple circles) and $\LOOVF$ (blue squares) for different number of observations $n$ and different number of change-points $K$. Note that for purposes of visualization the x-axis is displayed non-equidistantly.}
\label{fig:simulationtime}
\end{figure}

From Figure \ref{fig:simulationtime} we draw that $\HSMUCE$ is much faster than the other methods, in particular if the number of change-points increases. For $K=n/100$ the computation time increases almost linearly in the number of observations. For example, when $n=10^7$ it is still less than a minute. The second shortest computation time has $\CBS$ for larger numbers of observations, whereas $\cumSeg$ is superior for smaller numbers of observations. The computation time of $\CBS$ for $n=10^5$ observations is still less than a minute in all scenarios, whereas $\cumSeg$ has a similar computation time for $K=10$, but lasts several minutes in the other cases. Lastly, $\LOOVF$ exceeds ten seconds already for $n=400$ observations and is always found to be the slowest method.

\section{Additional Figures and Tables}\label{sec:add}
In this section we collect additional figures and tables.
 
\subsection{Simulations}
We start with estimates by $\CBS$, $\cumSeg$ and $\LOOVF$ for the data from Figure \ref{fig:example1000}.
\begin{figure}[!htp]
\centering
\begin{subfigure}{\textwidth}
\includegraphics[width=0.98\textwidth]{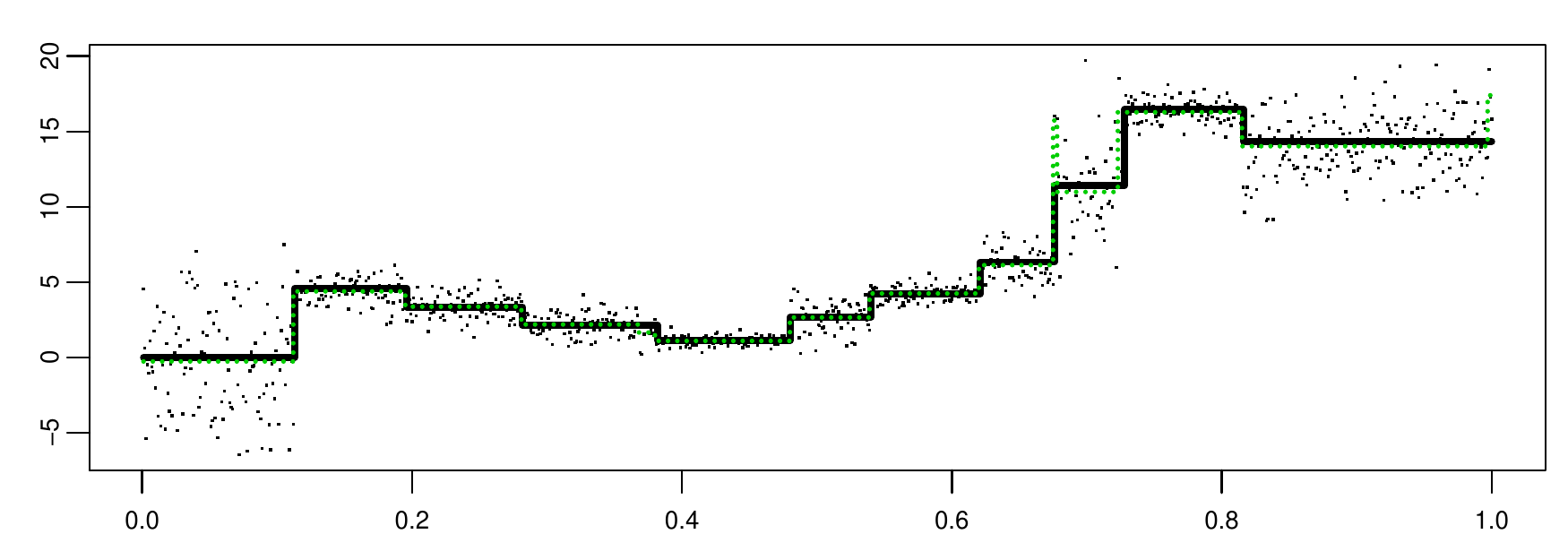}
\subcaption{$\CBS$.}
\end{subfigure}
\begin{subfigure}{\textwidth}
\includegraphics[width=0.98\textwidth]{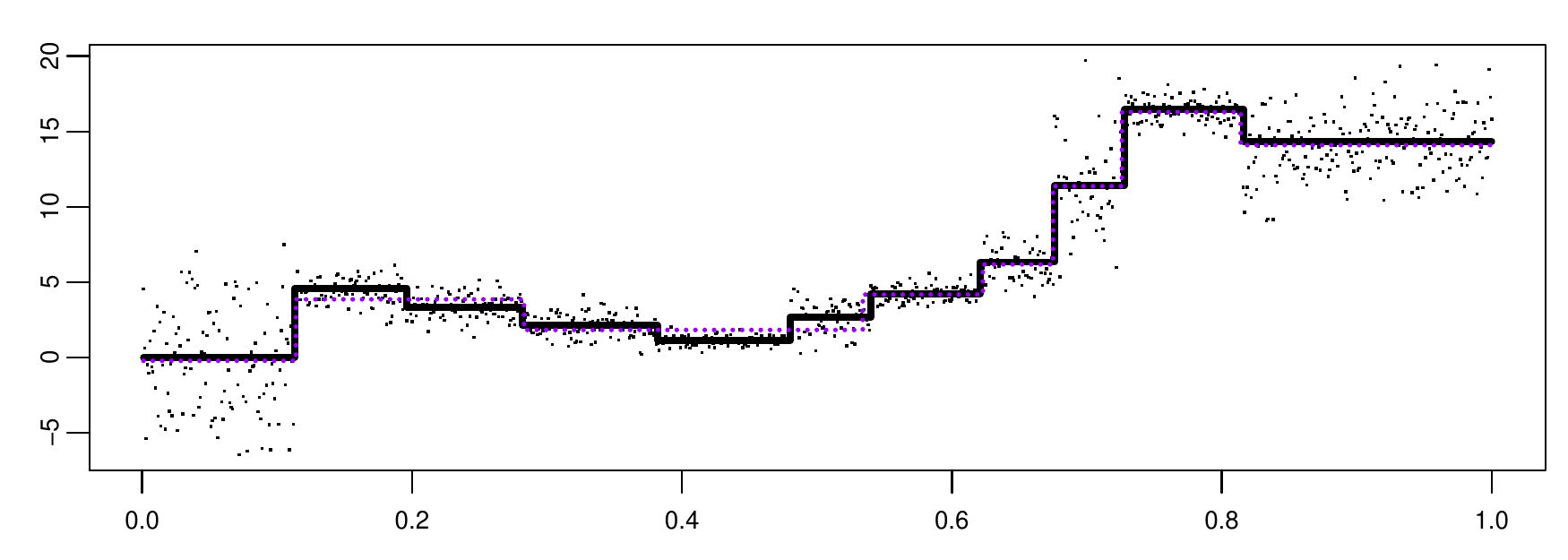}
\subcaption{$\cumSeg$.}
\end{subfigure}
\begin{subfigure}{\textwidth}
\includegraphics[width=0.98\textwidth]{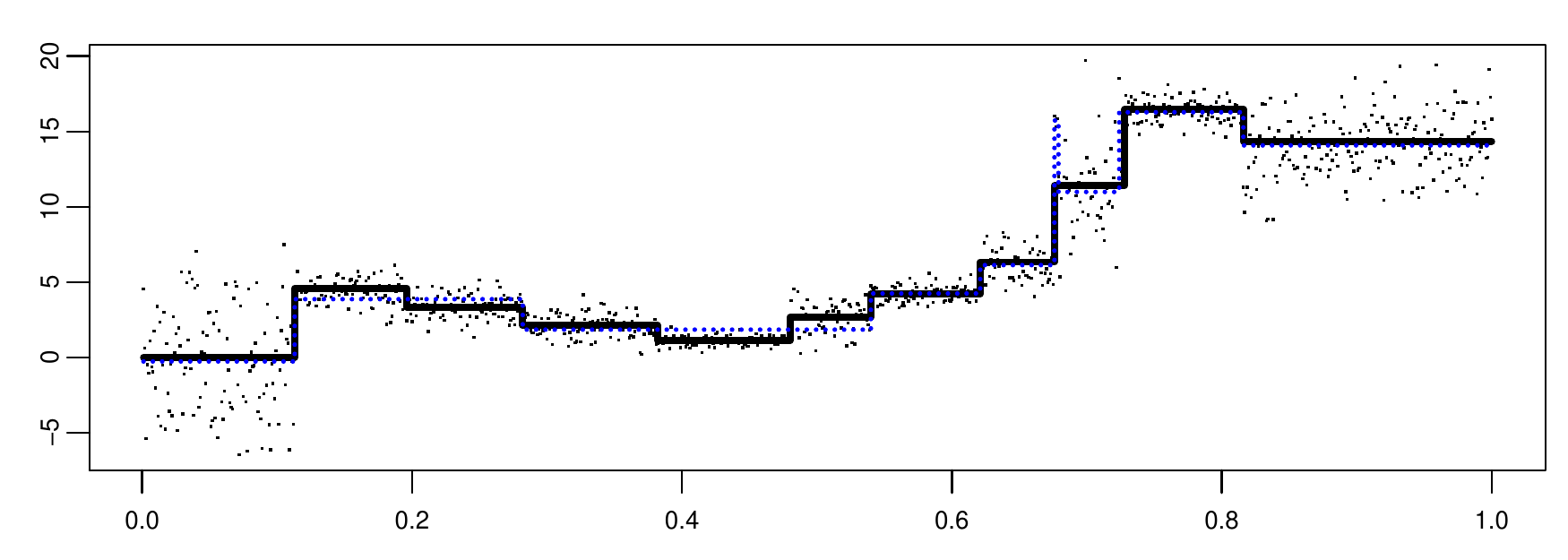}
\subcaption{$\LOOVF$.}
\end{subfigure}
\caption{Observations (black points) and true signal (black line) together with estimates by $\CBS$, $\cumSeg$ and $\LOOVF$ for the data from Figure \ref{fig:example1000}. All parameters are chosen as described in Section \ref{sec:simulations}.}
\label{fig:example1000other}
\end{figure}

The following three tables collect the results of the simulations in Section \ref{sec:simresults}. Recall the random pair $(\mu_R,\sigma_R^2)\in\mathcal{S}$ (all random variables are independent from each other):
\begin{enumerate}
\item[(a)] We fix the number of observations $n$, the number of change-points $K$, a constant $C$ and a minimum value for the smallest scale $\lambda_{\min}$.
\item[(b)] We draw the locations of the change-points $\tau_0:=0<\tau_1< \cdots <\tau_K<1=:\tau_{K+1}$ uniformly distributed with the restriction that $\lambda:=\min_{k=0,\ldots,K}{\vert \tau_{k+1}-\tau_{k}\vert} \geq \lambda_{\min}$.
\item[(c)] We choose the function values $\valsd_0,\ldots,\valsd_K$ of the standard deviation function $\sigma_R$ by $\valsd_k:=2^{U_k}$, where $U_0,\ldots,U_K$ are uniform distributed on $[-2,2]$.
\item[(d)] We determine the function values $\valmu_0,\ldots,\valmu_K$ of the signal $\mu_R$ such that
\[\vert \valmu_k - \valmu_{k-1}\vert = \sqrt{\frac{C}{n}\min\left(\frac{\tau_{k+1}-\tau_{k}}{\valsd_k^2}, \frac{\tau_k-\tau_{k-1}}{\valsd_{k-1}^2}\right)^{-1}}\ \forall\ k = 1,\ldots,K.\]
Thereby, we start with $\valmu_0 = 0$ and choose randomly with probability $1/2$ whether the expectation increases or decreases. 
\end{enumerate}
All simulations are repeated $10\,000$ times.

\begin{table}[!ht]
\scriptsize
\centering
\begin{tabular}
{l||l||c
ccc||c|cc|cc}
Setting & Method & -1 & 0 & +1 & $\geq +2$ & $\vert\hat{K}-K\vert$ & $\operatorname{FPSLE}$ & $\operatorname{FNSLE}$ & $\operatorname{MISE}$ & $\operatorname{MIAE}$\\
\hline
$\sigma_0=0.5$, & $\operatorname{HS}(0.1)$ & 0.000 & 0.995 & 0.004 & 0.000 & 0.005 & 0.82 & 0.74 & 0.0119 & 0.0644\\
$\sigma_1=0.5$, & $\operatorname{HS}(0.3)$ & 0.000 & 0.975 & 0.025 & 0.000 & 0.025 & 1.38 & 0.95 & 0.0129 & 0.0672\\
& $\operatorname{HS}(0.5)$ & 0.000 & 0.929 & 0.070 & 0.001 & 0.072 & 2.67 & 1.40 & 0.0144 & 0.0706\\
& $\CBS$ & 0.000 & 0.949 & 0.036 & 0.015 & 0.066 & 2.31 & 0.94 & 0.0128 & 0.0660\\
& $\cumSeg$ & 0.000 & 0.995 & 0.005 & 0.000 & 0.005 & 1.37 & 1.28 & 0.0172 & 0.0707\\
& $\LOOVF$ & 0.000 & 0.774 & 0.142 & 0.084 & 0.378 & 10.10 & 2.42 & 0.1402 & 0.2897\\
\hline
$\sigma_0=0.5$, & $\operatorname{HS}(0.1)$ & 0.112 & 0.886 & 0.002 & 0.000 & 0.114 & 3.99 & 6.77 & 0.0543 & 0.1405\\
$\sigma_1=1$, & $\operatorname{HS}(0.3)$ & 0.020 & 0.961 & 0.019 & 0.000 & 0.039 & 2.38 & 2.56 & 0.0321 & 0.1086\\
& $\operatorname{HS}(0.5)$ & 0.005 & 0.940 & 0.054 & 0.001 & 0.061 & 3.12 & 2.30 & 0.0314 & 0.1090\\
& $\CBS$ & 0.042 & 0.873 & 0.068 & 0.017 & 0.147 & 5.87 & 4.93 & 0.0496 & 0.1315\\
& $\cumSeg$ & 0.008 & 0.969 & 0.021 & 0.003 & 0.034 & 3.09 & 2.78 & 0.0375 & 0.1126\\
& $\LOOVF$ & 0.006 & 0.791 & 0.112 & 0.091 & 0.373 & 11.30 & 3.90 & 0.1720 & 0.3004\\
\hline
$\sigma_0=0.5$, & $\operatorname{HS}(0.1)$ & 0.484 & 0.515 & 0.001 & 0.000 & 0.485 & 12.77 & 24.89 & 0.1736 & 0.3110\\
$\sigma_1=1.5$, & $\operatorname{HS}(0.3)$ & 0.209 & 0.778 & 0.012 & 0.000 & 0.222 & 6.81 & 11.92 & 0.1025 & 0.2075\\
& $\operatorname{HS}(0.5)$ & 0.089 & 0.872 & 0.039 & 0.000 & 0.129 & 4.92 & 6.54 & 0.0725 & 0.1690\\
& $\CBS$ & 0.417 & 0.454 & 0.105 & 0.024 & 0.577 & 17.63 & 25.40 & 0.1845 & 0.3385\\
& $\cumSeg$ & 0.231 & 0.731 & 0.032 & 0.006 & 0.276 & 9.60 & 14.62 & 0.1149 & 0.2317\\
& $\LOOVF$ & 0.135 & 0.683 & 0.098 & 0.085 & 0.490 & 15.78 & 11.92 & 0.2307 & 0.3322\\
\hline
$\sigma_0=1$, & $\operatorname{HS}(0.1)$ & 0.453 & 0.547 & 0.001 & 0.000 & 0.453 & 13.49 & 24.97 & 0.1514 & 0.3140\\
$\sigma_1=1$, & $\operatorname{HS}(0.3)$ & 0.171 & 0.818 & 0.011 & 0.000 & 0.182 & 8.11 & 12.53 & 0.0942 & 0.2170\\
& $\operatorname{HS}(0.5)$ & 0.062 & 0.900 & 0.038 & 0.000 & 0.101 & 6.75 & 7.99 & 0.0745 & 0.1847\\
& $\CBS$ & 0.156 & 0.744 & 0.091 & 0.008 & 0.265 & 9.51 & 11.41 & 0.0943 & 0.2127\\
& $\cumSeg$ & 0.120 & 0.876 & 0.004 & 0.000 & 0.124 & 5.93 & 8.88 & 0.0748 & 0.1839\\
& $\LOOVF$ & 0.039 & 0.749 & 0.132 & 0.081 & 0.405 & 13.29 & 6.87 & 0.1947 & 0.3472\\
\hline
$\sigma_0=1$, & $\operatorname{HS}(0.1)$ & 0.727 & 0.272 & 0.000 & 0.000 & 0.728 & 19.44 & 37.71 & 0.2237 & 0.4244\\
$\sigma_1=1.5$, & $\operatorname{HS}(0.3)$ & 0.410 & 0.584 & 0.006 & 0.000 & 0.416 & 13.35 & 23.77 & 0.1644 & 0.3256\\
& $\operatorname{HS}(0.5)$ & 0.218 & 0.753 & 0.028 & 0.000 & 0.247 & 10.25 & 15.64 & 0.1283 & 0.2669\\
& $\CBS$ & 0.491 & 0.406 & 0.096 & 0.008 & 0.604 & 18.00 & 28.42 & 0.2013 & 0.3741\\
& $\cumSeg$ & 0.409 & 0.580 & 0.010 & 0.000 & 0.420 & 12.91 & 22.99 & 0.1571 & 0.3155\\
& $\LOOVF$ & 0.184 & 0.638 & 0.105 & 0.072 & 0.501 & 16.55 & 14.92 & 0.2410 & 0.3626\\
\hline 
$\sigma_0=1.5$, & $\operatorname{HS}(0.1)$ & 0.844 & 0.156 & 0.000 & 0.000 & 0.844 & 22.41 & 43.65 & 0.2581 & 0.4713\\
$\sigma_1=1.5$, & $\operatorname{HS}(0.3)$ & 0.574 & 0.423 & 0.003 & 0.000 & 0.577 & 18.21 & 33.12 & 0.2219 & 0.4101\\
& $\operatorname{HS}(0.5)$ & 0.352 & 0.629 & 0.018 & 0.000 & 0.371 & 15.47 & 25.01 & 0.1915 & 0.3582\\
& $\CBS$ & 0.659 & 0.258 & 0.079 & 0.003 & 0.746 & 20.73 & 35.81 & 0.2449 & 0.4379\\
& $\cumSeg$ & 0.629 & 0.369 & 0.002 & 0.000 & 0.631 & 17.56 & 33.32 & 0.2147 & 0.4067\\
& $\LOOVF$ & 0.297 & 0.534 & 0.104 & 0.066 & 0.589 & 19.34 & 21.26 & 0.2715 & 0.4046\\
\hline
\end{tabular}
\caption{Simulations with a single change (fixed signal and variances): $n=100$ observations and a single change at $0.5$, from $0$ to $1$ for different standard deviations changing from $\sigma_0$ to $\sigma_1$ at $0.5$, too. Columns from left to right: setting, method, proportions of $\hat{K}-K$ and averages of the corresponding error criteria. $\operatorname{HS}(\alpha)$ denotes $\HSMUCE$ at significance level $\alpha$.}
\label{tab:fixed}
\end{table}

\begin{table}[!htp]
\scriptsize
\centering
\begin{tabular}
{l||l||cc
ccc||c|cc|cc}
Setting & Method & $\leq -2$ & -1 & 0 & +1 & $\geq +2$ & $\vert\hat{K}-K\vert$ & $\operatorname{FPSLE}$ & $\operatorname{FNSLE}$ & $\operatorname{MISE}$ & $\operatorname{MIAE}$\\
\hline
n = 1000, & $\operatorname{HS}(0.1)$ & - & - & 0.965 & 0.035 & 0.000 & 0.035 & 17.75 & 4.73 & 0.0035 & 0.0365\\
K = 0, & $\operatorname{HS}(0.3)$ & - & - & 0.867 & 0.128 & 0.005 & 0.138 & 68.95 & 18.42 & 0.0045 & 0.0401\\
$\mu=\mu_R\equiv 0$, & $\operatorname{HS}(0.5)$ & - & - & 0.719 & 0.256 & 0.025 & 0.307 & 153.45 & 41.25 & 0.0061 & 0.0454\\
$\sigma=\sigma_R$& $\operatorname{S}(0.1)$ & - & - & 0.965 & 0.034 & 0.001 & 0.036 & 17.90 & 5.03 & 0.0039 & 0.0371\\
$\equiv \operatorname{const}$ & $\operatorname{S}(0.3)$ & - & - & 0.832 & 0.160 & 0.008 & 0.177 & 88.45 & 24.80 & 0.0059 & 0.0435\\
& $\operatorname{S}(0.5)$ & - & - & 0.667 & 0.298 & 0.035 & 0.370 & 184.90 & 50.94 & 0.0082 & 0.0499\\
& $\CBS$ & - & - & 0.991 & 0.000 & 0.009 & 0.018 & 8.90 & 1.26 & 0.0037 & 0.0351\\
& $\cumSeg$ & - & - & 0.999 & 0.001 & 0.000 & 0.001 & 0.30 & 0.06 & 0.0029 & 0.0345\\
\hline
n = 1000, & $\operatorname{HS}(0.1)$ & 0.010 & 0.174 & 0.802 & 0.014 & 0.000 & 0.208 & 26.32 & 72.66 & 0.0132 & 0.0613\\
K = 2, & $\operatorname{HS}(0.3)$ & 0.004 & 0.108 & 0.819 & 0.067 & 0.002 & 0.187 & 38.10 & 52.90 & 0.0114 & 0.0571\\
$\lambda_{\min} = 30$, & $\operatorname{HS}(0.5)$ & 0.002 & 0.070 & 0.768 & 0.150 & 0.010 & 0.244 & 64.14 & 48.50 & 0.0111 & 0.0573\\
$\mu=\mu_R$, & $\operatorname{S}(0.1)$ & 0.003 & 0.074 & 0.912 & 0.011 & 0.000 & 0.092 & 16.96 & 34.03 & 0.0092 & 0.0513\\
$\sigma\equiv 1$ & $\operatorname{S}(0.3)$ & 0.001 & 0.040 & 0.892 & 0.065 & 0.002 & 0.112 & 32.24 & 27.39 & 0.0090 & 0.0513\\
& $\operatorname{S}(0.5)$ & 0.001 & 0.025 & 0.806 & 0.155 & 0.013 & 0.209 & 63.30 & 32.33 & 0.0095 & 0.0536\\
& $\CBS$ & 0.005 & 0.060 & 0.821 & 0.082 & 0.033 & 0.221 & 37.55 & 37.57 & 0.0111 & 0.0527\\
& $\cumSeg$ & 0.025 & 0.116 & 0.749 & 0.099 & 0.011 & 0.289 & 65.32 & 82.63 & 0.0364 & 0.0738\\
  \hline
n = 1000, & $\operatorname{HS}(0.1)$ & 0.009 & 0.160 & 0.815 & 0.015 & 0.000 & 0.194 & 27.14 & 68.91 & 0.0127 & 0.0611\\
K = 2, & $\operatorname{HS}(0.3)$ & 0.004 & 0.098 & 0.829 & 0.067 & 0.001 & 0.176 & 37.77 & 49.63 & 0.0111 & 0.0572\\
$\lambda_{\min} = 50$, & $\operatorname{HS}(0.5)$ & 0.002 & 0.063 & 0.774 & 0.152 & 0.009 & 0.237 & 63.46 & 46.06 & 0.0109 & 0.0573\\
$\mu=\mu_R$, & $\operatorname{S}(0.1)$ & 0.003 & 0.068 & 0.919 & 0.009 & 0.000 & 0.084 & 16.82 & 31.94 & 0.0091 & 0.0515\\
$\sigma\equiv 1$ & $\operatorname{S}(0.3)$ & 0.001 & 0.035 & 0.899 & 0.063 & 0.002 & 0.104 & 31.19 & 25.81 & 0.0090 & 0.0515\\
& $\operatorname{S}(0.5)$ & 0.001 & 0.020 & 0.819 & 0.147 & 0.013 & 0.195 & 59.86 & 30.23 & 0.0095 & 0.0537\\
& $\CBS$ & 0.005 & 0.058 & 0.824 & 0.083 & 0.031 & 0.215 & 37.50 & 36.27 & 0.0112 & 0.0532\\
& $\cumSeg$ & 0.023 & 0.110 & 0.769 & 0.090 & 0.008 & 0.262 & 59.74 & 79.25 & 0.0336 & 0.0741\\  
\hline
n = 1000, & $\operatorname{HS}(0.1)$ & 0.508 & 0.330 & 0.161 & 0.001 & 0.000 & 1.634 & 54.37 & 172.66 & 0.1112 & 0.1842\\
K = 10, & $\operatorname{HS}(0.3)$ & 0.354 & 0.377 & 0.263 & 0.006 & 0.000 & 1.233 & 44.53 & 127.81 & 0.0817 & 0.1561\\
$\lambda_{\min} = 30$, & $\operatorname{HS}(0.5)$ & 0.253 & 0.384 & 0.346 & 0.017 & 0.000 & 0.987 & 40.88 & 102.88 & 0.0679 & 0.1419\\
$\mu=\mu_R$, & $\operatorname{S}(0.1)$ & 0.163 & 0.352 & 0.485 & 0.001 & 0.000 & 0.721 & 29.14 & 77.49 & 0.0424 & 0.1193\\
$\sigma\equiv 1$ & $\operatorname{S}(0.3)$ & 0.093 & 0.301 & 0.598 & 0.007 & 0.000 & 0.513 & 24.23 & 56.17 & 0.0366 & 0.1099\\
& $\operatorname{S}(0.5)$ & 0.062 & 0.258 & 0.657 & 0.022 & 0.001 & 0.415 & 23.34 & 46.37 & 0.0342 & 0.1060\\
& $\CBS$ & 0.033 & 0.129 & 0.531 & 0.204 & 0.102 & 0.644 & 42.69 & 45.08 & 0.0417 & 0.1078\\
& $\cumSeg$ & 0.163 & 0.216 & 0.403 & 0.165 & 0.053 & 0.904 & 65.16 & 105.59 & 0.1107 & 0.1492\\
\hline
n = 1000, & $\operatorname{HS}(0.1)$ & 0.445 & 0.356 & 0.198 & 0.001 & 0.000 & 1.474 & 59.32 & 162.03 & 0.0913 & 0.1801\\
K = 10, & $\operatorname{HS}(0.3)$ & 0.303 & 0.384 & 0.307 & 0.005 & 0.000 & 1.104 & 47.34 & 120.10 & 0.0682 & 0.1532\\
$\lambda_{\min} = 50$, & $\operatorname{HS}(0.5)$ & 0.213 & 0.379 & 0.390 & 0.018 & 0.001 & 0.881 & 41.98 & 96.70 & 0.0577 & 0.1398\\
$\mu=\mu_R$, & $\operatorname{S}(0.1)$ & 0.155 & 0.351 & 0.494 & 0.000 & 0.000 & 0.697 & 32.51 & 77.29 & 0.0426 & 0.1235\\
$\sigma\equiv 1$ & $\operatorname{S}(0.3)$ & 0.085 & 0.299 & 0.612 & 0.004 & 0.000 & 0.485 & 26.14 & 55.78 & 0.0368 & 0.1131\\
& $\operatorname{S}(0.5)$ & 0.054 & 0.252 & 0.680 & 0.014 & 0.000 & 0.381 & 23.81 & 45.39 & 0.0344 & 0.1086\\
& $\CBS$ & 0.027 & 0.135 & 0.524 & 0.203 & 0.111 & 0.653 & 45.64 & 44.88 & 0.0425 & 0.1116\\
& $\cumSeg$ & 0.165 & 0.217 & 0.389 & 0.179 & 0.050 & 0.904 & 63.73 & 104.37 & 0.1037 & 0.1522\\
 \hline     
\end{tabular}
\caption{Simulations with constant variance and $C=200$. Columns from left to right: setting, method, proportions of $\hat{K}-K$ and averages of the corresponding error criteria. $\operatorname{HS}(\alpha)$ and $\operatorname{S}(\alpha)$ denote $\HSMUCE$ and $\SMUCE$ at significance level $\alpha$, respectively.}
\label{tab:constvar}
\end{table}

\begin{table}[!htp]
\scriptsize
\centering
\begin{tabular}
{l||l||cc
ccc||c|cc|cc}
Setting & Method & $\leq -2$ & -1 & 0 & +1 & $\geq +2$ & $\vert\hat{K}-K\vert$ & $\operatorname{FPSLE}$ & $\operatorname{FNSLE}$ & $\operatorname{MISE}$ & $\operatorname{MIAE}$\\
\hline
n = 100, & $\operatorname{HS}(0.1)$ & 0.000 & 0.125 & 0.873 & 0.002 & 0.000 & 0.128 & 1.51 & 4.07 & 0.8182 & 0.3308\\
K = 2, & $\operatorname{HS}(0.3)$ & 0.000 & 0.042 & 0.945 & 0.013 & 0.000 & 0.055 & 1.04 & 1.70 & 0.4217 & 0.2482\\
$\lambda_{\min} = 15$, & $\operatorname{HS}(0.5)$ & 0.000 & 0.016 & 0.940 & 0.043 & 0.000 & 0.060 & 1.63 & 1.26 & 0.2776 & 0.2291\\
$\mu=\mu_R$, & $\CBS$ & 0.000 & 0.001 & 0.925 & 0.058 & 0.016 & 0.092 & 2.03 & 0.79 & 0.2220 & 0.2143\\
$\sigma=\sigma_R$ & $\cumSeg$ & 0.000 & 0.066 & 0.720 & 0.167 & 0.047 & 0.343 & 6.50 & 4.39 & 0.4898 & 0.3053\\
 & $\LOOVF$ & 0.000 & 0.031 & 0.700 & 0.163 & 0.106 & 0.683 & 12.83 & 3.36 & 0.3167 & 0.2639\\
 \hline
n = 100, & $\operatorname{HS}(0.1)$ & 0.608 & 0.364 & 0.028 & 0.000 & 0.000 & 1.610 & 13.51 & 32.33 & 9.5104 & 1.8626\\
K = 5, & $\operatorname{HS}(0.3)$ & 0.212 & 0.577 & 0.211 & 0.000 & 0.000 & 1.003 & 8.63 & 19.80 & 6.5362 & 1.3263\\
$\lambda_{\min} = 15$, & $\operatorname{HS}(0.5)$ & 0.061 & 0.466 & 0.473 & 0.001 & 0.000 & 0.588 & 5.27 & 11.65 & 3.9992 & 0.9047\\
$\mu=\mu_R$, & $\CBS$ & 0.001 & 0.008 & 0.884 & 0.089 & 0.018 & 0.137 & 1.65 & 1.02 & 0.4539 & 0.3130\\
$\sigma=\sigma_R$ & $\cumSeg$ & 0.098 & 0.230 & 0.544 & 0.117 & 0.012 & 0.588 & 6.93 & 12.13 & 1.2454 & 0.5441\\
 & $\LOOVF$ & 0.031 & 0.112 & 0.520 & 0.152 & 0.184 & 1.648 & 14.61 & 6.92 & 0.5887 & 0.4042\\
 \hline
n = 1000, & $\operatorname{HS}(0.1)$ & 0.000 & 0.007 & 0.974 & 0.018 & 0.000 & 0.026 & 8.42 & 5.83 & 0.0195 & 0.0617\\
K = 2, & $\operatorname{HS}(0.3)$ & 0.000 & 0.001 & 0.921 & 0.075 & 0.002 & 0.080 & 24.72 & 9.57 & 0.0193 & 0.0636\\
$\lambda_{\min} = 30$, & $\operatorname{HS}(0.5)$ & 0.000 & 0.000 & 0.827 & 0.162 & 0.012 & 0.185 & 53.23 & 17.09 & 0.0204 & 0.0668\\
$\mu=\mu_R$, & $\CBS$ & 0.005 & 0.019 & 0.774 & 0.146 & 0.056 & 0.298 & 52.95 & 21.17 & 0.0347 & 0.0711\\
$\sigma=\sigma_R$ & $\cumSeg$ & 0.022 & 0.161 & 0.683 & 0.103 & 0.030 & 0.387 & 64.04 & 92.66 & 0.0765 & 0.1112\\
 \hline
n = 1000, & $\operatorname{HS}(0.1)$ & 0.000 & 0.002 & 0.982 & 0.017 & 0.000 & 0.018 & 7.25 & 4.35 & 0.0182 & 0.0630\\
K = 2, & $\operatorname{HS}(0.3)$ & 0.000 & 0.000 & 0.926 & 0.071 & 0.002 & 0.076 & 22.64 & 8.49 & 0.0196 & 0.0657\\
$\lambda_{\min} = 50$, & $\operatorname{HS}(0.5)$ & 0.000 & 0.000 & 0.830 & 0.160 & 0.010 & 0.181 & 50.02 & 16.22 & 0.0214 & 0.0692\\
$\mu=\mu_R$, & $\CBS$ & 0.003 & 0.011 & 0.776 & 0.153 & 0.057 & 0.296 & 53.69 & 15.85 & 0.0355 & 0.0730\\
$\sigma=\sigma_R$ & $\cumSeg$ & 0.016 & 0.155 & 0.699 & 0.098 & 0.031 & 0.370 & 60.63 & 84.69 & 0.0739 & 0.1132\\ 
 \hline
n = 1000, & $\operatorname{HS}(0.1)$ & 0.123 & 0.429 & 0.446 & 0.002 & 0.000 & 0.686 & 22.83 & 55.06 & 0.4045 & 0.2402\\
K = 10, & $\operatorname{HS}(0.3)$ & 0.016 & 0.199 & 0.770 & 0.015 & 0.000 & 0.245 & 11.98 & 21.12 & 0.1863 & 0.1618\\
$\lambda_{\min} = 30$, & $\operatorname{HS}(0.5)$ & 0.002 & 0.088 & 0.863 & 0.045 & 0.001 & 0.140 & 11.84 & 12.71 & 0.1220 & 0.1404\\
$\mu=\mu_R$, & $\CBS$ & 0.002 & 0.008 & 0.463 & 0.316 & 0.211 & 0.843 & 47.26 & 15.20 & 0.1274 & 0.1435\\
$\sigma=\sigma_R$ & $\cumSeg$ & 0.439 & 0.243 & 0.187 & 0.085 & 0.046 & 1.674 & 94.91 & 228.44 & 0.3120 & 0.2806\\
 \hline
n = 1000, & $\operatorname{HS}(0.1)$ & 0.025 & 0.262 & 0.711 & 0.002 & 0.000 & 0.315 & 16.94 & 32.39 & 0.2102 & 0.1866\\
K = 10, & $\operatorname{HS}(0.3)$ & 0.002 & 0.058 & 0.925 & 0.015 & 0.000 & 0.076 & 8.46 & 10.58 & 0.1009 & 0.1372\\
$\lambda_{\min} = 50$, & $\operatorname{HS}(0.5)$ & 0.000 & 0.017 & 0.940 & 0.043 & 0.001 & 0.061 & 9.03 & 7.72 & 0.0860 & 0.1307\\
$\mu=\mu_R$, & $\CBS$ & 0.001 & 0.007 & 0.451 & 0.319 & 0.222 & 0.868 & 47.81 & 15.10 & 0.1293 & 0.1463\\
$\sigma=\sigma_R$ & $\cumSeg$ & 0.433 & 0.254 & 0.197 & 0.082 & 0.035 & 1.601 & 97.00 & 223.47 & 0.2771 & 0.2794\\
 \hline
n = 10000, & $\operatorname{HS}(0.1)$ & 0.000 & 0.004 & 0.983 & 0.013 & 0.000 & 0.017 & 50.65 & 30.94 & 0.0016 & 0.0183\\
K = 2, & $\operatorname{HS}(0.3)$ & 0.000 & 0.002 & 0.936 & 0.061 & 0.001 & 0.065 & 188.73 & 63.72 & 0.0016 & 0.0188\\
$\lambda_{\min} = 30$, & $\operatorname{HS}(0.5)$ & 0.000 & 0.001 & 0.865 & 0.128 & 0.006 & 0.142 & 407.41 & 125.46 & 0.0016 & 0.0197\\
$\mu=\mu_R$, & $\CBS$ & 0.012 & 0.036 & 0.532 & 0.200 & 0.220 & 0.886 & 1548.96 & 373.22 & 0.0057 & 0.0235\\
$\sigma=\sigma_R$ & $\cumSeg$ & 0.054 & 0.245 & 0.600 & 0.084 & 0.017 & 0.477 & 682.64 & 1457.08 & 0.0090 & 0.0379\\ 
 \hline
n = 10000, & $\operatorname{HS}(0.1)$ & 0.000 & 0.001 & 0.984 & 0.015 & 0.000 & 0.016 & 53.23 & 24.89 & 0.0014 & 0.0182\\
K = 2, & $\operatorname{HS}(0.3)$ & 0.000 & 0.000 & 0.941 & 0.057 & 0.002 & 0.060 & 181.06 & 59.83 & 0.0014 & 0.0188\\
$\lambda_{\min} = 50$, & $\operatorname{HS}(0.5)$ & 0.000 & 0.000 & 0.870 & 0.124 & 0.007 & 0.137 & 394.16 & 115.62 & 0.0016 & 0.0197\\
$\mu=\mu_R$, & $\CBS$ & 0.012 & 0.035 & 0.521 & 0.208 & 0.225 & 0.917 & 1601.54 & 366.42 & 0.0058 & 0.0238\\
$\sigma=\sigma_R$ & $\cumSeg$ & 0.052 & 0.241 & 0.603 & 0.087 & 0.016 & 0.473 & 673.81 & 1430.47 & 0.0084 & 0.0377\\ 
 \hline
n = 10000, & $\operatorname{HS}(0.1)$ & 0.023 & 0.231 & 0.741 & 0.005 & 0.000 & 0.282 & 58.42 & 165.72 & 0.0178 & 0.0431\\
K = 10, & $\operatorname{HS}(0.3)$ & 0.006 & 0.123 & 0.844 & 0.027 & 0.000 & 0.162 & 68.27 & 98.25 & 0.0122 & 0.0385\\
$\lambda_{\min} = 30$, & $\operatorname{HS}(0.5)$ & 0.003 & 0.079 & 0.854 & 0.064 & 0.002 & 0.151 & 108.19 & 87.63 & 0.0103 & 0.0377\\
$\mu=\mu_R$, & $\CBS$ & 0.024 & 0.043 & 0.180 & 0.222 & 0.531 & 2.088 & 1286.59 & 525.95 & 0.0198 & 0.0475\\
$\sigma=\sigma_R$ & $\cumSeg$ & 0.619 & 0.169 & 0.130 & 0.059 & 0.024 & 2.345 & 1000.55 & 3122.28 & 0.0433 & 0.0917\\ 
 \hline
n = 10000, & $\operatorname{HS}(0.1)$ & 0.009 & 0.165 & 0.819 & 0.007 & 0.000 & 0.190 & 59.11 & 124.05 & 0.0132 & 0.0418\\
K = 10, & $\operatorname{HS}(0.3)$ & 0.001 & 0.064 & 0.905 & 0.029 & 0.001 & 0.097 & 67.32 & 65.54 & 0.0089 & 0.0375\\
$\lambda_{\min} = 50$, & $\operatorname{HS}(0.5)$ & 0.000 & 0.029 & 0.900 & 0.067 & 0.003 & 0.102 & 103.42 & 60.04 & 0.0078 & 0.0368\\
$\mu=\mu_R$, & $\CBS$ & 0.019 & 0.034 & 0.162 & 0.228 & 0.557 & 2.203 & 1317.31 & 467.47 & 0.0198 & 0.0475\\
$\sigma=\sigma_R$ & $\cumSeg$ & 0.607 & 0.188 & 0.131 & 0.051 & 0.023 & 2.277 & 997.64 & 3105.88 & 0.0405 & 0.0925\\ 
 \hline
n = 10000, & $\operatorname{HS}(0.1)$ & 0.609 & 0.284 & 0.107 & 0.001 & 0.000 & 1.908 & 155.65 & 504.02 & 0.1016 & 0.1031\\
K = 25, & $\operatorname{HS}(0.3)$ & 0.278 & 0.399 & 0.318 & 0.006 & 0.000 & 1.044 & 94.53 & 263.30 & 0.0640 & 0.0789\\
$\lambda_{\min} = 30$, & $\operatorname{HS}(0.5)$ & 0.140 & 0.371 & 0.470 & 0.019 & 0.000 & 0.696 & 84.07 & 182.54 & 0.0483 & 0.0703\\
$\mu=\mu_R$, & $\CBS$ & 0.015 & 0.024 & 0.069 & 0.128 & 0.765 & 3.348 & 921.91 & 409.98 & 0.0411 & 0.0723\\
$\sigma=\sigma_R$ & $\cumSeg$ & 0.934 & 0.036 & 0.018 & 0.009 & 0.003 & 6.028 & 1043.82 & 3488.43 & 0.1159 & 0.1540\\  
 \hline
n = 10000, & $\operatorname{HS}(0.1)$ & 0.396 & 0.383 & 0.220 & 0.001 & 0.000 & 1.334 & 146.74 & 387.66 & 0.0699 & 0.0945\\
K = 25, & $\operatorname{HS}(0.3)$ & 0.103 & 0.359 & 0.528 & 0.010 & 0.000 & 0.591 & 85.33 & 175.03 & 0.0390 & 0.0715\\
$\lambda_{\min} = 50$, & $\operatorname{HS}(0.5)$ & 0.038 & 0.241 & 0.690 & 0.030 & 0.001 & 0.352 & 78.74 & 114.01 & 0.0291 & 0.0647\\
$\mu=\mu_R$, & $\CBS$ & 0.010 & 0.017 & 0.055 & 0.120 & 0.799 & 3.529 & 934.29 & 346.33 & 0.0405 & 0.0726\\
$\sigma=\sigma_R$ & $\cumSeg$ & 0.934 & 0.036 & 0.019 & 0.008 & 0.003 & 5.849 & 1053.35 & 3462.62 & 0.1022 & 0.1547\\ 
 \hline     
\end{tabular}
\caption{Simulations with heterogeneous errors and $C=200$. Columns from left to right: setting, method, proportions of $\hat{K}-K$ and averages of the corresponding error criteria. $\operatorname{HS}(\alpha)$ denotes $\HSMUCE$ at significance level $\alpha$.}
\label{tab:random10000}
\end{table}

\subsection{Prior information on scales}\label{sec:simprior}
To demonstrate the effect of incorporating prior knowledge about those scales where change-points are likely to happen we consider again the observations from Table \ref{tab:random10000} with $n=10\,000$, $K=10$ and $\lambda_{\min}=50$. To this end, we use the adapted weights, where we eliminate the smallest three scales $k=1,2,3$, since all constant segments contain at least $50$ observations and therefore these small scales are not needed for detection. Moreover, we choose $\tilde{\beta}_4=1/4$, $\tilde{\beta}_5=1/4$, $\tilde{\beta}_6=1/6$, $\tilde{\beta}_7=1/6$, $\tilde{\beta}_8=1/12$, $\tilde{\beta}_9=1/12$ in decreasing order, since change-points on smaller scales are more likely and harder to detect. For the same reasons we eliminate the four largest scales $k=10, 11, 12, 13$, too.

\begin{table}[!ht]
\scriptsize
\centering
\begin{tabular}
{l||cc
ccc||c|cc|cc}
Method & $\leq -2$ & -1 & 0 & +1 & $\geq +2$ & $\vert\hat{K}-K\vert$ & $\operatorname{FPSLE}$ & $\operatorname{FNSLE}$ & $\operatorname{MISE}$ & $\operatorname{MIAE}$\\
   \hline
$\operatorname{HS}(0.1)$ & 0.005 & 0.117 & 0.876 & 0.002 & 0.000 & 0.130 & 50.82 & 113.50 & 0.0107 & 0.0406\\
$\operatorname{HS}(0.3)$ & 0.000 & 0.032 & 0.952 & 0.016 & 0.000 & 0.049 & 48.39 & 49.84 & 0.0075 & 0.0368\\
$\operatorname{HS}(0.5)$ & 0.000 & 0.013 & 0.940 & 0.045 & 0.001 & 0.061 & 78.86 & 48.19 & 0.0072 & 0.0368\\
 \hline
\end{tabular}
\caption{$n=10\,000$ observations, $K=10$ change-points, $C=200$ and $\lambda_{\min}=50$ from Table \ref{tab:random10000}. Columns from left to right: setting, method, proportions of $\hat{K}-K$ and averages of the corresponding error criteria. $\operatorname{HS}(\alpha)$ denotes $\HSMUCE$ at significance level $\alpha$, but with weights $\tilde{\beta}_4,\ldots,\tilde{\beta}_{9}$.}
\label{tab:beta10000}
\end{table}

A comparison of Table \ref{tab:random10000} and \ref{tab:beta10000} shows that the modified weights increase the detection power of $\HSMUCE$ for all significance levels, so we encourage the user to adapt the weights if prior information on the scales where changes occur is available.\\ 

\subsection{Robustness}\label{sec:addrobustness}
Figure \ref{fig:std} shows the standard deviation functions in Table \ref{tab:robustvar} to examine robustness against variance changes on constant segments. We consider the sinus-shaped standard deviation $\sigma_1$ (continuous changes), the piecewise linear standard deviation $\sigma_2$ (continuous and abrupt changes at the same time) and the piecewise constant standard deviation $\sigma_3$ (abrupt changes). Moreover, we analyse in Table \ref{tab:robustnesstrend} robustness against small periodic trends in simulations similar to those in \citep{CBS04}. More precisely, we generate the random pairs $(\mu_R,\sigma_R^2)\in\mathcal{S}$ as in (a)-(d) described, but replace the signal $\mu_R$ by
\begin{equation*}
\begin{split}
&\mu_T(i/n)=\mu_R + b\sin(a\pi i)\\
\text{ and }&\mu_{T_\sigma}(i/n)=\mu_R + b\sigma_R(i/n)\sin(a\pi i)+b(\sigma_R(i/n)-\sigma_R((i-1)/n)\sin(a\pi i),\\
&i=1,\ldots,n,
\end{split}
\end{equation*}
respectively. The signal $\mu_T$ reflects the situation of a fixed periodic trend, whereas in $\mu_{T_\sigma}$ the trend is scaled by the local standard deviation. The last term corrects the size of changes such that still $\sigma_R$ determines the changes. We consider as in \citep{CBS04} long ($a=0.01$) and short ($a=0.025$) trends. Finally, Table \ref{tab:robustnonnormal} reports result of $t_3$ distributed errors.

\begin{figure}[!ht]
\centering
\begin{subfigure}{0.32\textwidth}
\includegraphics[width=\textwidth]{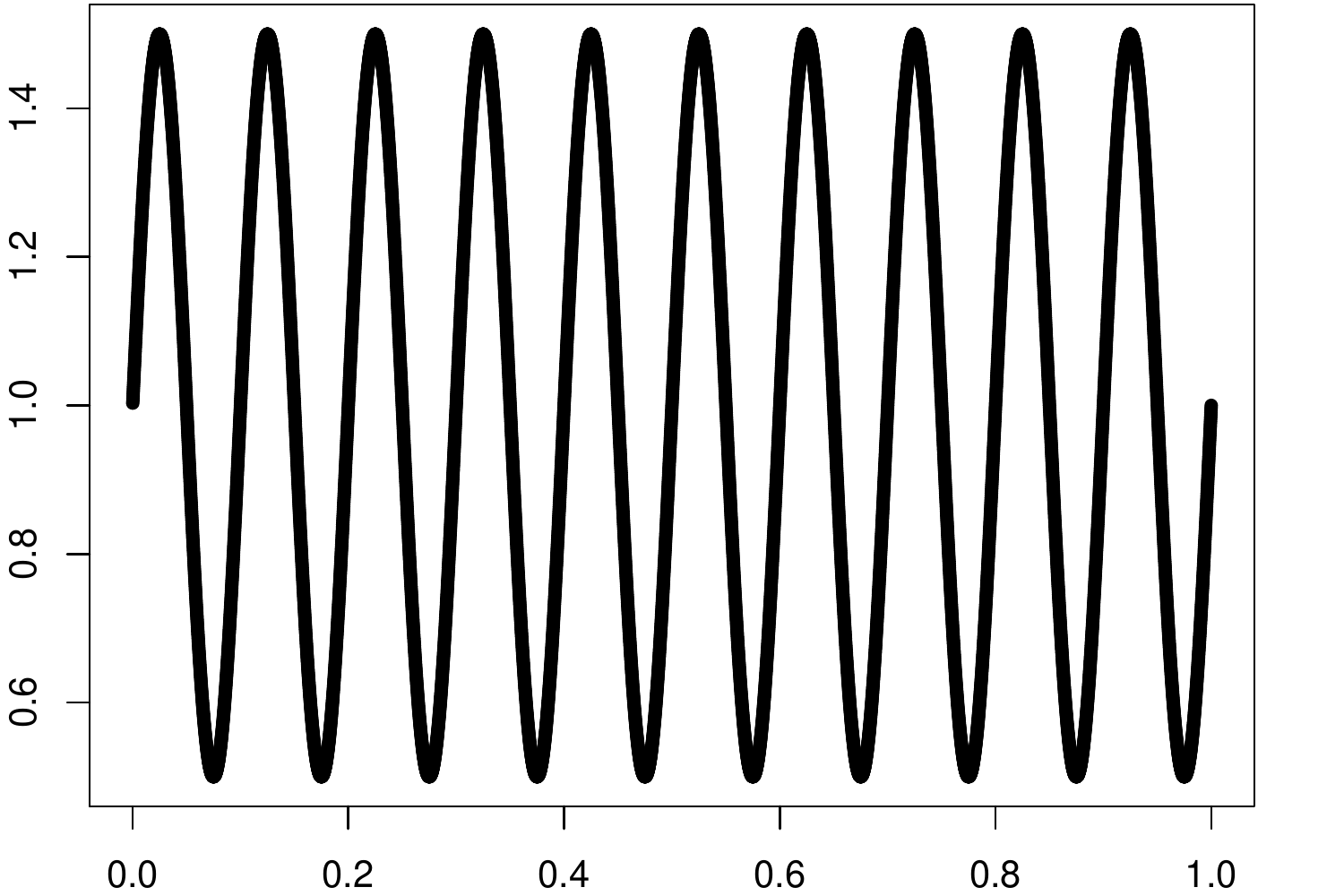}
\subcaption{}\label{subfig:sinus}
\end{subfigure}
\begin{subfigure}{0.32\textwidth}
\includegraphics[width=\textwidth]{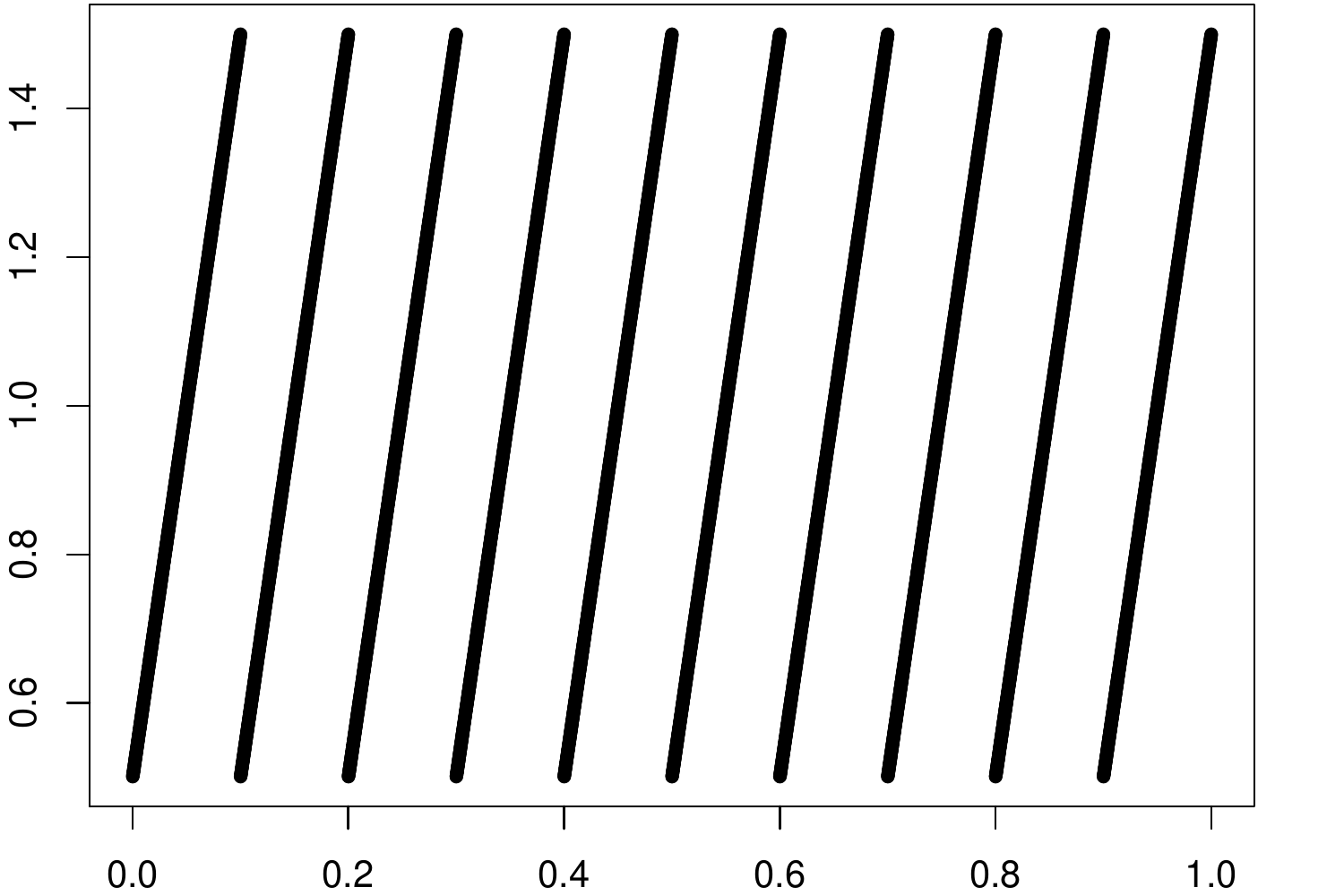}
\subcaption{}\label{subfig:lin}
\end{subfigure}
\begin{subfigure}{0.32\textwidth}
\includegraphics[width=\textwidth]{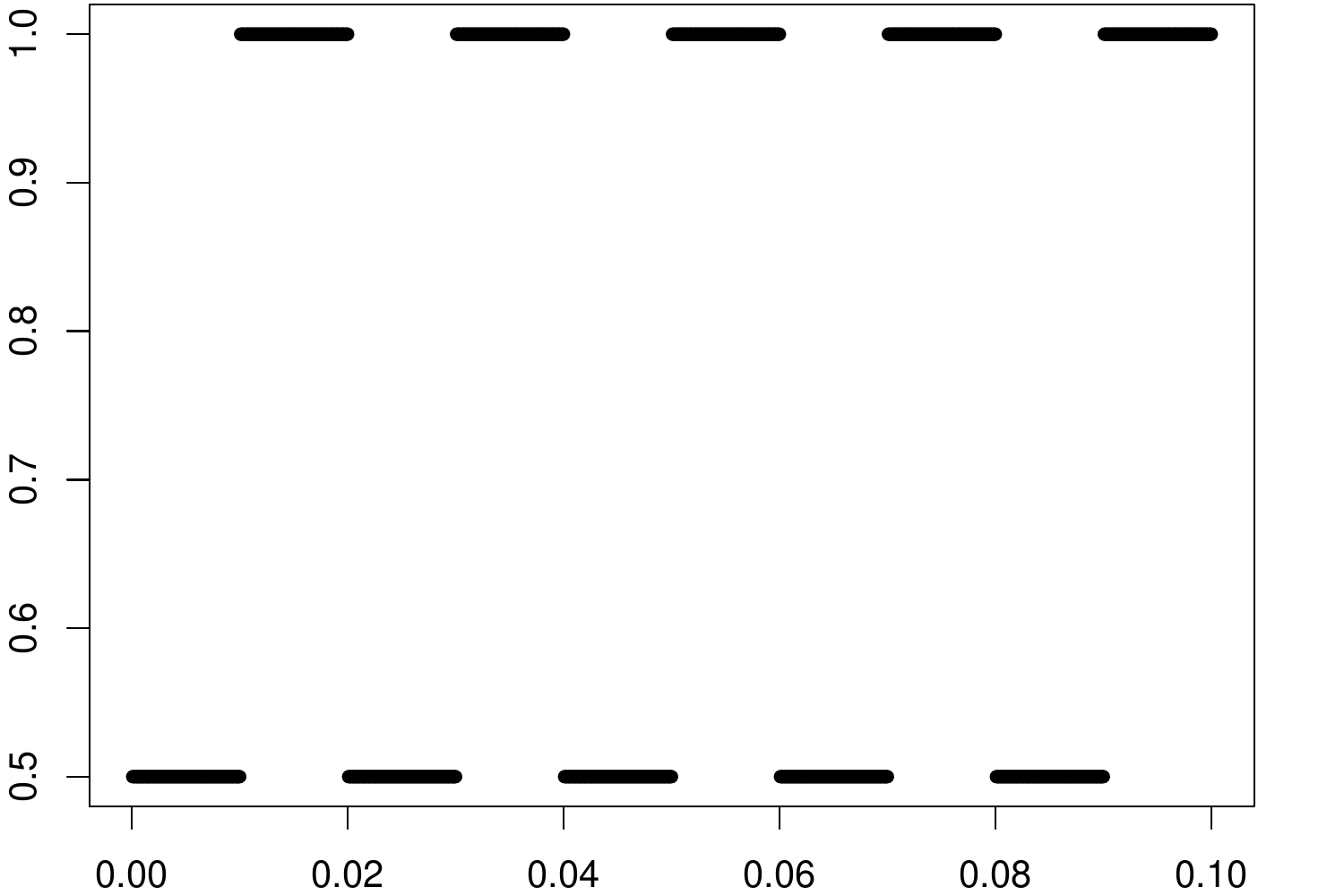}
\subcaption{}\label{subfig:const}
\end{subfigure}
\caption{\subref{subfig:sinus}: Continuous sinus-shaped standard deviation $\sigma_1(t):=1+0.5\sin(20\pi t)$. \subref{subfig:lin}: Piecewise linear standard deviation $\sigma_2(t):=0.5+\sum_{i=0}^9{(10t-i)\EINS_{(0.1i,0.1(i+1)]}(t)}$. \subref{subfig:const}: Piecewise constant standard deviation $\sigma_3(t):=\sum_{i=1}^{n/200}{0.5\EINS_{(200(i-1)/n,\ 200(i-1)/n+100/n]}(t)+\EINS_{(200(i-1)/n+100/n,\ 200i/n]}(t)}$, exemplary for $n=1\,000$.}
\label{fig:std}
\end{figure}

\begin{table}[!ht]
\scriptsize
\centering
\begin{tabular}
{l||l||cc
ccc||c|cc|cc}
Setting & Method & $\leq -2$ & -1 & 0 & +1 & $\geq +2$ & $\vert\hat{K}-K\vert$ & $\operatorname{FPSLE}$ & $\operatorname{FNSLE}$ & $\operatorname{MISE}$ & $\operatorname{MIAE}$\\
   \hline
n = 1000, & $\operatorname{HS}(0.1)$ & - & - & 0.968 & 0.032 & 0.000 & 0.033 & 16.30 & 4.12 & 0.0013 & 0.0277\\
K = 0, & $\operatorname{HS}(0.3)$ & - & - & 0.876 & 0.118 & 0.005 & 0.129 & 64.60 & 15.91 & 0.0018 & 0.0306\\
$\mu=\mu_R\equiv 0$, & $\operatorname{HS}(0.5)$ & - & - & 0.734 & 0.239 & 0.027 & 0.293 & 146.75 & 36.45 & 0.0023 & 0.0338\\
$\sigma=\sigma_1$ & $\CBS$ & - & - & 0.916 & 0.001 & 0.083 & 0.186 & 93.25 & 11.21 & 0.0045 & 0.0288\\
& $\cumSeg$ & - & - & 1.000 & 0.000 & 0.000 & 0.000 & 0.20 & 0.04 & 0.0011 & 0.0264\\
\hline
n = 1000, & $\operatorname{HS}(0.1)$ & - & - & 0.968 & 0.031 & 0.001 & 0.032 & 16.10 & 4.12 & 0.0013 & 0.0278\\
K = 0, & $\operatorname{HS}(0.3)$ & - & - & 0.876 & 0.118 & 0.005 & 0.129 & 64.55 & 15.73 & 0.0017 & 0.0306\\
$\mu=\mu_R\equiv 0$, & $\operatorname{HS}(0.5)$ & - & - & 0.734 & 0.241 & 0.024 & 0.292 & 145.80 & 35.28 & 0.0022 & 0.0340\\
$\sigma=\sigma_2$ & $\CBS$ & - & - & 0.937 & 0.004 & 0.060 & 0.135 & 67.70 & 8.96 & 0.0034 & 0.0281\\
& $\cumSeg$ & - & - & 0.999 & 0.001 & 0.000 & 0.001 & 0.40 & 0.12 & 0.0011 & 0.0264\\
\hline
n = 1000, & $\operatorname{HS}(0.1)$ & - & - & 0.969 & 0.030 & 0.001 & 0.032 & 15.75 & 3.91 & 0.0007 & 0.0210\\
K = 0, & $\operatorname{HS}(0.3)$ & - & - & 0.875 & 0.119 & 0.006 & 0.130 & 65.10 & 16.31 & 0.0009 & 0.0227\\
$\mu=\mu_R\equiv 0$, & $\operatorname{HS}(0.5)$ & - & - & 0.737 & 0.236 & 0.026 & 0.290 & 145.15 & 36.09 & 0.0012 & 0.0250\\
$\sigma=\sigma_3$ & $\CBS$ & - & - & 0.937 & 0.002 & 0.061 & 0.134 & 67.10 & 8.64 & 0.0019 & 0.0213\\
& $\cumSeg$ & - & - & 0.999 & 0.001 & 0.000 & 0.001 & 0.35 & 0.10 & 0.0006 & 0.0199\\
\hline
n = 10000, & $\operatorname{HS}(0.1)$ & 0.013 & 0.185 & 0.796 & 0.005 & 0.000 & 0.218 & 661.16 & 755.83 & 0.0212 & 0.0684\\
K = 10, & $\operatorname{HS}(0.3)$ & 0.003 & 0.076 & 0.890 & 0.031 & 0.001 & 0.113 & 543.91 & 548.21 & 0.0167 & 0.0585\\
$\lambda_{\min}=50$, & $\operatorname{HS}(0.5)$ & 0.001 & 0.041 & 0.886 & 0.069 & 0.003 & 0.117 & 513.55 & 468.37 & 0.0147 & 0.0542\\
$\mu=\mu_R$, & $\CBS$ & 0.000 & 0.001 & 0.191 & 0.155 & 0.653 & 2.636 & 1590.35 & 276.51 & 0.0092 & 0.0358\\
$\sigma=\sigma_1$ & $\cumSeg$ & 0.206 & 0.118 & 0.413 & 0.193 & 0.070 & 0.984 & 790.10 & 1054.73 & 0.0146 & 0.0502\\
 \hline
n = 10000, & $\operatorname{HS}(0.1)$ & 0.014 & 0.205 & 0.776 & 0.006 & 0.000 & 0.238 & 421.19 & 513.32 & 0.0156 & 0.0556\\
K = 10, & $\operatorname{HS}(0.3)$ & 0.001 & 0.077 & 0.894 & 0.027 & 0.001 & 0.108 & 348.50 & 358.14 & 0.0119 & 0.0475\\
$\lambda_{\min}=50$, & $\operatorname{HS}(0.5)$ & 0.000 & 0.038 & 0.897 & 0.062 & 0.002 & 0.105 & 344.93 & 311.35 & 0.0106 & 0.0446\\
$\mu=\mu_R$, & $\CBS$ & 0.000 & 0.000 & 0.215 & 0.174 & 0.611 & 2.362 & 1454.85 & 247.26 & 0.0085 & 0.0346\\
$\sigma=\sigma_2$ & $\cumSeg$ & 0.114 & 0.102 & 0.467 & 0.236 & 0.082 & 0.795 & 756.12 & 720.95 & 0.0136 & 0.0478\\
 \hline
n = 10000, & $\operatorname{HS}(0.1)$ & 0.019 & 0.233 & 0.744 & 0.004 & 0.000 & 0.276 & 161.27 & 251.06 & 0.0053 & 0.0301\\
K = 10, & $\operatorname{HS}(0.3)$ & 0.002 & 0.069 & 0.904 & 0.025 & 0.000 & 0.099 & 137.86 & 136.79 & 0.0036 & 0.0254\\
$\lambda_{\min}=50$, & $\operatorname{HS}(0.5)$ & 0.000 & 0.029 & 0.906 & 0.062 & 0.003 & 0.096 & 170.29 & 128.56 & 0.0033 & 0.0248\\
$\mu=\mu_R$, & $\CBS$ & 0.000 & 0.000 & 0.246 & 0.173 & 0.582 & 2.189 & 1134.85 & 214.71 & 0.0047 & 0.0263\\
$\sigma=\sigma_3$ & $\cumSeg$ & 0.054 & 0.051 & 0.516 & 0.279 & 0.101 & 0.669 & 749.33 & 499.10 & 0.0070 & 0.0346\\
\hline
\end{tabular}
\caption{Simulations with standard deviations $\sigma_1(\cdot)$-$\sigma_3(\cdot)$ from Figure \ref{fig:std} and $C=200$. Columns from left to right: setting, method, proportions of $\hat{K}-K$ and averages of the corresponding error criteria. $\operatorname{HS}(\alpha)$ denotes $\HSMUCE$ at significance level $\alpha$.}
\label{tab:robustvar}
\end{table}

\begin{table}[!ht]
\scriptsize
\centering
\begin{tabular}
{l||l||cc
ccc||c|cc|cc}
Setting & Method & $\leq -2$ & -1 & 0 & +1 & $\geq +2$ & $\vert\hat{K}-K\vert$ & $\operatorname{FPSLE}$ & $\operatorname{FNSLE}$ & $\operatorname{MISE}$ & $\operatorname{MIAE}$\\
   \hline
$\mu=\mu_T$, & $\operatorname{HS}(0.1)$ & 0.137 & 0.421 & 0.439 & 0.003 & 0.000 & 0.709 & 24.74 & 58.06 & 0.3976 & 0.2449\\
$a=0.01$, & $\operatorname{HS}(0.3)$ & 0.019 & 0.209 & 0.741 & 0.032 & 0.000 & 0.279 & 15.65 & 24.13 & 0.1814 & 0.1681\\
$b=0.1$, & $\operatorname{HS}(0.5)$ & 0.003 & 0.091 & 0.822 & 0.081 & 0.003 & 0.184 & 17.32 & 15.63 & 0.1206 & 0.1474\\
$\sigma=\sigma_R$ & $\CBS$ & 0.001 & 0.011 & 0.383 & 0.290 & 0.315 & 1.141 & 72.93 & 22.14 & 0.1321 & 0.1535\\
& $\cumSeg$ & 0.443 & 0.237 & 0.192 & 0.085 & 0.043 & 1.663 & 95.08 & 225.57 & 0.3080 & 0.2823\\
\hline
$\mu=\mu_T$, & $\operatorname{HS}(0.1)$ & 0.149 & 0.360 & 0.370 & 0.104 & 0.016 & 0.821 & 73.52 & 94.89 & 0.4410 & 0.3070\\
$a=0.01$, & $\operatorname{HS}(0.3)$ & 0.029 & 0.181 & 0.496 & 0.226 & 0.067 & 0.611 & 78.00 & 62.38 & 0.2304 & 0.2376\\
$b=0.3$, & $\operatorname{HS}(0.5)$ & 0.007 & 0.092 & 0.466 & 0.306 & 0.129 & 0.697 & 88.73 & 53.42 & 0.1595 & 0.2169\\
$\sigma=\sigma_R$ & $\CBS$ & 0.001 & 0.006 & 0.082 & 0.135 & 0.776 & 3.243 & 249.67 & 74.97 & 0.1646 & 0.2325\\
& $\cumSeg$ & 0.439 & 0.233 & 0.200 & 0.086 & 0.043 & 1.652 & 107.57 & 237.10 & 0.3394 & 0.3222\\
\hline
$\mu=\mu_T$, & $\operatorname{HS}(0.1)$ & 0.140 & 0.287 & 0.323 & 0.176 & 0.075 & 0.936 & 134.92 & 135.90 & 0.5279 & 0.4052\\
$a=0.01$, & $\operatorname{HS}(0.3)$ & 0.032 & 0.146 & 0.327 & 0.298 & 0.197 & 0.970 & 146.40 & 103.03 & 0.3106 & 0.3393\\
$b=0.5$, & $\operatorname{HS}(0.5)$ & 0.009 & 0.076 & 0.258 & 0.329 & 0.328 & 1.223 & 162.18 & 92.24 & 0.2410 & 0.3207\\
$\sigma=\sigma_R$ & $\CBS$ & 0.002 & 0.004 & 0.020 & 0.043 & 0.932 & 5.623 & 435.32 & 124.66 & 0.2462 & 0.3440\\
& $\cumSeg$ & 0.420 & 0.244 & 0.190 & 0.093 & 0.054 & 1.641 & 127.88 & 248.73 & 0.3921 & 0.3823\\
\hline
$\mu=\mu_T$, & $\operatorname{HS}(0.1)$ & 0.128 & 0.424 & 0.446 & 0.002 & 0.000 & 0.693 & 23.50 & 56.36 & 0.4066 & 0.2415\\
$a=0.025$, & $\operatorname{HS}(0.3)$ & 0.017 & 0.201 & 0.759 & 0.023 & 0.001 & 0.259 & 13.43 & 22.21 & 0.1854 & 0.1628\\
$b=0.1$, & $\operatorname{HS}(0.5)$ & 0.003 & 0.086 & 0.843 & 0.066 & 0.002 & 0.162 & 14.51 & 13.77 & 0.1218 & 0.1416\\
$\sigma=\sigma_R$ & $\CBS$ & 0.002 & 0.008 & 0.395 & 0.287 & 0.308 & 1.135 & 64.07 & 19.55 & 0.1304 & 0.1471\\
& $\cumSeg$ & 0.440 & 0.240 & 0.188 & 0.086 & 0.046 & 1.672 & 95.36 & 229.25 & 0.3058 & 0.2796\\
\hline
$\mu=\mu_T$, & $\operatorname{HS}(0.1)$ & 0.108 & 0.344 & 0.411 & 0.111 & 0.027 & 0.738 & 58.57 & 73.26 & 0.4223 & 0.2606\\
$a=0.025$, & $\operatorname{HS}(0.3)$ & 0.016 & 0.138 & 0.468 & 0.252 & 0.126 & 0.715 & 77.74 & 50.19 & 0.2143 & 0.1929\\
$b=0.3$, & $\operatorname{HS}(0.5)$ & 0.003 & 0.058 & 0.382 & 0.315 & 0.243 & 0.989 & 101.66 & 48.03 & 0.1503 & 0.1749\\
$\sigma=\sigma_R$ & $\CBS$ & 0.002 & 0.003 & 0.050 & 0.065 & 0.880 & 5.370 & 356.54 & 85.59 & 0.1585 & 0.2027\\
& $\cumSeg$ & 0.438 & 0.241 & 0.184 & 0.091 & 0.046 & 1.678 & 101.67 & 234.23 & 0.3243 & 0.2958\\
\hline
$\mu=\mu_T$, & $\operatorname{HS}(0.1)$ & 0.054 & 0.180 & 0.276 & 0.226 & 0.264 & 1.247 & 164.83 & 114.26 & 0.4732 & 0.3127\\
$a=0.025$, & $\operatorname{HS}(0.3)$ & 0.007 & 0.060 & 0.195 & 0.229 & 0.509 & 1.945 & 214.89 & 100.21 & 0.2748 & 0.2586\\
$b=0.5$, & $\operatorname{HS}(0.5)$ & 0.001 & 0.027 & 0.127 & 0.191 & 0.654 & 2.591 & 256.78 & 101.30 & 0.2115 & 0.2465\\
$\sigma=\sigma_R$ & $\CBS$ & 0.000 & 0.001 & 0.006 & 0.011 & 0.982 & 10.383 & 709.91 & 149.02 & 0.2261 & 0.2993\\
& $\cumSeg$ & 0.439 & 0.238 & 0.184 & 0.088 & 0.050 & 1.698 & 113.20 & 245.71 & 0.3520 & 0.3206\\
\hline
$\mu=\mu_{T_\sigma}$, & $\operatorname{HS}(0.1)$ & 0.151 & 0.435 & 0.411 & 0.002 & 0.000 & 0.755 & 26.56 & 64.22 & 0.4469 & 0.3000\\
$a=0.01$, & $\operatorname{HS}(0.3)$ & 0.021 & 0.230 & 0.725 & 0.023 & 0.000 & 0.297 & 15.80 & 27.08 & 0.2289 & 0.2245\\
$b=0.2$, & $\operatorname{HS}(0.5)$ & 0.004 & 0.103 & 0.819 & 0.071 & 0.003 & 0.188 & 17.53 & 17.34 & 0.1622 & 0.2019\\
$\sigma=\sigma_R$ & $\CBS$ & 0.004 & 0.012 & 0.342 & 0.305 & 0.338 & 1.225 & 80.55 & 26.89 & 0.1689 & 0.2056\\
& $\cumSeg$ & 0.422 & 0.233 & 0.193 & 0.095 & 0.056 & 1.653 & 107.36 & 234.27 & 0.3431 & 0.3292\\
\hline
$\mu=\mu_{T_\sigma}$, & $\operatorname{HS}(0.1)$ & 0.254 & 0.410 & 0.298 & 0.036 & 0.001 & 1.012 & 68.27 & 115.70 & 0.6346 & 0.4582\\
$a=0.01$, & $\operatorname{HS}(0.3)$ & 0.055 & 0.261 & 0.488 & 0.176 & 0.019 & 0.591 & 81.26 & 80.78 & 0.4483 & 0.3953\\
$b=0.5$, & $\operatorname{HS}(0.5)$ & 0.015 & 0.129 & 0.466 & 0.321 & 0.070 & 0.624 & 100.70 & 72.34 & 0.3826 & 0.3694\\
$\sigma=\sigma_R$ & $\CBS$ & 0.002 & 0.004 & 0.022 & 0.059 & 0.914 & 4.231 & 357.42 & 117.42 & 0.2907 & 0.3140\\
& $\cumSeg$ & 0.332 & 0.211 & 0.198 & 0.139 & 0.121 & 1.575 & 179.79 & 280.65 & 0.4522 & 0.4317\\
\hline
$\mu=\mu_{T_\sigma}$, & $\operatorname{HS}(0.1)$ & 0.136 & 0.440 & 0.422 & 0.002 & 0.000 & 0.726 & 24.82 & 59.14 & 0.4567 & 0.3165\\
$a=0.025$, & $\operatorname{HS}(0.3)$ & 0.017 & 0.219 & 0.746 & 0.018 & 0.000 & 0.273 & 14.12 & 24.13 & 0.2432 & 0.2435\\
$b=0.2$, & $\operatorname{HS}(0.5)$ & 0.003 & 0.096 & 0.843 & 0.056 & 0.002 & 0.162 & 14.44 & 14.77 & 0.1756 & 0.2222\\
$\sigma=\sigma_R$ & $\CBS$ & 0.003 & 0.011 & 0.353 & 0.295 & 0.338 & 1.231 & 71.51 & 23.32 & 0.1892 & 0.2287\\
& $\cumSeg$ & 0.432 & 0.238 & 0.182 & 0.094 & 0.054 & 1.688 & 103.19 & 236.34 & 0.3604 & 0.3501\\
\hline
$\mu=\mu_{T_\sigma}$, & $\operatorname{HS}(0.1)$ & 0.181 & 0.433 & 0.370 & 0.016 & 0.000 & 0.831 & 37.89 & 75.31 & 0.7365 & 0.5244\\
$a=0.025$, & $\operatorname{HS}(0.3)$ & 0.033 & 0.240 & 0.594 & 0.125 & 0.009 & 0.450 & 42.28 & 44.37 & 0.5518 & 0.4736\\
$b=0.5$, & $\operatorname{HS}(0.5)$ & 0.007 & 0.110 & 0.541 & 0.281 & 0.061 & 0.534 & 64.52 & 41.39 & 0.4981 & 0.4582\\
$\sigma=\sigma_R$ & $\CBS$ & 0.002 & 0.002 & 0.023 & 0.043 & 0.929 & 5.589 & 365.42 & 103.32 & 0.4594 & 0.4362\\
& $\cumSeg$ & 0.316 & 0.184 & 0.179 & 0.139 & 0.182 & 1.735 & 158.60 & 254.38 & 0.6153 & 0.5317\\
\hline
\end{tabular}
\caption{Simulations with small periodic trends in the mean and $n=1\,000$, $K=10$, $\lambda_{\min}=30$ and $C=200$. Columns from left to right: setting, method, proportions of $\hat{K}-K$ and averages of the corresponding error criteria. $\operatorname{HS}(\alpha)$ denotes $\HSMUCE$ at significance level $\alpha$.}
\label{tab:robustnesstrend}
\end{table}

\begin{table}[!ht]
\scriptsize
\centering
\begin{tabular}
{l||l||cc
ccc||c|cc|cc}
Setting & Method & $\leq -2$ & -1 & 0 & +1 & $\geq +2$ & $\vert\hat{K}-K\vert$ & $\operatorname{FPSLE}$ & $\operatorname{FNSLE}$ & $\operatorname{MISE}$ & $\operatorname{MIAE}$\\
\hline
n = 1000, & $\operatorname{HS}(0.1)$ & - & - & 0.982 & 0.018 & 0.000 & 0.018 & 9.05 & 2.53 & 0.0031 & 0.0347\\
K = 0, & $\operatorname{HS}(0.3)$ & - & - & 0.927 & 0.071 & 0.001 & 0.074 & 37.05 & 10.31 & 0.0034 & 0.0361\\
$\mu=\mu_R\equiv 0$, & $\operatorname{HS}(0.5)$ & - & - & 0.824 & 0.167 & 0.009 & 0.185 & 92.40 & 26.35 & 0.0043 & 0.0392\\
$\sigma=\sigma_R$, & $\operatorname{S}(0.1)$ & - & - & 0.001 & 0.001 & 0.999 & 11.859 & 5929.70 & 369.55 & 0.8710 & 0.1491\\
& $\operatorname{S}(0.3)$ & - & - & 0.000 & 0.000 & 1.000 & 14.803 & 7401.65 & 397.77 & 0.9338 & 0.1674\\
& $\operatorname{S}(0.5)$ & - & - & 0.000 & 0.000 & 1.000 & 16.862 & 8431.00 & 411.30 & 0.9730 & 0.1787\\
& $\CBS$ & - & - & 0.991 & 0.000 & 0.009 & 0.018 & 9.05 & 1.13 & 0.0058 & 0.0340\\
& $\cumSeg$ & - & - & 0.955 & 0.001 & 0.044 & 0.188 & 93.90 & 11.98 & 0.0682 & 0.0375\\
\hline
n = 1000, & $\operatorname{HS}(0.1)$ & 0.008 & 0.136 & 0.848 & 0.007 & 0.000 & 0.160 & 25.70 & 62.95 & 0.0120 & 0.0578\\
K = 2, & $\operatorname{HS}(0.3)$ & 0.003 & 0.086 & 0.876 & 0.035 & 0.000 & 0.127 & 29.74 & 44.61 & 0.0103 & 0.0537\\
$\lambda_{\min} = 30$, & $\operatorname{HS}(0.5)$ & 0.001 & 0.055 & 0.851 & 0.090 & 0.003 & 0.152 & 44.21 & 38.62 & 0.0097 & 0.0524\\
$\mu=\mu_R$, & $\operatorname{S}(0.1)$ & 0.000 & 0.000 & 0.001 & 0.001 & 0.998 & 11.104 & 2683.40 & 250.21 & 0.3046 & 0.1232\\
$\sigma\equiv 1$, & $\operatorname{S}(0.3)$ & 0.000 & 0.000 & 0.000 & 0.000 & 1.000 & 13.984 & 3361.80 & 283.17 & 0.3264 & 0.1340\\
& $\operatorname{S}(0.5)$ & 0.000 & 0.000 & 0.000 & 0.000 & 1.000 & 15.991 & 3836.28 & 302.43 & 0.3400 & 0.1419\\
& $\CBS$ & 0.053 & 0.161 & 0.726 & 0.043 & 0.018 & 0.346 & 46.69 & 119.74 & 0.0241 & 0.0712\\
& $\cumSeg$ & 0.025 & 0.097 & 0.722 & 0.093 & 0.063 & 0.456 & 108.11 & 81.86 & 0.0557 & 0.0707\\
\hline 
n = 10000, & $\operatorname{HS}(0.1)$ & 0.002 & 0.079 & 0.916 & 0.004 & 0.000 & 0.086 & 93.09 & 119.69 & 0.0130 & 0.0425\\
K = 10, & $\operatorname{HS}(0.3)$ & 0.000 & 0.025 & 0.957 & 0.017 & 0.000 & 0.043 & 86.32 & 81.78 & 0.0105 & 0.0397\\
$\lambda_{\min}=50$, & $\operatorname{HS}(0.5)$ & 0.000 & 0.012 & 0.950 & 0.038 & 0.000 & 0.050 & 99.93 & 76.24 & 0.0097 & 0.0389\\
$\mu=\mu_R$, & $\CBS$ & 0.467 & 0.148 & 0.167 & 0.107 & 0.111 & 2.516 & 1356.25 & 6254.20 & 0.0877 & 0.1308\\
$\sigma=\sigma_R$ & $\cumSeg$ & 0.586 & 0.192 & 0.136 & 0.055 & 0.032 & 2.242 & 997.13 & 3005.71 & 0.0433 & 0.0906\\
\hline
\end{tabular}
\caption{Simulations with $t_3$ distributed errors and $C=200$. Columns from left to right: setting, method, proportions of $\hat{K}-K$ and averages of the corresponding error criteria. $\operatorname{HS}(\alpha)$ and $\operatorname{S}(\alpha)$ denote $\HSMUCE$ and $\SMUCE$ at significance level $\alpha$, respectively.}
\label{tab:robustnonnormal}
\end{table}

\section{Proofs}\label{sec:proofs}
In this section we collect the proofs together with some auxiliary statements.
\subsection{Proof of Lemma \ref{lemma:uniqunesscritval}}
\begin{proof}[Proof of Lemma \ref{lemma:uniqunesscritval}]
A single statistic $T_i^j(Z,0)$ has the c.d.f. $F_{1,j-i}(\cdot)$ of an F-distribution with $(1,j-i)$ degrees of freedom. 
Thus, $F_k(\cdot)=F_{1,2^k-1}(\cdot)^{\vert \mathcal{D}_k\vert}$ is continuous and strictly monotonically increasing for positive arguments. 
Now, it follows from equation \eqref{eq:balancing} that
\begin{equation}\label{eq:qkq1}
q_k=F_k^{-1}\left(1-\frac{\beta_k}{\beta_1}\big(1-F_1(q_1)\big)\right)\text{ for } k=2,\ldots,\sn.
\end{equation}
This together with equation \eqref{eq:significancelevel} yields
\begin{equation*}
G(q_1):=F\left(q_1,F_2^{-1}\left(1-\frac{\beta_2}{\beta_1}\big(1-F_1(q_1)\big)\right),\ldots,F_\sn^{-1}\left(1-\frac{\beta_\sn}{\beta_1}\big(1-F_1(q_1)\big)\right)\right)=1-\alpha.
\end{equation*}
Note, that $F$ is continuous and $\lim_{q_k\rightarrow 0}{F\left(q_1,\ldots,q_\sn\right)}=0$ for all $k=1,\ldots,\sn$ as well as $\lim_{q_1,\ldots,q_\sn\rightarrow \infty}{F\left(q_1,\ldots,q_\sn\right)}=1$. 
Thus, the function $G$ is continuous, strictly monotonically increasing on $[0,\infty)$ and attains all values in $[0,1)$. Therefore, the existence of the vector of critical values follows from the intermediate value theorem and the vector is also unique.
\end{proof}

\subsection{Proof of Lemma \ref{lemma:boundcritval}}
First of all, recall from the proof of Lemma \ref{lemma:uniqunesscritval} that the statistic $T_k$ has c.d.f. $F_{1,2^k-1}(\cdot)^{\vert \mathcal{D}_k\vert}$. For every $k=1,\ldots,\sn$ we use the transformation 
\begin{equation*}
U_k:=F_{1,2^k-1}\left(T_k\right)^{\vert \mathcal{D}_k\vert}
\end{equation*}
and the identity 
\begin{equation*}
T_k=F_{1,2^k-1}^{-1}\left(U_k^{\vert \mathcal{D}_k\vert^{-1}}\right).
\end{equation*}
Here, $F^{-1}_{1,2^k-1}\left(\cdot\right)$ denotes the quantile function of an F-distribution with $(1,2^k-1)$ degrees of freedom. 
Analogously, we define 
\begin{equation*}
q_{k,U}:=F_{1,2^k-1}\left(q_k\right)^{\vert \mathcal{D}_k\vert}
\end{equation*}
and have the identity
\begin{equation}\label{eq:identityqU}
q_k=F_{1,2^k-1}^{-1}\left(q_{k,U}^{\vert \mathcal{D}_k\vert^{-1}}\right).
\end{equation}
Then, the events $U_k>q_{k,\operatorname{U}}$ and $T_k>q_{k}$ are equivalent and therefore the vector $\textbf{q}_U=(q_{1,\operatorname{U}}, \ldots, q_{\sn,\operatorname{U}})$ satisfies similar conditions to the equations \eqref{eq:significancelevel} and \eqref{eq:balancing}, i.e.
\begin{equation}\label{eq:significancelevelqU}
1-\Pj\left(U_1\leq q_{1,U},\ldots,U_\sn \leq q_{\sn,U}\right)=\alpha
\end{equation}
and
\begin{equation}\label{eq:balancingqU}
\frac{1-\Pj\left(U_1\leq q_{1,U}\right)}{\beta_1}=\cdots=\frac{1-\Pj\left(U_\sn\leq q_{\sn,U}\right)}{\beta_\sn}.
\end{equation}
The following bounds can be interpreted as a weighted version of the Bonferroni-inequality.
\begin{Lemma}\label{lemma:boundqU}
$q_{k,U}\leq 1-\alpha\beta_k$ for $k=1,\ldots,\sn$.
\end{Lemma}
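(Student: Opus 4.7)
The plan is to exploit two structural features: first, that each transformed statistic $U_k$ is uniformly distributed on $[0,1]$, and second, that the weighted balancing condition \eqref{eq:balancingqU} forces all the tail probabilities $1 - q_{k,U}$ to be a common multiple of the weights $\beta_k$. A simple union bound on the joint probability in \eqref{eq:significancelevelqU} then pins down this common multiple.

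First I would verify the marginal distribution claim. The local statistics $T_i^j(Z,0)$ on disjoint intervals of $\mathcal{D}_k$ are independent F-distributed with $(1, 2^k-1)$ degrees of freedom, so the plug-ins $F_{1,2^k-1}(T_i^j(Z,0))$ are i.i.d.\ $\operatorname{Uniform}[0,1]$. Their maximum over $|\mathcal{D}_k|$ intervals has c.d.f.\ $v \mapsto v^{|\mathcal{D}_k|}$, and raising this maximum to the $|\mathcal{D}_k|$-th power returns a uniform variable. Hence $\Pj(U_k \leq q_{k,U}) = q_{k,U}$ for every $k$.

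Next I would combine this with \eqref{eq:balancingqU}. Writing the common value of the ratios as $c$, uniformity gives $1 - q_{k,U} = c\,\beta_k$ for all $k$, so $q_{k,U} = 1 - c\beta_k$. It remains to show $c \geq \alpha$. For this, apply the union bound to \eqref{eq:significancelevelqU}:
\begin{equation*}
\alpha \;=\; \Pj\!\left(\bigcup_{k=1}^{\sn}\{U_k > q_{k,U}\}\right) \;\leq\; \sum_{k=1}^{\sn}\bigl(1 - q_{k,U}\bigr) \;=\; c\sum_{k=1}^{\sn}\beta_k \;=\; c,
\end{equation*}
using \eqref{eq:weights}. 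Therefore $q_{k,U} = 1 - c\beta_k \leq 1 - \alpha\beta_k$, which is the claim.

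No real obstacle is expected: the entire argument is essentially a weighted Bonferroni inequality as already hinted at in the excerpt. The only mildly nontrivial ingredient is confirming that $U_k$ is uniform, which rests on the disjointness of same-length dyadic intervals and the standard probability integral transform; the remainder is algebraic.
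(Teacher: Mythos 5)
Your proof is correct and follows essentially the same route as the paper: it uses the uniformity of $U_k$ (so that $\Pj(U_k\leq q_{k,U})=q_{k,U}$), the balancing condition \eqref{eq:balancingqU} to write $1-q_{k,U}$ as a common constant times $\beta_k$, and a union bound on \eqref{eq:significancelevelqU} together with $\sum_k\beta_k=1$ to conclude. The paper's proof is word-for-word this weighted Bonferroni argument, merely phrased with the identity $1-q_{j,U}=(1-q_{k,U})\beta_j/\beta_k$ instead of introducing the constant $c$.
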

\begin{proof}
We have
$\Pj\left(U_j\leq q_{j,U}\right)=q_{j,U}$ for $j=1,\ldots,\sn$, since $U_j$ is uniformly distributed. Moreover, it follows from condition \eqref{eq:balancingqU} that $1-q_{j,U}=(1-q_{k,U})\beta_j/\beta_k$. Combining this with equation \eqref{eq:significancelevelqU} and $\sum_{j=1}^{\sn}{\beta_j}=1$ yields
\begin{equation*}
\begin{split}
\alpha = & 1-\Pj\left(U_1\leq q_{1,U},\ldots,U_\sn \leq q_{\sn,U}\right)\\
\leq & \sum_{j=1}^{\sn}{\Pj\left(U_j>q_{j,U}\right)}=\sum_{j=1}^{\sn}{(1-q_{j,U})}
= \sum_{j=1}^{\sn}{(1-q_{k,U})\frac{\beta_j}{\beta_k}}=\frac{1-q_{k,U}}{\beta_k},
\end{split}
\end{equation*}
which proves the assertion.
\end{proof}
Lemma \ref{lemma:boundFquantile} bounds the quantile function of an F-distribution with $(1,c)$ degrees of freedom.
\begin{Lemma}[Bounds on the F-quantiles]\label{lemma:boundFquantile}
Let $F^{-1}_{1,c}(y)$ be the quantile function of an F-distribution with $(1,c)$ degrees of freedom, then
\begin{equation*}
c\left[\left(1-y^2\right)^{-\frac{1}{c}}-1\right]\leq F^{-1}_{1,c}(y)\leq c\left[\left(1-y^2\right)^{-\frac{2}{c-\frac{1}{2}}}-1\right].
\end{equation*}
\end{Lemma}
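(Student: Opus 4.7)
The plan is to reduce the claim to an equivalent sandwich inequality for the Beta CDF and then to prove each half by a sign analysis of a carefully chosen auxiliary function that vanishes at both endpoints. The first step uses the standard identity $F_{1,c}(x)=I_{x/(x+c)}(1/2,c/2)=F_B(z)$, where $F_B$ is the CDF of a $\text{Beta}(1/2,c/2)$ random variable and $z:=x/(x+c)$. Writing $y=F_{1,c}(q)$ and $z=q/(q+c)$ so that $1-z=c/(q+c)$, a short algebraic manipulation shows that the two claimed bounds on $q$ are equivalent to the sandwich
$$
(1-z)^c \;\leq\; 1-F_B(z)^2 \;\leq\; (1-z)^{(c-1/2)/2},\qquad z\in[0,1].
$$

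For the lower half I will consider $\phi_A(z):=(1-F_B(z)^2)-(1-z)^c$, which vanishes at $z=0$ and $z=1$. Using the explicit density $F_B'(z)=\tfrac{c_0}{2}z^{-1/2}(1-z)^{c/2-1}$ with $c_0:=2/B(1/2,c/2)$, one gets $\phi_A'(z)=z^{-1/2}(1-z)^{c/2-1}\psi_A(z)$ where $\psi_A(z):=c\sqrt{z}(1-z)^{c/2}-c_0F_B(z)$. A second differentiation yields $\psi_A'(z)=\tfrac{(1-z)^{c/2-1}}{2\sqrt{z}}[(c-c_0^2)-c(c+1)z]$, which changes sign exactly once on $(0,1)$ provided $c_0^2<c$. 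This last inequality follows from the log-convexity of $\Gamma$: $\Gamma((c+1)/2)^2\leq \Gamma(c/2)\Gamma(c/2+1)=(c/2)\Gamma(c/2)^2$, so $c_0\leq\sqrt{2c/\pi}$. Combined with $\psi_A(0)=0$ and $\psi_A(1)=-c_0<0$, one deduces that $\psi_A$ is first positive and then negative on $(0,1)$. Hence $\phi_A$ increases from $0$ to a positive maximum and returns to $0$, so $\phi_A\geq 0$ throughout, which is the lower bound.

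The upper bound, i.e.\ the inequality $\phi_B(z):=(1-z)^{(c-1/2)/2}-(1-F_B(z)^2)\geq 0$, follows the same blueprint but is technically more delicate because the two terms carry different powers of $1-z$. After factoring out $(1-z)^{(c-5/2)/2}$, the sign of $\phi_B'$ is governed by $\psi_B(z):=c_0\,\eta(z)-(c-1/2)/2$ with $\eta(z):=F_B(z)z^{-1/2}(1-z)^{1/4}$, whose limits are $\eta(0^+)=c_0$ and $\eta(1^-)=0$. The main obstacle will be to show that $\eta$ is strictly decreasing on $(0,1)$, so that $\psi_B$ has a unique sign change. My plan is to write $\eta'(z)=\Upsilon(z)/[4z^{3/2}(1-z)^{3/4}]$ with $\Upsilon(z):=2c_0\sqrt{z}(1-z)^{c/2}-(2-z)F_B(z)$ and to differentiate $\Upsilon$ twice; the resulting $\Upsilon''$ factors as a positive function times the linear expression $(1-2c)+z(2c^2-c-3)$, whose endpoint signs can be controlled in both regimes $c\leq 2$ and $c>2$. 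In each case one checks that $\Upsilon\leq 0$ on $[0,1]$, from which monotonicity of $\eta$ and hence the unique zero of $\psi_B$ follow. Together with $\phi_B(0)=\phi_B(1)=0$ and a short verification that $c_0^2>(c-1/2)/2$ (via Wendel's inequality for $c\geq 3$ and direct numerical computation for the remaining small values), this yields $\phi_B\geq 0$ and completes the proof.
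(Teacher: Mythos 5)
Your proposal is correct, and it takes a genuinely different route from the paper's. The paper proves the lemma in two short steps by invoking external results: Theorem 4.2 of Fujikoshi and Mukaihata (1993) sandwiches $F^{-1}_{1,c}(y)$ between $c[\exp((\chi^2_1)^{-1}(y)/c)-1]$ and $c[\exp((\chi^2_1)^{-1}(y)/(c-\tfrac{1}{2}))-1]$, and then the identity $(\chi^2_1)^{-1}(y)=\Phi^{-1}((y+1)/2)^2$ together with the classical two-sided bound on $\Phi$ from Johnson et al.\ gives $-\log(1-y^2)\le(\chi^2_1)^{-1}(y)\le-2\log(1-y^2)$; substitution yields exactly the stated bounds. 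You instead reduce the claim, via $F_{1,c}(x)=F_B(x/(x+c))$ and $q=c[(1-z)^{-1}-1]$, to the equivalent Beta-CDF sandwich $(1-z)^c\le 1-F_B(z)^2\le(1-z)^{(c-1/2)/2}$ and prove it from scratch by unimodality: each auxiliary function vanishes at both endpoints and its derivative has exactly one sign change, established by further differentiation plus control of $c_0=2/B(1/2,c/2)$ through log-convexity of $\Gamma$ (giving $c_0^2\le 2c/\pi<c$) and Wendel's inequality (giving $c_0^2>(c-1/2)/2$). I verified your derivative formulas $\psi_A'$, the factorization of $\phi_B'$, and the linear factor $(1-2c)+z(2c^2-c-3)$ in $\Upsilon''$, including the endpoint signs $\Upsilon(0)=\Upsilon'(0)=0$, $\Upsilon(1)=-1$, $\Upsilon'(1)=1$ that make both regimes $c\le 2$ and $c>2$ close; they are correct. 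One small remark: your case split for $c_0^2>(c-1/2)/2$ is unnecessary, since Wendel's bound $c_0^2\ge 2c^2/(\pi(c+1))$ reduces the claim to $(8-2\pi)c^2-\pi c+\pi>0$, a quadratic with positive leading coefficient and negative discriminant, hence valid for all $c$. Overall, the paper's route is a two-line reduction to cited inequalities, while yours is longer but self-contained and isolates the clean equivalent statement about the $\mathrm{Beta}(1/2,c/2)$ distribution that the two cited bounds jointly encode.
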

\begin{proof}
We have from \citep[Theorem 4.2]{FujikoshiMukaihata93} that
\begin{equation*}
c\left[\exp\left(\frac{\left({\chi^2_1}\right)^{-1}(y)}{c}\right)-1\right]\leq F^{-1}_{1,c}(y) \leq c\left[\exp\left(\frac{\left({\chi^2_1}\right)^{-1}(y)}{c-\frac{1}{2}}\right)-1\right],
\end{equation*}
with $\left({\chi^2_1}\right)^{-1}(y)$ the quantile function of the chi-squared distribution with one degree of freedom. Moreover, we obtain for all $y\geq 0$
\begin{equation*}
\begin{split}
& \Pj\left(\chi^2_1\leq y\right)=\Pj\left(-\sqrt{y}\leq Z \leq \sqrt{y}\right)=2\Phi\left(\sqrt{y}\right)-1\\
\Longleftrightarrow\ & \left({\chi^2_1}\right)^{-1}(y) = \Phi^{-1}\left(\frac{y+1}{2}\right)^2,
\end{split}
\end{equation*}
where $\Phi^{-1}(y)$ is the quantile of the standard gaussian distribution. Furthermore, we have from \citep[(13.48), p. 115]{Johnsonetal} that
\begin{equation*}
\frac{1}{2}\left[1+\left(1-\exp\left(-\frac{x^2}{2}\right)\right)^{\frac{1}{2}}\right] \leq \Phi\left(x\right) \leq \frac{1}{2}\Bigg[1+\bigg(1-\exp\bigg(-x^2\bigg)\bigg)^{\frac{1}{2}}\Bigg]
\end{equation*}
and so for the quantile function one finds
\begin{equation*}
\sqrt{-\log\big(1-(2y-1)^2\big)}\leq \Phi^{-1}\left(y\right) \leq \sqrt{-2\log\big(1-(2y-1)^2\big)}.
\end{equation*}
Combining the formulas proves the assertion.
\end{proof}
\begin{proof}[Proof of Lemma \ref{lemma:boundcritval}]
First of all, \eqref{eq:identityqU} and the equation $\vert \mathcal{D}_k\vert=\lfloor n2^{-k}\rfloor$ yields
\begin{equation*}
q_{k}= F_{1,2^k-1}^{-1}\left(q_{k,U}^{\vert \mathcal{D}_k\vert^{-1}}\right)
= F_{1,2^k-1}^{-1}\left(q_{k,U}^{\left\lfloor n 2^{-k}\right\rfloor^{-1}}\right)
\leq F_{1,2^k-1}^{-1}\left(q_{k,U}^{2^k/n}\right).
\end{equation*}
Moreover, it follows from the Lemmas \ref{lemma:boundqU} and \ref{lemma:boundFquantile} that
\begin{equation*}
\begin{split}
q_{k}\leq F_{1,2^k-1}^{-1}\left(q_{k,U}^{2^k/n}\right) \leq &F_{1,2^k-1}^{-1}\left((1-\alpha\beta_k)^{2^k/n}\right)\\
\leq &\left(2^k-1\right)\left[\left(1-\left((1-\alpha\beta_k)^{2^k/n}\right)^2\right)^{-\frac{2}{\left(2^k-1\right)-\frac{1}{2}}}-1\right]\\
\leq &2^k\left[\left(1-(1-\alpha\beta_k)^{2^k/n}\right)^{-\frac{4}{2^{k+1}-3}}-1\right].
\end{split}
\end{equation*}
Applying Bernoulli's inequality $\left(1-x\right)^c\leq 1-cx$ gives
\begin{equation*}
q_{k}
\leq 2^k\left[\left(1-(1-\alpha\beta_k)^{2^k/n}\right)^{-\frac{4}{2^{k+1}-3}}-1\right]
\leq 2^k\left[\left(\frac{2^k\alpha\beta_k}{n}\right)^{-\frac{4}{2^{k+1}-3}}-1\right].
\end{equation*}
Moreover, for $x,c>0$ the inequality $c^x\leq 1+2x\log(c)$ holds whenever $x\log(c)\leq 1$. Together with the assumption $k \geq 2$ we finally obtain
\begin{equation*}
q_{k}
\leq 2^k\left[\left(\frac{2^k\alpha\beta_k}{n}\right)^{-\frac{4}{2^{k+1}-3}}-1\right]
\leq 4\frac{2^{k+1}}{2^{k+1}-3}\log\left(\frac{n}{2^k\alpha\beta_k}\right)
\leq 8\log\left(\frac{n}{2^k\alpha\beta_k}\right)
\end{equation*}
if 
\begin{equation*}
2^{-k}\log\left(\frac{n}{2^k\alpha\beta_k}\right)\leq \frac{1}{2}\frac{2^{k+1}-3}{2^{k+1}}\leq \frac{1}{2}.
\end{equation*}
\end{proof}

\subsection{Exponential deviation bounds}\label{sec:prooflargedeviation}
For the subsequent proofs we need a bound for the distribution function of a single test statistic $T_i^j$ \eqref{eq:localtest} which is in our setting a bound for the c.d.f. of a non-central F-distribution.
\begin{Lemma}\label{lemma:boundcdfstat}
Let $Y_1,\ldots,Y_n$ be i.i.d. gaussian random variables with expectation $\valmu\in\R$ and variance $\valsd^2>0$. Let $x_+:=\max(x,0)$. Then, for any $\delta\neq 0$, $q>0$
\begin{equation}\label{eq:boundcdfstat}
\begin{split}
&\Pj\left(T_1^n(Y,\valmu+\delta)\leq q\right)\\
\leq & \min_{z\geq 0}\left\{\exp\left(-\frac{1}{2}\left(\frac{\Delta\sqrt{n}}{2}-\frac{q(1+z)}{\Delta\sqrt{n}}\right)_+^2\right)+\exp\left(-(n-1)\frac{z-\log(1+z)}{2}\right)\right\},
\end{split}
\end{equation}
where $\Delta:=\vert \delta\vert/\valsd$.
\end{Lemma}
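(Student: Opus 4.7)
The plan is to reduce the statistic to a canonical form involving an independent Gaussian and chi-squared, then split the event at a free parameter $z \geq 0$ and apply the two standard one-sided tail bounds.

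First I would normalize. By the location–scale invariance of the statistic, I may assume $\valmu = 0$ and $\valsd = 1$. Then $\sqrt{n}\,\overline{Y} \sim \Nor(0,1)$ and $V := (n-1)\hat{\valsd}^2 \sim \chi^2_{n-1}$ are independent. Setting $Z := \Delta\sqrt{n}\,\mathrm{sgn}(\delta) - \sqrt{n}\,\overline{Y}$ gives $Z \sim \Nor(\Delta\sqrt{n}\,\mathrm{sgn}(\delta), 1)$ and, rewriting \eqref{eq:localtest}, $T_1^n(Y,\valmu+\delta) = (n-1) Z^2/V$. Using the symmetry $\overline{Y} \overset{d}{=} -\overline{Y}$, I may assume $\delta > 0$, so that $Z \sim \Nor(\Delta\sqrt{n},1)$.

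The core step is the following decomposition: for every $z \geq 0$,
\begin{equation*}
\Pj\bigl((n-1)Z^2/V \leq q\bigr)
\leq \Pj\bigl(V \geq (n-1)(1+z)\bigr) + \Pj\bigl(Z^2 \leq q(1+z)\bigr).
\end{equation*}
For the first term I invoke the standard Chernoff bound for a $\chi^2_{n-1}$ variable, which yields exactly $\exp(-(n-1)(z-\log(1+z))/2)$ after optimizing the exponential tilt (a textbook calculation). For the second term, since $Z$ has positive mean $\Delta\sqrt{n}$, I bound $\Pj(Z^2 \leq q(1+z)) \leq \Pj(Z \leq \sqrt{q(1+z)})$ and then use the standard Gaussian tail inequality $\Phi(-x) \leq \exp(-x^2/2)$ to obtain $\exp\bigl(-(\Delta\sqrt{n} - \sqrt{q(1+z)})_+^2/2\bigr)$.

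The last step is the algebraic simplification that produces the exact form stated. Setting $a = \Delta\sqrt{n}$, $c = \sqrt{q(1+z)}$, I need $(a - c)_+ \geq (a/2 - c^2/a)_+$, which in the nontrivial regime $a \geq c\sqrt{2}$ is equivalent to $(a-c)^2 + c^2 \geq 0$, i.e.~trivially true, and which holds in the remaining regimes because the right-hand positive part vanishes. This gives the claimed bound for any fixed $z \geq 0$; taking the infimum over $z$ finishes the proof. I do not expect a real obstacle here — the only subtlety is recognizing that the apparently exotic expression $\Delta\sqrt{n}/2 - q(1+z)/(\Delta\sqrt{n})$ in \eqref{eq:boundcdfstat} is simply an AM--GM relaxation of the sharper $\Delta\sqrt{n} - \sqrt{q(1+z)}$, traded for an expression that is more convenient when the bound is later combined with the critical-value bound of Lemma~\ref{lemma:boundcritval}.
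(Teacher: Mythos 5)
Your proof is correct and rests on exactly the same key step as the paper's: the split at the threshold $1+z$ for the variance ratio, i.e. $\Pj(T_1^n\le q)\le\Pj\big(\hat{\valsd}_{1n}^2/\valsd^2>1+z\big)+\Pj\big(n(\overline{Y}_{1n}-\valmu-\delta)^2/\valsd^2\le q(1+z)\big)$, which exploits the independence of $\overline{Y}_{1n}$ and $\hat{\valsd}_{1n}^2$. Where you differ is in how the two resulting terms are handled: the paper simply cites \citep[Theorem 2.1]{SpokoinyZhilova13} for the $\chi^2_{n-1}$ upper tail and \citep[Lemma 7.3]{SMUCE} for the noncentral Gaussian term (the latter delivering the exponent $(\Delta\sqrt{n}/2-q(1+z)/(\Delta\sqrt{n}))_+^2$ directly), whereas you re-derive both from scratch --- the standard Chernoff computation for the chi-squared tail, and for the Gaussian term the elementary bound $\Pj(Z\le c)\le\exp(-(a-c)_+^2/2)$ for $Z\sim\Nor(a,1)$ followed by the algebraic relaxation $(a-c)_+\ge(a/2-c^2/a)_+$, which checks out (in the nontrivial regime it reduces to $(a-c)^2+c^2\ge0$). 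Your route is self-contained and makes transparent that the stated exponent is a deliberate weakening of the sharper $(\Delta\sqrt{n}-\sqrt{q(1+z)})_+^2$; the paper's is shorter. The one point you gloss over is that the statement asserts a minimum rather than an infimum over $z\ge0$: the paper closes with a short monotonicity-and-continuity argument showing the minimum is attained, which your ``taking the infimum'' elides --- a cosmetic matter, since the bound holds for every fixed $z$ and that is all that is used downstream.
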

\begin{proof}
Let $\widetilde{T}_i^j(Y,\valmu):=\left(j-i+1\right)\left(\overline{Y}_{ij}-\valmu\right)^2/\valsd^2$. Then,
\begin{equation*}
T_1^n(Y,\valmu+\delta)=\frac{\widetilde{T}_1^n(Y,\valmu+\delta)}{\hat{\valsd}_{1n}^2/\valsd^2}.
\end{equation*}
The statistics $\hat{\valsd}_{1n}^2/\valsd^2$ and $\widetilde{T}_1^n(Y,\valmu+\delta)$ are independent, since $\widetilde{T}_1^n(Y,\valmu+\delta)$ depends only on the mean $\overline{Y}_{1n}$. Hence, for all $z\geq 0$
\begin{equation*}
\begin{split}
\Pj\left(T_1^n(Y,\valmu+\delta)\leq q\right)
=&\Pj\left(\widetilde{T}_1^n(Y,\valmu+\delta)\leq q \frac{\hat{\valsd}_{1n}^2}{\valsd^2}\right)\\
=&\Pj\left(\widetilde{T}_1^n(Y,\valmu+\delta)\leq q \frac{\hat{\valsd}_{1n}^2}{\valsd^2}\bigg\vert \frac{\hat{\valsd}_{1n}^2}{\valsd^2}\leq 1+z\right)\Pj\left(\frac{\hat{\valsd}_{1n}^2}{\valsd^2}\leq 1+z\right)\\
&+\Pj\left(\widetilde{T}_1^n(Y,\valmu+\delta)\leq q \frac{\hat{\valsd}_{1n}^2}{\valsd^2}\bigg\vert \frac{\hat{\valsd}_{1n}^2}{\valsd^2}> 1+z\right)\Pj\left(\frac{\hat{\valsd}_{1n}^2}{\valsd^2}> 1+z\right)\\
\leq & \Pj\left(\widetilde{T}_1^n(Y,\valmu+\delta)\leq q(1+z)\right)+\Pj\left(\frac{\hat{\valsd}_{1n}^2}{\valsd^2}> 1+z\right)\\
\leq & \exp\left(-\frac{1}{2}\left(\frac{\Delta\sqrt{n}}{2}-\frac{q(1+z)}{\Delta\sqrt{n}}\right)_+^2\right)+\exp\left(-(n-1)\frac{z-\log(1+z)}{2}\right).
\end{split}
\end{equation*}
The first term of the last inequality follows from \citep[Lemma 7.3 and the proof]{SMUCE} and the second from \citep[Theorem 2.1]{SpokoinyZhilova13}, since $(n-1)\hat{\valsd}_{1n}^2/\valsd^2\sim \chi^2_{n-1}$.\\
It remains to show that the minimum in \eqref{eq:boundcdfstat} is attained for some $z\geq 0$. The function $(\Delta\sqrt{n}/2-q(1+z)/(\Delta\sqrt{n}))_+^2$ is strictly monotonically decreasing for $z > 0$ until the function value zero is attained for some finite $z$. The function $(n-1)(z-\log(1+z))$ is zero for $z=0$ and strictly monotonically increasing on $[0,\infty)$. Therefore, the two continuous functions intersect and the minimum is attained for some $z\geq 0$.
\end{proof}
The minimum in the last lemma cannot be determined analytically, but it can be computed numerically. In Lemma \ref{lemma:boundcdfstatfurther} we estimate the right hand side further to obtain an explicit exponential bound.
\begin{Lemma}\label{lemma:boundcdfstatfurther}
Let $Y_1,\ldots,Y_n$, $n\geq 4$, be i.i.d. gaussian random variables with expectation $\valmu\in\R$ and variance $\valsd^2>0$, then we have for all $q>0$ with
\begin{equation}\label{eq:conditionboundcdf}
\frac{q}{n}\leq \frac{1}{8}
\end{equation}
as well as for all $\delta\neq 0$ and $\Delta:=\vert \delta\vert/\valsd$ the bound
\begin{equation}\label{eq:largedeviationboundF}
\Pj\left(T_1^n(Y,\valmu+\delta)\leq q\right)
\leq 2\exp\left(-\frac{1}{48}\left(\sqrt{n}\Delta-\sqrt{2q}\right)_+^2\right).
\end{equation}
\end{Lemma}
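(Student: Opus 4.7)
My plan is to start from Lemma~\ref{lemma:boundcdfstat} and choose a specific $z \geq 0$ in the infimum so that each of the two exponential terms in \eqref{eq:boundcdfstat} is bounded by $\exp\bigl(-(\sqrt{n}\Delta-\sqrt{2q})^{2}/48\bigr)$; the factor $2$ in \eqref{eq:largedeviationboundF} then arises simply by summing. If $\sqrt{n}\Delta \leq \sqrt{2q}$ the positive part vanishes and the right-hand side of \eqref{eq:largedeviationboundF} equals $2$, a trivial upper bound for any probability, so I may henceforth assume $t := \sqrt{n}\Delta - \sqrt{2q} > 0$.

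For the second exponential term, I will use the elementary inequality $z - \log(1+z) \geq z^{2}/\bigl(2(1+z)\bigr)$ for $z \geq 0$ (a one-line derivative comparison). Equating this lower bound with $t^{2}/\bigl(24(n-1)\bigr)$ yields the quadratic $12(n-1)z^{2} = t^{2}(1+z)$, whose positive root
\[
z^{*} \;=\; \frac{t\bigl(t + \sqrt{t^{2}+48(n-1)}\bigr)}{24(n-1)}
\]
is my candidate, and by construction $(n-1)\bigl(z^{*} - \log(1+z^{*})\bigr)/2 \geq t^{2}/48$.

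For the first exponential, using the identity $n\Delta^{2} = (t+\sqrt{2q})^{2}$ one rewrites
\[
\frac{\sqrt{n}\Delta}{2} - \frac{q(1+z)}{\sqrt{n}\Delta}
\;=\; \frac{t(t+2\sqrt{2q})}{2\sqrt{n}\Delta} - \frac{qz}{\sqrt{n}\Delta},
\]
so the target inequality $\sqrt{n}\Delta/2 - q(1+z^{*})/\sqrt{n}\Delta \geq t/\sqrt{24}$ reduces, after multiplying by $\sqrt{n}\Delta = t+\sqrt{2q}$ and dividing by $t>0$, to
\[
t\bigl(\tfrac{1}{2}-\tfrac{1}{\sqrt{24}}\bigr) + \sqrt{2q}\bigl(1-\tfrac{1}{\sqrt{24}}\bigr) \;\geq\; \frac{q\,z^{*}}{t}.
\]
Using the sub-additivity bound $\sqrt{t^{2}+48(n-1)} \leq t + \sqrt{48(n-1)}$, I will split the right-hand side as $qt/\bigl(12(n-1)\bigr) + q/\sqrt{12(n-1)}$ and dominate each summand by one of the two terms on the left separately. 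The main technical hurdle lies exactly here: the penalty $qz^{*}/(\sqrt{n}\Delta)$ must be small enough not to spoil the balance between the two exponents. This is precisely where the hypothesis $q/n \leq 1/8$ together with $n \geq 4$ enters, ensuring both $q/\bigl(12(n-1)\bigr) \leq 1/2 - 1/\sqrt{24}$ and $24(n-1)\bigl(1-1/\sqrt{24}\bigr)^{2} \geq q$. Substituting $z = z^{*}$ into \eqref{eq:boundcdfstat} then yields the claimed bound $2\exp(-t^{2}/48)$.
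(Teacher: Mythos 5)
Your proposal is correct, and it follows the same skeleton as the paper's proof: both start from Lemma~\ref{lemma:boundcdfstat}, both rely on the elementary bound $z-\log(1+z)\geq z^2/(2(1+z))$, and both use the hypothesis $q/n\leq 1/8$ (with $n\geq 4$) to show that the contribution of the variance estimate does not spoil the exponent $(\sqrt{n}\Delta-\sqrt{2q})_+^2/48$. The execution differs in how $z$ is handled. The paper first weakens $z-\log(1+z)$ further to $\tfrac14\min(z^2,z)$, then solves for the exact crossing points of the resulting increasing and decreasing functions of $z$, which forces it through the auxiliary inequality $\sqrt{1+4x}\leq 1+2x-2x^2+4x^3$ and a case distinction between $2q>\Delta n$ and $2q\leq \Delta n$. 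You instead calibrate a single explicit $z^{*}$ so that the $\chi^2$-deviation exponent equals $t^2/48$ by construction and then verify directly that the mean-deviation exponent is also at least $t^2/48$; this avoids both the case analysis and the $\sqrt{1+4x}$ expansion, and your two closing numerical conditions hold with large margins under $q\leq n/8$, $n\geq 4$. Both routes yield the same constant $1/48$, so the difference is one of economy rather than strength.
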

\begin{proof}
Let $z>0$ be arbitrary, but fixed. Then, it follows from Lemma \ref{lemma:boundcdfstat} that
\begin{equation*}
\begin{split}
\Pj\left(T_1^n(Y,\valmu+\delta)\leq q\right)
\leq & \exp\left(-\frac{1}{2}\left(\frac{\Delta\sqrt{n}}{2}-\frac{q(1+z)}{\Delta\sqrt{n}}\right)_+^2\right)+\exp\left(-(n-1)\frac{z-\log(1+z)}{2}\right)\\
\leq & 2\exp\left(-\min\left[\frac{1}{2}\left(\frac{\Delta\sqrt{n}}{2}-\frac{q(1+z)}{\Delta\sqrt{n}}\right)_+^2,(n-1)\frac{z-\log(1+z)}{2}\right]\right).
\end{split}
\end{equation*}
The inequality
\begin{equation*}
z-\log(1+z)\geq\frac{1}{2}\frac{z^2}{1+z}
\geq 
\frac{1}{4}\min\left(z^2,z\right)
\end{equation*}
yields
\begin{equation*}
\begin{split}
&\Pj\left(T_1^n(Y,\valmu+\delta)\leq q\right)\\
\leq & 2\exp\left(-\min\left[\frac{1}{8}n\left(\Delta-\frac{2q(1+z)}{\Delta n}\right)_+^2,\frac{1}{8}(n-1)\min\left(z^2,z\right)\right]\right)\\
\leq 
& 2\exp\left(-\frac{1}{8}(n-1)\min\left[\min\left[\left(\Delta-\frac{2q(1+z)}{\Delta n}\right)_+^2,z^2\right],\min\left[\left(\Delta-\frac{2q(1+z)}{\Delta n}\right)_+^2,z\right]\right]\right).
\end{split}
\end{equation*}
Now, we minimize the r.h.s in $z\geq 0$. The functions $z$ and $z^2$ are both increasing, the function $(\Delta-2q(1+z)/(\Delta n))_+^2$ in contrast is decreasing in $z$. Therefore, both inner minima are attained and by solving the corresponding quadratic equations (note that we have to take the solution with $\Delta-2q(1+z)/(\Delta n)\geq 0$) we get
\begin{equation*}
\begin{split}
&\Pj\left(T_1^n(Y,\valmu+\delta)\leq q\right)\\
\leq & 2\exp\left(-\frac{1}{8}(n-1)\min\left[\left(\frac{\Delta-\frac{2q}{\Delta n}}{1+\frac{2q}{\Delta n}}\right)_+^2,\frac{1+2\frac{2q}{\Delta n}\left(\Delta-\frac{2q}{\Delta n}\right)_+ -\sqrt{1+4\frac{2q}{\Delta n}\left(\Delta-\frac{2q}{\Delta n}\right)_+}}{2\left(\frac{2q}{\Delta n}\right)^2}\right]\right).
\end{split}
\end{equation*}
Using the inequality $\sqrt{1+4x}\leq 1+2x-2x^2+4x^3$ for all $x>-1/4$ with $x=2q/(\Delta n)(\Delta-2q/(\Delta n))_+$ we find
\begin{equation*}
\begin{split}
&\Pj\left(T_1^n(Y,\valmu+\delta)\leq q\right)\\
\leq & 2\exp\left(-\frac{1}{8}\frac{n-1}{n}n\min\left[\left(\frac{\Delta-\frac{2q}{\Delta n}}{1+\frac{2q}{\Delta n}}\right)_+^2,\left(\Delta-\frac{2q}{\Delta n}\right)_+^2\left[1-2\frac{2q}{\Delta n}\left(\Delta-\frac{2q}{\Delta n}\right)_+\right]\right]\right).
\end{split}
\end{equation*}
Next, we consider the two terms in the minimum separately. We assume w.l.o.g. that $\sqrt{2q}/(\Delta \sqrt{n})\leq 1$, since otherwise the r.h.s. in \eqref{eq:largedeviationboundF} is two. For the first term we distinguish the cases $2q>\Delta n$ and $2q\leq\Delta n$. If $2q\leq \Delta n$ is satisfied, then
\begin{equation*}
n\left(\frac{\Delta-\frac{2q}{\Delta n}}{1+\frac{2q}{\Delta n}}\right)_+^2\geq \frac{1}{4}n\left(\Delta-\frac{2q}{\Delta n}\right)_+^2=\frac{1}{4}\left(\sqrt{n}\Delta-\frac{2q}{\Delta \sqrt{n}}\right)_+^2\geq \frac{1}{4}\left(\sqrt{n}\Delta-\sqrt{2q}\right)_+^2.
\end{equation*}
For the other case, when $2q> \Delta n$ holds, we obtain with $q/n\leq 1/8$
\begin{equation*}
\begin{split}
n\left(\frac{\Delta-\frac{2q}{\Delta n}}{1+\frac{2q}{\Delta n}}\right)_+^2
\geq & \frac{1}{4}n\left(\frac{\Delta-\frac{2q}{\Delta n}}{\frac{2q}{\Delta n}}\right)_+^2
=\frac{1}{4}n\left(\frac{n\Delta^2}{2q}-1\right)_+^2\\
=&\frac{1}{4}\frac{n}{2q}\left(\frac{(\sqrt{n}\Delta)^2}{\sqrt{2q}}-\sqrt{2q}\right)_+^2
\geq \left(\sqrt{n}\Delta-\sqrt{2q}\right)_+^2.
\end{split}
\end{equation*}
For the second term it follows with $q/n\leq 1/8$ that
\begin{equation*}
\begin{split}
n\left(\Delta-\frac{2q}{\Delta n}\right)_+^2\left[1-2\frac{2q}{\Delta n}\left(\Delta-\frac{2q}{\Delta n}\right)_+\right]
= &\left(\sqrt{n}\Delta-\frac{2q}{\Delta \sqrt{n}}\right)_+^2\left[1-4\frac{q}{n}\left(1-\frac{2q}{\Delta^2 n}\right)_+\right]\\
\geq &\left(\sqrt{n}\Delta-\sqrt{2q}\right)_+^2\frac{1}{2}.
\end{split}
\end{equation*}
This yields
\begin{equation*}
\begin{split}
\Pj\left(T_1^n(Y,\valmu+\delta)\leq q\right)\leq & 2\exp\left(-\frac{1}{32}\frac{n-1}{n}\left(\sqrt{n}\Delta-\sqrt{2q}\right)_+^2\right)\\
\leq & 2\exp\left(-\frac{1}{48}\left(\sqrt{n}\Delta-\sqrt{2q}\right)_+^2\right).
\end{split}
\end{equation*}
\end{proof}

\subsection{Proofs of Section \ref{sec:theory}}
\begin{proof}[Proof of Theorem \ref{theorem:boundoverestimate}]
The estimated number of change-points $\hat{K}$ is by its definition in \eqref{eq:K} equal to the minimal number of change-points of all feasible functions. Therefore, all functions with the true number of change-points (or less change-points) have to be infeasible, if the number of change-points is overestimated. Hence, by \eqref{eq:significancelevel}
\begin{equation*}
\begin{split}
\sup_{(\mu,\sigma^2)\in\mathcal{S}}\Pj_{(\mu,\sigma^2)}\left(\hat{K}>K\right)
&\leq \sup_{(\mu,\sigma^2)\in\mathcal{S}}\Pj_{(\mu,\sigma^2)}\left(\max_{\left[\frac{i}{n},\frac{j}{n}\right]\in\mathcal{D}(\mu)}\left[T_i^j\left(Y, \mu([i/n, j/n])\right)-q_{ij}\right]> 0\right)\\
&\leq \Pj_{(0,1)}\left(\max_{\left[\frac{i}{n},\frac{j}{n}\right]\in\mathcal{D}}\left[T_i^j\left(Y, 0)\right)-q_{ij}\right]>0\right)
=\alpha, 
\end{split}
\end{equation*}
where the last inequality follows from $\mathcal{D}(\mu)\subset \mathcal{D}$ and the fact that the distribution of $T_i^j\left(Y, \mu([i/n, j/n])\right)$ does not depend on $\mu(\cdot)$ and $\sigma(\cdot)$, as these are constant on intervals in $\mathcal{D}(\mu)$.
\end{proof}
\begin{proof}[Proof of Theorem \ref{theorem:overestimationk}]
First of all, we show that it is enough to prove the result for $\mu\equiv 0$ and $\sigma^2\equiv 1$ and hence $K=0$. We have
\begin{equation*}
\begin{split}
&\sup_{(\mu,\sigma^2)\in\mathcal{S}}\Pj_{(\mu,\sigma^2)}\left(\hat{K}>K+2k\right)\\
= & \sup_{(\mu,\sigma^2)\in\mathcal{S}}\Pj_{(\mu,\sigma^2)}\left(\max_{\left[\frac{i}{n},\frac{j}{n}\right]\in\mathcal{D}(\tilde{\mu})}\left[T_i^j\left(Y, \tilde{\mu}([i/n, j/n])\right)-q_{ij}\right]> 0\ \forall\ \tilde{\mu}\in\mathcal{M}\ s.t.\ \vert \mathcal{I}(\tilde{\mu})\vert \leq K + 2k\right)\\
\leq & \sup_{(\mu,\sigma^2)\in\mathcal{S}}\Pj_{(\mu,\sigma^2)}\Bigg(\max_{\left[\frac{i}{n},\frac{j}{n}\right]\in\mathcal{D}(\tilde{\mu})}\left[T_i^j\left(Y, \tilde{\mu}([i/n, j/n])\right)-q_{ij}\right]> 0\\
&\quad\quad\quad\quad\quad\quad\quad\quad\quad\quad\quad \forall\ \tilde{\mu}\in\mathcal{M}\ s.t.\ \mathcal{I}(\mu)\subset \mathcal{I}(\tilde{\mu}), \vert \mathcal{I}(\tilde{\mu})\vert \leq K + 2k\Bigg)\\
\leq & \Pj_{(0,1)}\left(\max_{\left[\frac{i}{n},\frac{j}{n}\right]\in\mathcal{D}(\tilde{\mu})}\left[T_i^j\left(Y, \tilde{\mu}([i/n, j/n])\right)-q_{ij}\right]> 0\ \forall\ \tilde{\mu}\in\mathcal{M}\ s.t.\ \vert \mathcal{I}(\tilde{\mu})\vert \leq 2k\right)\\
= &\Pj_{(0, 1)}\left(\hat{K}>2k\right),
\end{split} 
\end{equation*}
where the last inequality follows from the same argument as in the proof of Theorem \ref{theorem:boundoverestimate}.
Now, we define $R_0:=0$ and iteratively
\begin{equation*}
R_{k+1}:=\min\{t>R_k:\exists\ s\ s.t.\ R_k<s<t \text{ and } [s/n,t/n]\in\mathcal{D},  T_s^t(Y,0)>q_{\log_2(t-s+1)}\},
\end{equation*}
with the convention $\min \emptyset = \infty$. Then,
\begin{equation*}
\Pj_{0,1}(R_{k+1}\leq n\vert R_1 = t)\leq \Pj_{0,1}(R_k\leq n) \text{ for all }t\in\{1,\ldots,n\},
\end{equation*}
since for the l.h.s. the remaining $k$ rejections $R_2,\ldots,R_{k+1}$ have to be in $\{t+1,\ldots,n\}$ instead of $\{1,\ldots,n\}$. It follows
\begin{equation*}
\begin{split}
&\Pj_{0,1}(\hat{K}>2k)\leq \Pj_{0,1}(R_{k+1}\leq n)=\sum_{t=1}^{n}{\Pj_{0,1}(R_{k+1}\leq n \vert R_1 = t)\Pj_{0,1}(R_1 = t)}\\
\leq & \Pj_{0,1}(R_1\leq n)\Pj_{0,1}(R_k \leq n)\leq \cdots\leq \Pj_{0,1}(R_1\leq n)^{k+1} \leq \alpha^{k+1},
\end{split} 
\end{equation*}
where the last inequality is given by Theorem \ref{theorem:boundoverestimate}. It follows
\begin{equation*}
\begin{split}
\sup_{(\mu,\sigma^2)\in\mathcal{S}}\E_{(\mu,\sigma^2)}\left[(\hat{K}-K)_+\right]
= & \sup_{(\mu,\sigma^2)\in\mathcal{S}}\sum_{k=0}^{\infty}{\Pj_{(\mu,\sigma^2)}\left(\hat{K}-K>k\right)}\\
\leq & \sup_{(\mu,\sigma^2)\in\mathcal{S}}2\sum_{k=0}^{\infty}{\Pj_{(\mu,\sigma^2)}\left(\hat{K}-K>2k\right)}
\leq 2\sum_{k=0}^{\infty}{\alpha^{k+1}}
=\frac{2\alpha}{1-\alpha}.
\end{split}
\end{equation*}
\end{proof}
The following theorem is sharper version of \ref{theorem:boundunderestimate} that shows different probabilities for the detection of the change-points.
\begin{Theorem}[Underestimation control II]\label{theorem:boundunderestimatesharp}
Let $\lambda_j:=\tau_{j+1}-\tau_{j}$ and $k_{n, j}:=\lfloor\log_2(n\lambda_j/4)\rfloor$, $j=0,\ldots,K$, as well as $\delta_j:=\vert \valmu_j - \valmu_{j-1}\vert$ and 
\begin{equation*}
\begin{split}
\eta_j:=&\left[1-3\exp\left(-\frac{1}{48}\left(\sqrt{\frac{n\lambda_{j-1}\delta_j^2}{32\sigma_{j-1}^ 2}}-\sqrt{16\log\left(\frac{8}{\lambda_j\alpha\beta_{k_{n, j-1}}}\right)}\right)_+^2\right)\right]_+\ \times\\
&\left[1-3\exp\left(-\frac{1}{48}\left(\sqrt{\frac{n\lambda_j\delta_j^2}{32\sigma_j^ 2}}-\sqrt{16\log\left(\frac{8}{\lambda_j\alpha\beta_{k_{n,j}}}\right)}\right)_+^2\right)\right]_+,
\end{split}
\end{equation*}
$j=1,\ldots,K$. Under the assumptions of Theorem \ref{theorem:boundoverestimate} and if $n\lambda_j \geq 32$ and
\begin{equation*}
\left(n\lambda_j\right)^{-1}\log\left(\frac{8}{\lambda_j\alpha\beta_{k_{n,j}}}\right)\leq \frac{1}{512}
\end{equation*}
are satisfied for all $j=1,\ldots,K$, then 
\begin{equation*}
\Pj_{(\mu, \sigma^2)}\left(\hat{K} < K\right)\leq 1-\prod_{j=1}^K\eta_j \text{ and }\E_{(\mu, \sigma^2)}\left[\left(K-\hat{K}\right)_+\right]\leq \sum_{j=1}^K\left(1-\eta_j\right).
\end{equation*}
\end{Theorem}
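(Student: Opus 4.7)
The multiplicative form $\prod_j\eta_j$ of the advertised bound strongly suggests the strategy: for each true change-point $\tau_j$ I would localise a high-probability ``detection event'' $A_j$ on observations in a pair of dyadic intervals flanking $\tau_j$, and arrange these intervals so that all $2K$ of them are pairwise disjoint, making the $A_j$ mutually independent. Concretely, for each $j\in\{1,\ldots,K\}$ I pick $I_{j,L}\in\mathcal{D}_{k_{n,j-1}}$ inside $[\tau_{j-1},\tau_j)$ with right end abutting $\tau_j$ (up to dyadic alignment) and $I_{j,R}\in\mathcal{D}_{k_{n,j}}$ inside $[\tau_j,\tau_{j+1})$ with left end near $\tau_j$. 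Since $k_{n,m}=\lfloor\log_2(n\lambda_m/4)\rfloor$ forces $|I|\in(n\lambda_m/8,n\lambda_m/4]$, both intervals fit inside their segments and, within any segment $[\tau_j,\tau_{j+1})$, the two intervals $I_{j,R}$ and $I_{j+1,L}$ are separated by at least $\lambda_j/2$, so the full family is pairwise disjoint. Set
\[
A_{j,L}:=\bigl\{|\bar Y_{I_{j,L}}-\valmu_{j-1}|+\hat\valsd_{I_{j,L}}\sqrt{q_{k_{n,j-1}}/|I_{j,L}|}<\delta_j/2\bigr\},
\]
$A_{j,R}$ mirror-symmetrically on the right, and $A_j:=A_{j,L}\cap A_{j,R}$.

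The deterministic heart of the argument is that on $A_j$ the local confidence half-widths around $\bar Y_{I_{j,L}}$ and $\bar Y_{I_{j,R}}$ are strictly contained in disjoint open balls of radius $\delta_j/2$ centred at $\valmu_{j-1}$ and $\valmu_j$ (the balls are disjoint because $|\valmu_j-\valmu_{j-1}|=\delta_j$); consequently no single constant $c$ can simultaneously satisfy $T_{I_{j,L}}(Y,c)\leq q_{k_{n,j-1}}$ and $T_{I_{j,R}}(Y,c)\leq q_{k_{n,j}}$. Any $\tilde\mu\in\mathcal M$ feasible for the multiscale constraint must therefore have a change-point strictly between $I_{j,L}$ and $I_{j,R}$, and by the pairwise disjointness of those gaps across $j$ every feasible $\tilde\mu$ has at least $\sum_j\mathbf 1_{A_j}$ change-points. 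Minimising over $\tilde\mu$ gives the deterministic domination
\[
K-\hat K\leq \sum_{j=1}^K\mathbf 1_{A_j^c}, \qquad \{\hat K<K\}\subset \bigcup_{j=1}^K A_j^c.
\]

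For $P(A_{j,L}^c)$ I would use the union bound
\[
A_{j,L}^c\subset \{T_{I_{j,L}}(Y,\valmu_{j-1}+\tfrac{\delta_j}{2})\leq q_{k_{n,j-1}}\}\cup\{T_{I_{j,L}}(Y,\valmu_{j-1}-\tfrac{\delta_j}{2})\leq q_{k_{n,j-1}}\}\cup\{|\bar Y_{I_{j,L}}-\valmu_{j-1}|\geq\tfrac{\delta_j}{2}\},
\]
apply Lemma \ref{lemma:boundcdfstatfurther} to the first two events (its hypothesis $q/|I|\leq 1/8$ follows from the theorem's $1/512$-condition together with the explicit bound $q_{k_{n,j-1}}\leq 8\log(8/(\lambda_{j-1}\alpha\beta_{k_{n,j-1}}))$ of Lemma \ref{lemma:boundcritval}), and bound the third by a plain Gaussian tail with a strictly larger exponent. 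Substituting $|I_{j,L}|\geq n\lambda_{j-1}/8$ rewrites $\sqrt{|I_{j,L}|}(\delta_j/2)/\valsd_{j-1}$ as $\sqrt{n\lambda_{j-1}\delta_j^2/(32\valsd_{j-1}^2)}$ and controls $\sqrt{2q_{k_{n,j-1}}}$ by $\sqrt{16\log(\cdot)}$, producing the first factor of $\eta_j$; the mirror computation on $I_{j,R}$ gives the second factor.

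Finally, since $A_{j,L}$ and $A_{j,R}$ depend on disjoint data blocks they are independent, so $P(A_j)\geq \eta_j$, and the $2K$-fold disjointness promotes this to mutual independence of $(A_j)_{j=1}^K$; hence $P(\hat K\geq K)\geq \prod_j P(A_j)\geq \prod_j\eta_j$, and the expectation bound $\E[(K-\hat K)_+]\leq \sum_j(1-\eta_j)$ drops out of the deterministic inequality above. The main obstacle I anticipate is the entanglement of the random variance estimate $\hat\valsd^2_{I_{j,L}}$ with the mean deviation inside the $F$-statistic: decoupling them requires the free parameter $z\geq 0$ used in Lemma \ref{lemma:boundcdfstat}, and the somewhat loose constant $1/48$ in the exponent of $\eta_j$ together with the cutoff condition $n\lambda_j\geq 32$ in the theorem are exactly the price of that optimisation.
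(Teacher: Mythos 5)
Your overall architecture is exactly the paper's: flank each $\tau_j$ by two disjoint dyadic intervals of scales $k_{n,j-1}$ and $k_{n,j}$ (each of length $>\lambda_{j-1}/8$ resp.\ $\lambda_j/8$), observe that on a suitable local event no single constant can pass both local tests so every feasible candidate must spend a change-point in each of the $K$ disjoint flanking regions, exploit disjointness for independence across $j$, and control each local event via Lemma \ref{lemma:boundcritval} and the $F$-deviation bound of Lemma \ref{lemma:boundcdfstatfurther} (your verification of the hypotheses $q/m\le 1/8$ and $m\ge 4$ from the $1/512$-condition and $n\lambda_j\ge 32$ is the same as the paper's). The deterministic domination $K-\hat K\le\sum_j\mathbf 1_{A_j^c}$ and the resulting expectation bound are also as in the paper.

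The one genuine gap is quantitative: your local event $A_{j,L}$ is two-sided, and its complement is covered by \emph{two} $F$-statistic tail events (at $\valmu_{j-1}+\delta_j/2$ and at $\valmu_{j-1}-\delta_j/2$) plus a mean-deviation event. Lemma \ref{lemma:boundcdfstatfurther} already carries a factor $2$, so your union bound yields at best $4\exp(-\tfrac{1}{48}(\cdots)_+^2)$ plus the (dominated, but still two-sided) Gaussian term, i.e.\ a factor $\bigl[1-6\exp(\cdots)\bigr]_+$ rather than the asserted $\bigl[1-3\exp(\cdots)\bigr]_+$. The paper avoids this by making the local events one-sided: on the interval where the true mean is $\mu^+=\max(\valmu_{j-1},\valmu_j)$ it only excludes feasible values $\hat\valmu\le(\valmu_{j-1}+\valmu_j)/2$, and on the other interval only values $\hat\valmu\ge(\valmu_{j-1}+\valmu_j)/2$; this still forces a change-point (any constant $c$ violates one of the two one-sided constraints), and by the monotonicity of $c\mapsto T(Y,c)$ away from $\overline Y$ (Lemma 7.1 of \citep{SMUCE}) each one-sided exclusion event is covered by a \emph{single} $F$-tail event plus a \emph{one-sided} Gaussian tail, giving $2+1=3$. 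So your proof establishes the theorem with $3$ replaced by $6$ in $\eta_j$; to recover the stated constant you need to replace the symmetric events $A_{j,L},A_{j,R}$ by the paper's one-sided versions.
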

\begin{proof}
For each $j=1,\ldots,K$ we consider the disjoint intervals $I_j:=\left[\tau_j-\lambda_{j-1}/2,\tau_j+\lambda_j/2\right)$ and split them into disjoint intervals $I_j^+ \cup I_j^-=I_j$ such that $\mu(t)=\mu^+\ \forall\ t\in I_j^+$ and $\mu(t)=\mu^-\ \forall\ t\in I_j^-$, with $\mu^+:=\max(\valmu_{j-1},\valmu_j)$ and $\mu^-:=\min(\valmu_{j-1},\valmu_j)$. Without loss of generality we assume $\mu^+=\valmu_{j-1}$ and $\mu^-=\valmu_{j}$ in the following. Then, there exists subintervals $J_j^+\subset I_j^+$ and $J_j^-\subset I_j^-$ with $J_j^+,J_j^-\in \mathcal{D}$ that have length $\lambda_{j-1}^{*}:=n^{-1}2^{\lfloor \log_2(n\lambda_{j-1}/4)\rfloor}=n^{-1}2^{k_{n,j-1}}\geq \lambda_{j-1}/8$, since $n\vert I_j^+\vert = n\lambda_{j-1}/2 \geq 3$, and $\lambda_j^{*}:=n^{-1}2^{\lfloor \log_2(n\lambda_j/4)\rfloor}=n^{-1}2^{k_{n,j}}\geq \lambda_j/8$, since $n\vert I_j^-\vert = n\lambda_j/2 \geq 3$, respectively. It follows
\begin{align*}
&\Pj_{(\mu,\sigma^2)}\left(\hat{K} < K\right)=1-\Pj_{(\mu,\sigma^2)}\left(\hat{K} \geq K\right)\\
& \leq 1-\Pj_{(\mu,\sigma^2)}\left(\nexists\ \hat{\mu}\in C(\textbf{q}),\ j\in\{1,\ldots,K\}:\hat{\mu} \text{ is constant on } I_j\right)\\
&
\begin{aligned}
\leq 1-\Pj_{(\mu,\sigma^2)}\Big(\forall\ j\in\{1,\ldots,K\}:\quad &\nexists\ \hat{\valmu}\leq (\valmu_{j-1}+\valmu_j)/2: T_{J_j^+}(Y,\hat{\valmu})\leq q_{k_{n,j-1}} \text{ and }\\
& \nexists\ \hat{\valmu}\geq (\valmu_{j-1}+\valmu_j)/2: T_{J_j^-}(Y,\hat{\valmu})\leq q_{k_{n,j}}\Big) 
\end{aligned}\\
&
\begin{aligned}
\leq 1-\prod_{j=1}^K\Pj_{(\mu,\sigma^2)}\Big(&\nexists\ \hat{\valmu}\leq (\valmu_{j-1}+\valmu_j)/2: T_{J_j^+}(Y,\hat{\valmu})\leq q_{k_{n,j-1}} \text{ and }\\
& \nexists\ \hat{\valmu}\geq (\valmu_{j-1}+\valmu_j)/2: T_{J_j^-}(Y,\hat{\valmu})\leq q_{k_{n,j}}\Big), 
\end{aligned}
\end{align*}
where we used in the last inequality that the events are independent, since all intervals are disjoint. We denote by $Z_1,\ldots,Z_n$ i.i.d. standard normally distributed random variables. It follows from once again from the independence due to disjoint intervals and from the Lemmas 7.1 in \citep{SMUCE}, \ref{lemma:boundcdfstatfurther} and \ref{lemma:boundcritval} that
\begin{align*}
&
\begin{aligned}
\Pj_{(\mu,\sigma^2)}\Big(&\nexists\ \hat{\valmu}\leq (\valmu_{j-1}+\valmu_j)/2: T_{J_j^+}(Y,\hat{\valmu})\leq q_{k_{n,j-1}} \text{ and }\\
& \nexists\ \hat{\valmu}\geq (\valmu_{j-1}+\valmu_j)/2: T_{J_j^-}(Y,\hat{\valmu})\leq q_{k_{n,j}}\Big)
\end{aligned}\\
&
\begin{aligned}
\geq &\left[1-\Pj_{(\mu,\sigma^2)}\Big(\exists\ \hat{\valmu}\leq (\valmu_{j-1}+\valmu_j)/2: T_{J_j^+}(Y,\hat{\valmu})\leq q_{k_{n,j-1}}\Big)\right]\times\\
&\left[1-\Pj_{(\mu,\sigma^2)}\Big(\exists\ \hat{\valmu}\geq (\valmu_{j-1}+\valmu_j)/2: T_{J_j^-}(Y,\hat{\valmu})\leq q_{k_{n,j}}\Big) \right]\geq \eta_j,
\end{aligned}
\end{align*}
since 
\begin{align*}
&\Pj_{(\mu,\sigma^2)}\Big(\exists\ \hat{\valmu}\leq (\valmu_{j-1}+\valmu_j)/2: T_{J_j^+}(Y,\hat{\valmu})\leq q_{k_{n,j-1}}\Big)\\
&\leq \Pj_{(\mu,\sigma^2)}\Big(\overline{Y}_{J_j^+}\leq (\valmu_{j-1}+\valmu_j)/2 \text{ or } T_{J_j^+}(Y,(\valmu_{j-1}+\valmu_j)/2)\leq q_{k_{n,j-1}}\Big)\\
&\leq \Pj_{(\mu,\sigma^2)}\Big(\overline{Y}_{J_j^+}\leq (\valmu_{j-1}+\valmu_j)/2\Big)+\Pj_{(\mu,\sigma^2)}\Big(T_{J_j^+}(Y,(\valmu_{j-1}+\valmu_j)/2)\leq q_{k_{n,j-1}}\Big)\\
&\leq \Pj\Big(\overline{Z}_{[0,\lambda_{j-1}^*]}\geq \frac{\delta_j}{2\sigma_{j-1}}\Big) + \Pj\Big(T_{[0,\lambda_{j-1}^*]}\left(Z,\frac{\delta_j}{2\sigma_{j-1}}\right)\leq q_{k_{n,j-1}} \Big)\\
&\leq \exp\left(-\frac{1}{64}\frac{n\lambda_{j-1}\delta_j^2}{\sigma_{j-1}^2}\right)+2\exp\left(-\frac{1}{48}\left(\sqrt{\frac{n\lambda_{j-1}\delta_j^2}{32\sigma_{j-1}^2}}-\sqrt{2q_{k_{n,j-1}}}\right)_+^2\right)\\
&\leq 3\exp\left(-\frac{1}{48}\left(\sqrt{\frac{n\lambda_{j-1}\delta_j^2}{32\sigma_{j-1}}}-\sqrt{16\log\left(\frac{8}{\lambda_{j-1}\alpha\beta_{k_{n,j-1}}}\right)}\right)_+^2\right)
\end{align*}
and the second term by symmetry arguments. Moreover, it follows
\begin{align*}
&\E_{(\mu,\sigma^2)}\left[\left(K-\hat{K}\right)_+\right]\\
\leq & \E_{(\mu,\sigma^2)}\left[\sum_{j=1}^K{\EINS_{\exists\, \hat{\valmu}\leq (\valmu_{j-1}+\valmu_j)/2:\, T_{J_j^+}(Y,\hat{\valmu})\leq q_{k_{n,j-1}} \text{ or } \exists\, \hat{\valmu}\geq (\valmu_{j-1}+\valmu_j)/2:\, T_{J_j^-}(Y,\hat{\valmu})\leq q_{k_{n,j}}}}\right]\\
\leq &\sum_{j=1}^K\left(1-\eta_j\right).
\end{align*}
\end{proof}
\begin{proof}[Proof of Theorem \ref{theorem:boundunderestimate}]
The proof is analogue to the proof of Theorem \ref{theorem:boundunderestimatesharp}, but with $I_j=[\tau_j-\lambda/2,\tau_j+\lambda/2)$.
\end{proof}
\begin{proof}[Proof of Theorem \ref{theorem:consistency}]
We prove the theorem with the Borel-Cantelli lemma. It follows from Theorems \ref{theorem:boundoverestimate} and \ref{theorem:boundunderestimate} that
\begin{align*}
&\sup_{(\mu,\sigma^2)\in \mathcal{S}_{\Delta,\lambda}}\Pj_{(\mu,\sigma^2)}\left(\hat{K}_n\neq K\right)\\
= & \sup_{(\mu,\sigma^2)\in \mathcal{S}_{\Delta,\lambda}}\Pj_{(\mu,\sigma^2)}\left(\hat{K}_n > K\right)+\sup_{(\mu,\sigma^2)\in \mathcal{S}_{\Delta,\lambda}}\Pj_{(\mu,\sigma^2)}\left(\hat{K}_n < K\right)\\
\leq &  \alpha_n+1-\left[1-3\exp\left(-\frac{1}{48}\left(\sqrt{\frac{n\lambda\Delta^2}{32}}-\sqrt{16\log\left(\frac{8}{\lambda\alpha_n\beta_{k_n,n}}\right)}\right)_+^2\right)\right]_+^{2K}\\
\leq &  \alpha_n+6K\exp\left(-\frac{1}{48}\left(\sqrt{\frac{n\lambda\Delta^2}{32}}-\sqrt{16\log\left(\frac{8}{\lambda\alpha_n\beta_{k_n,n}}\right)}\right)_+^2\right),
\end{align*}
since under the given assumptions the conditions of Theorem \ref{theorem:boundunderestimate} are satisfied. The upper bounds for the error probabilities are summable if \eqref{eq:conditionconsistency} is satisfied.
\end{proof}
\begin{Lemma}[Confidence set]\label{lemma:confidencefunction}
Assume the setting and assumptions of Theorem \ref{theorem:boundunderestimate} and let $C(\textbf{q})$ be as in \eqref{eq:defC} with significance level $\alpha$ and weights $\beta_1,\ldots,\beta_{\sn}$. Let $\mathcal{S}_{\Delta,\lambda}$ be as in \eqref{eq:subsetSdeltalambda} with $\Delta,\lambda >0$ arbitrary, but fixed, and $k_n:=\lfloor\log_2(n\lambda/4)\rfloor$. If $n\lambda\geq 32$ and 
\begin{equation*}
\frac{\log\left(\frac{8}{\lambda\alpha_n\beta_{k_n}}\right)}{n\lambda}\leq \frac{1}{512}
\end{equation*}
hold, then uniformly in $\mathcal{S}_{\Delta,\lambda}$
\begin{equation*}
\Pj_{(\mu, \sigma^2)}\left(\mu\in C\left(\textbf{q}\right)\right)\geq 
1-\alpha-(1-\eta^K),
\end{equation*}
with $\eta$ like in Theorem \ref{theorem:boundunderestimate}.
\end{Lemma}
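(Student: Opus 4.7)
The plan is to use a union bound on the complement of the two events whose intersection characterises membership of the true $\mu$ in $C(\textbf{q})$. Unfolding the definition in \eqref{eq:defC}, the true mean satisfies $\mu\in C(\textbf{q})$ if and only if both
\[
A:=\Bigl\{\max_{[i/n,j/n]\in\mathcal{D}(\mu)}\bigl[T_i^j(Y,\mu([i/n,j/n]))-q_{ij}\bigr]\leq 0\Bigr\}
\quad\text{and}\quad
B:=\{\hat{K}=K\}
\]
occur. The key observation is that the event $A$ already forces $\hat{K}\leq K$, because whenever $A$ holds the true $\mu\in\mathcal{M}$ is feasible in the optimisation problem \eqref{eq:K}, so the minimum cannot exceed $K=\vert\mathcal{I}(\mu)\vert$. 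Consequently, on $A$ the event $B$ is equivalent to $\{\hat{K}\geq K\}$, and
\[
\Pj_{(\mu,\sigma^2)}(\mu\notin C(\textbf{q}))\leq \Pj_{(\mu,\sigma^2)}(A^{c}) + \Pj_{(\mu,\sigma^2)}(\hat{K}<K).
\]

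To bound the first term, I would repeat the reduction already used in the proof of Theorem \ref{theorem:boundoverestimate}: since the test statistics $T_i^j(Y,\mu([i/n,j/n]))$ depend on $Y$ only through the standardised residuals, which are i.i.d.\ standard gaussian on every interval in $\mathcal{D}(\mu)$, one has $\Pj_{(\mu,\sigma^2)}(A^c)\leq \Pj_{(0,1)}(\max_{[i/n,j/n]\in\mathcal{D}}[T_i^j(Y,0)-q_{ij}]>0)=\alpha$ by the defining condition \eqref{eq:significancelevel} of the critical values. The second term is controlled directly by Theorem \ref{theorem:boundunderestimate}: the hypotheses $n\lambda\geq 32$ and $(n\lambda)^{-1}\log(8/(\lambda\alpha\beta_{k_n}))\leq 1/512$ are exactly those required, yielding $\Pj_{(\mu,\sigma^2)}(\hat{K}<K)\leq 1-\eta^K$ uniformly over $\mathcal{S}_{\Delta,\lambda}$. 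Combining the two bounds gives
\[
\Pj_{(\mu,\sigma^2)}(\mu\in C(\textbf{q}))\geq 1-\alpha-(1-\eta^K),
\]
which is the desired statement.

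There is essentially no obstacle here because the two deep ingredients, the overestimation bound (Theorem \ref{theorem:boundoverestimate}) and the underestimation bound (Theorem \ref{theorem:boundunderestimate}), are already available; the only thing to check carefully is that the event $A$ is precisely the multiscale constraint for the true $\mu$ restricted to intervals in $\mathcal{D}(\mu)\subset\mathcal{D}$, so that on $A$ the true signal is feasible and the two conditions in \eqref{eq:defC} reduce to $\hat{K}\geq K$. Once this is noted, the proof is a two-line union bound.
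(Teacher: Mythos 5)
Your proof is correct and follows essentially the same route as the paper: the same decomposition of $\{\mu\in C(\textbf{q})\}$ into the multiscale acceptance event and $\{\hat{K}=K\}$, the same observation that feasibility of the true $\mu$ forces $\hat{K}\leq K$ so that the second event reduces to $\{\hat{K}\geq K\}$, and the same combination of the level-$\alpha$ property from Theorem \ref{theorem:boundoverestimate} with the underestimation bound of Theorem \ref{theorem:boundunderestimate} via a union bound. No gaps.
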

\begin{proof}
It follows from the definition of $C\left(\textbf{q}\right)$ in \eqref{eq:defC} as well as from Theorems \ref{theorem:boundoverestimate} and \ref{theorem:boundunderestimate} that
\begin{equation*}
\begin{split}
&\inf_{(\mu,\sigma^2)\in \mathcal{S}_{\Delta,\lambda}}\Pj_{(\mu,\sigma^2)}\left(\mu\in C\left(\textbf{q}\right)\right)\\
= & \inf_{(\mu,\sigma^2)\in \mathcal{S}_{\Delta,\lambda}}\Pj_{(\mu,\sigma^2)}\left(\max_{[\frac{i}{n},\frac{j}{n}]\in \mathcal{D}(\mu)} \left[T_i^j\big(Y,\mu([i/n,j/n])\big) - q_{ij}\right] \leq 0, \hat{K}=K\right)\\
= &\inf_{(\mu,\sigma^2)\in \mathcal{S}_{\Delta,\lambda}}\Pj_{(\mu,\sigma^2)}\left(\max_{[\frac{i}{n},\frac{j}{n}]\in \mathcal{D}(\mu)} \left[T_i^j\big(Y,\mu([i/n,j/n])\big) - q_{ij}\right] \leq 0, \hat{K}\geq K\right)\\
\geq & \inf_{(\mu,\sigma^2)\in \mathcal{S}_{\Delta,\lambda}}\Pj_{(\mu,\sigma^2)}\left(\max_{[\frac{i}{n},\frac{j}{n}]\in \mathcal{D}(\mu)} \left[T_i^j\big(Y,\mu([i/n,j/n])\big) - q_{ij}\right] \leq 0\right)-\sup_{(\mu,\sigma^2)\in \mathcal{S}_{\Delta,\lambda}}\Pj_{(\mu,\sigma^2)}\left(\hat{K} < K\right)\\
\geq & 1-\alpha-(1-\eta^K).
\end{split}
\end{equation*}
\end{proof}
\begin{proof}[Proof of Theorem \ref{theorem:confidencefunction}]
The statement is a direct consequence of Lemma \ref{lemma:confidencefunction}.
\end{proof}
\begin{Lemma}[Change-point locations]\label{lemma:confidencelocation}
Assume the setting of Lemma \ref{lemma:confidencefunction}. If $c_n$ is a sequence with $0 < c_n\leq \lambda/2$ and $k_n:=\lfloor\log_2(nc_n/2)\rfloor$ such that $nc_n\geq 16$ and 
\begin{equation}\label{eq:condconfidencelocation}
\frac{\log\left(\frac{4}{c_n\alpha\beta_{k_n}}\right)}{nc_n}\leq \frac{1}{256}
\end{equation}
hold, then uniformly in $\mathcal{S}_{\Delta,\lambda}$
\begin{equation*}
\begin{split}
&\Pj_{(\mu, \sigma^2)}\left(\sup_{\hat{\mu}\in C(\textbf{q}_n)}\max_{\tau\in\mathcal{I}(\mu)}\min_{\hat{\tau}\in\mathcal{I}(\hat{\mu})}\vert \hat{\tau}-\tau\vert > c_n\right)\\
&\leq  1-\left[1-3\exp\left(-\frac{1}{48}\left(\sqrt{\frac{nc_n\Delta^2}{16}}-\sqrt{16\log\left(\frac{4}{c_n\alpha\beta_{k_n}}\right)}\right)_+^2\right)\right]_+^{2K}.
\end{split}
\end{equation*}
\end{Lemma}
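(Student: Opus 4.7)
The plan is to mirror the argument used for Theorem \ref{theorem:boundunderestimatesharp}, but with the radius $c_n$ playing the role that $\lambda$ plays there. For each true change-point $\tau_j$, $j=1,\ldots,K$, I would construct two flanking dyadic intervals $J_j^+,J_j^- \in \mathcal{D}_{k_n}$ of length $\lambda^*:=2^{k_n}/n$ lying in $(\tau_j - c_n, \tau_j)$ and $[\tau_j, \tau_j + c_n)$, respectively. Since $k_n = \lfloor\log_2(nc_n/2)\rfloor$, one checks $c_n/4 \leq \lambda^* \leq c_n/2$, so such intervals exist. The restriction $c_n \leq \lambda/2$ ensures each $J_j^\pm$ lies inside a constant segment of $\mu$ (taking the value $\mu_{j-1}$ and $\mu_j$, respectively) and that all $2K$ intervals are pairwise disjoint.

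Next I would observe that if some $\hat\mu \in C(\textbf{q}_n)$ contains no change-point within distance $c_n$ of $\tau_j$, then $\hat\mu \equiv \hat v$ on the window $(\tau_j - c_n, \tau_j + c_n) \supset J_j^+ \cup J_j^-$ for some $\hat v \in \R$. By symmetry assume $\hat v \leq (\mu_{j-1} + \mu_j)/2$. Since $\hat\mu \in C(\textbf{q}_n)$, the constraint $T_{J_j^+}(Y, \hat v) \leq q_{k_n}$ holds, and using the monotonicity of $(\overline Y_{J_j^+}-v)^2$ in $v$, this forces either $\overline Y_{J_j^+} \leq (\mu_{j-1}+\mu_j)/2$ or $T_{J_j^+}(Y, (\mu_{j-1}+\mu_j)/2) \leq q_{k_n}$. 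Let $G_j^+$ be the complement of this disjunction and define $G_j^-$ analogously on $J_j^-$. Then the bad event whose probability we want to control is contained in $\bigcup_{j,\pm}(G_j^\pm)^c$.

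To bound $\Pj((G_j^\pm)^c)$, I would apply the Gaussian tail bound from Lemma~7.1 of \citep{SMUCE} to the sample-mean term and Lemma \ref{lemma:boundcdfstatfurther} to the $F$-statistic term, using $n\lambda^* \geq nc_n/4$ together with the fact that the standardized gap at the midpoint is at least $\Delta/2$ in the relevant segment. Combined with the bound $q_{k_n} \leq 8\log(4/(c_n\alpha\beta_{k_n}))$ from Lemma \ref{lemma:boundcritval} (noting $n/2^{k_n} \leq 4/c_n$), a routine computation gives
\[
\Pj\bigl((G_j^\pm)^c\bigr) \leq 3\exp\!\left(-\tfrac{1}{48}\bigl(\sqrt{nc_n\Delta^2/16}-\sqrt{16\log(4/(c_n\alpha\beta_{k_n}))}\bigr)_+^2\right),
\]
where the hypotheses $nc_n\geq 16$ and \eqref{eq:condconfidencelocation} ensure the side conditions $2^{-k_n}\log(n/(2^{k_n}\alpha\beta_{k_n}))\leq 1/2$ and $q_{k_n}/(n\lambda^*)\leq 1/8$ required by Lemmas \ref{lemma:boundcritval} and \ref{lemma:boundcdfstatfurther}.

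Finally, because the $2K$ intervals $J_j^\pm$ are disjoint, the events $G_j^\pm$ depend on disjoint blocks of observations and are therefore independent. Multiplying gives $\Pj(\bigcap_{j,\pm} G_j^\pm) \geq [1-3\exp(\cdots)]_+^{2K}$, which yields the claimed bound on its complement. The main obstacle is the reduction step from a supremum over all admissible constants $\hat v$ to the two deterministic sub-events; this relies on the quadratic shape of $T_i^j(Y,\cdot)$ in its last argument, together with the proper dyadic alignment of $J_j^\pm$. Once this reduction is in place, the rest is a careful tracking of constants to match the stated exponent and the verification that \eqref{eq:condconfidencelocation} propagates through the applications of Lemmas \ref{lemma:boundcritval} and \ref{lemma:boundcdfstatfurther}.
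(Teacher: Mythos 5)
Your proposal is correct and follows essentially the same route as the paper, which proves this lemma by running the argument of Theorem \ref{theorem:boundunderestimatesharp} with the window $[\tau_j-c_n,\tau_j+c_n)$ in place of $[\tau_j-\lambda_{j-1}/2,\tau_j+\lambda_j/2)$: flanking dyadic intervals of length $2^{k_n}/n\in[c_n/4,c_n/2]$, the reduction of the feasibility of a constant $\hat{v}$ to the two events $\{\overline{Y}\lessgtr(\valmu_{j-1}+\valmu_j)/2\}$ and $\{T(Y,(\valmu_{j-1}+\valmu_j)/2)\leq q_{k_n}\}$, the bounds from Lemmas \ref{lemma:boundcritval} and \ref{lemma:boundcdfstatfurther}, and independence over the $2K$ disjoint blocks. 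Your tracking of the constants ($n\lambda^*\geq nc_n/4$, $\Delta/2$ at the midpoint, $n/2^{k_n}\leq 4/c_n$) and of the side conditions implied by $nc_n\geq 16$ and \eqref{eq:condconfidencelocation} matches what the paper's "analogously" implicitly requires.
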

\begin{proof}
Analogously to the proof of Theorem \ref{theorem:boundunderestimatesharp} we have
\begin{equation*}
\begin{split}
&\sup_{(\mu,\sigma^2)\in \mathcal{S}_{\Delta,\lambda}}\Pj_{(\mu,\sigma^2)}\left(\sup_{\hat{\mu}\in C(\textbf{q}_n)}\max_{\tau\in\mathcal{I}(\mu)}\min_{\hat{\tau}\in\mathcal{I}(\hat{\mu})}\vert \hat{\tau}-\tau\vert > c_n\right)\\
\leq &\sup_{(\mu,\sigma^2)\in \mathcal{S}_{\Delta,\lambda}}\Pj_{(\mu,\sigma^2)}\big(\exists\ j\in\{1,\ldots,K\}\text{ and }\hat{\mu}\in C(\textbf{q}_n) : \hat{\mu} \text{ is constant on } [\tau_j-c_n,\tau_j+c_n)\big)\\
\leq &1-\left[1-3\exp\left(-\frac{1}{48}\left(\sqrt{\frac{nc_n\Delta^2}{16}}-\sqrt{16\log\left(\frac{4}{c_n\alpha\beta_{k_n}}\right)}\right)_+^2\right)\right]_+^{2K}.
\end{split}
\end{equation*}
\end{proof}
\begin{proof}[Proof of Theorem \ref{theorem:confidencelocation}]
For $n$ large enough such that \eqref{eq:conditioncnvanishing} guarantees the assumption of Lemma \ref{lemma:confidencelocation} it follows
\begin{align*}
&\Pj_{(\mu,\sigma^2)}\left(\sup_{\hat{\mu}\in C(\textbf{q}_n)}{\max_{j=1,\ldots,K}c_n^{-1}\left\vert \tau_j - \hat{\tau}_j\right\vert} > 1\right)\\
&\leq  \Pj_{(\mu,\sigma^2)}\left(\hat{K}>K \text{ or } \exists\ \hat{\mu}\in C(\textbf{q}_n),\ j\in\{1,\ldots,K\} : \hat{\mu} \text{ is constant on } [\tau_j-c_n,\tau_j+c_n]\right)\\
&\leq\Pj_{(\mu,\sigma^2)}\left(\hat{K}>K\right)+\Pj_{(\mu,\sigma^2)}\Big(\exists\ \hat{\mu}\in C(\textbf{q}_n),\ j\in\{1,\ldots,K\} : \hat{\mu} \text{ is constant on } [\tau_j-c_n,\tau_j+c_n]\Big)\\
&\leq  \alpha_n+\left(1-\left[1-3\exp\left(-\frac{1}{48}\left(\sqrt{\frac{nc_n\Delta^2}{16}}-\sqrt{16\log\left(\frac{4}{c_n\alpha_n\beta_{k_n,n}}\right)}\right)_+^2\right)\right]_+^{2K}\right).
\end{align*}
The assertion follows from $\alpha_n\to 0$ and 
\begin{equation*}
\lim_{n\rightarrow \infty}{\sqrt{\frac{nc_n\Delta^2}{16}}-\sqrt{16\log\left(\frac{4}{c_n\alpha_n\beta_{k_n,n}}\right)}}=\infty,
\end{equation*}
whereby latter one is direct consequence of \eqref{eq:conditioncnvanishing}.
\end{proof}
The following theorem deals with the detection of single vanishing bump against a noisy background.
\begin{Theorem}[Single vanishing bump]\label{theorem:onesignalrate}
Assume the heterogeneous gaussian change-point model \eqref{eq:model} with sequences of bump signals $\mu_n(t):=\valmu_0+\delta_n\EINS_{I_n}(t)$ and $\sigma_n(t):=\EINS_{I_n^C}(t)+\valsd_n\EINS_{I_n}(t)$, where $\delta_n\neq 0$ is a sequence of change-point sizes, $\valsd_n>0$ a sequence of standard deviations on $I_n\in\mathcal{D}$, which is a sequence of intervals with $\vert I_n\vert \to 0$. Let $k_n:=\lfloor\log_2(n\vert I_n\vert)\rfloor$ and $\Delta_n:=\vert \delta_n\vert/\valsd_n$ be the sequence of the signal to noise ratios. Let $(\hat{K}_n)_n$, $\alpha_n$ and $\beta_{1,n},\ldots,\beta_{\snn,n}$ be as in Theorem \ref{theorem:multiscalesignalsrate}. We further assume
\begin{equation}\label{eq:onesignalrate}
\sqrt{n\vert I_n\vert}\Delta_n\geq (4+\epsilon_n)\sqrt{-\log(\vert I_n\vert)},
\end{equation}
with possibly $\epsilon_n\to 0$, but such that $\epsilon_n\sqrt{-\log(\vert I_n\vert)}\rightarrow \infty$ and
\begin{equation*}
\limsup_{n\to\infty}\frac{\sqrt{-\log\left(\alpha_n\beta_{k_n,n}\right)}}{\epsilon_n\sqrt{-\log(\vert I_n\vert)}}< \frac{1}{4},
\end{equation*}
\begin{equation}\label{eq:onesignalrateIn}
\liminf_{n\to \infty}{\frac{n\vert I_n\vert}{\log(n)}} > 64 \text{ and } \lim_{n\to\infty}{\frac{\log\left(\alpha_n\beta_{k_n,n}\right)}{n\vert I_n\vert}}=0,
\end{equation}
\begin{equation}\label{eq:onesignalratecond3}
\lim_{n\rightarrow \infty}{\valsd_n\frac{\sqrt{\vert I_n^C\vert}}{\sqrt{\vert I_n\vert}}}=\infty \text{ and}
\end{equation}
\begin{equation}\label{eq:onesignalratecond4}
\liminf_{n\rightarrow \infty}{\frac{\log(\beta_{k_n,n})}{\log(\beta_{\min,n})}}>0 \text{, with } \beta_{\min,n}:=\min\{\beta_{1,n},\ldots,\beta_{\snn,n}\}.
\end{equation}
Then,
\begin{equation}\label{eq:detectionsinglesignal}
\lim_{n\rightarrow \infty}{\Pj_{(\mu_n,\sigma^2_n)}\left(\hat{K}_n > 0\right)}=1.
\end{equation}
\end{Theorem}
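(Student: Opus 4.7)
The event $\{\hat K_n=0\}$ is equivalent to the existence of a constant $\hat\valmu$ satisfying $T_i^j(Y,\hat\valmu)\le q_{ij}$ on every $[i/n,j/n]\in\mathcal{D}$. The plan is to exhibit two dyadic intervals, one inside the bump and one in the background, such that with probability tending to one no constant can pass both corresponding local tests, forcing $\hat K_n\ge 1$.

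I would take $J^{\mathrm{bu}}:=I_n\in\mathcal D$, of scale $k_n=\lfloor\log_2(n\vert I_n\vert)\rfloor$, and pick $J^{\mathrm{bg}}\in\mathcal{D}_{k^{\mathrm{bg}}}$ disjoint from $I_n$ with $\vert J^{\mathrm{bg}}\vert$ bounded below by an absolute constant; such a $J^{\mathrm{bg}}$ exists because on any scale $k^{\mathrm{bg}}$ slightly below $\sn$ at least two dyadic intervals are present and, since $\vert I_n\vert\to 0$, at most one can meet $I_n$ eventually. By Gaussian concentration of sample means and Laurent--Massart inequalities for the chi-squared sample variances, there is an event $\mathcal E_n$ of probability tending to one on which, simultaneously, $\vert\overline Y_{J^{\mathrm{bu}}}-(\valmu_0+\delta_n)\vert=o(\valsd_n\Delta_n)$, $\vert\overline Y_{J^{\mathrm{bg}}}-\valmu_0\vert=o(\vert\delta_n\vert)$, and the variance estimates $\hat\valsd^2_{J^{\mathrm{bu}}}/\valsd_n^2$ and $\hat\valsd^2_{J^{\mathrm{bg}}}$ lie in a narrow band around $1$. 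The $o(\vert\delta_n\vert)$ bound for $\overline Y_{J^{\mathrm{bg}}}$ is the subtle one because $\vert\delta_n\vert$ may itself vanish, and condition \eqref{eq:onesignalratecond3} enters precisely here: it makes $\sqrt n\,\vert\delta_n\vert=\sqrt n\,\Delta_n\valsd_n\to\infty$, so the $O(1/\sqrt n)$ fluctuation on a constant-length block is negligible compared to $\vert\delta_n\vert$.

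On $\mathcal E_n$ I perform a case split on any candidate $\hat\valmu$. If $\vert\hat\valmu-\valmu_0\vert\ge\vert\delta_n\vert/2$, the background statistic obeys $T_{J^{\mathrm{bg}}}(Y,\hat\valmu)\gtrsim n\vert J^{\mathrm{bg}}\vert\delta_n^2=\Theta(n\Delta_n^2\valsd_n^2)$, which by \eqref{eq:onesignalrate}, \eqref{eq:onesignalratecond3}, and \eqref{eq:onesignalratecond4} dominates $q_{k^{\mathrm{bg}}}\le 8\log(n/(2^{k^{\mathrm{bg}}}\alpha_n\beta_{k^{\mathrm{bg}},n}))$ from Lemma \ref{lemma:boundcritval}, where \eqref{eq:onesignalratecond4} is what keeps $\beta_{k^{\mathrm{bg}},n}$ from decaying faster than $\beta_{k_n,n}$ allows. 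Otherwise $\vert\hat\valmu-(\valmu_0+\delta_n)\vert>\vert\delta_n\vert/2$, and the bump statistic obeys $T_{J^{\mathrm{bu}}}(Y,\hat\valmu)\ge n\vert I_n\vert\Delta_n^2(1/2-o(1))^2$; comparing with $q_{k_n}\le 8\log(2/(\vert I_n\vert\alpha_n\beta_{k_n,n}))$ and invoking \eqref{eq:onesignalrate} together with the $\limsup$ bound on $\sqrt{-\log(\alpha_n\beta_{k_n,n})}/(\epsilon_n\sqrt{-\log\vert I_n\vert})$ closes the inequality for $n$ large. In either case some local test rejects, so $\hat K_n\ge 1$ on $\mathcal E_n$, and $\Pj(\hat K_n>0)\ge\Pj(\mathcal E_n)\to 1$.

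The main obstacle is matching the sharp constant $4$ appearing in \eqref{eq:onesignalrate}. The simplified Lemma \ref{lemma:boundcdfstatfurther} (with its $1/48$) is too lossy here; one must instead use Lemma \ref{lemma:boundcdfstat} with $z$ taken along a sequence tending to zero at an appropriate rate, so that the threshold $\sqrt{2q_{k_n}}$ becomes asymptotically $4\sqrt{-\log\vert I_n\vert}$, exactly matching the left-hand side of \eqref{eq:onesignalrate}. The remaining bookkeeping is to absorb the $o(1)$ losses from concentration and from the $\vert\delta_n\vert/2$ split threshold into the slack provided by $\epsilon_n$.
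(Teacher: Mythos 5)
Your overall architecture (a dyadic interval inside the bump, a background interval of length bounded below, and a dichotomy over the candidate constant $\hat{\valmu}$) matches the paper's, but the symmetric split at $\vert\hat{\valmu}-\valmu_0\vert\gtrless\vert\delta_n\vert/2$ is exactly what the constant $4$ in \eqref{eq:onesignalrate} cannot afford. After that split the bump statistic is only guaranteed to be of order $n\vert I_n\vert\Delta_n^2/4$, so you need $\sqrt{n\vert I_n\vert}\,\Delta_n/2>\sqrt{2q_{k_n}}$; with the only available control on the critical value, $q_k\leq 8\log(n/(2^k\alpha\beta_k))$ from Lemma \ref{lemma:boundcritval}, this forces $\sqrt{n\vert I_n\vert}\,\Delta_n\geq(\sqrt{32}+o(1))\sqrt{-\log\vert I_n\vert}$ rather than $(4+\epsilon_n)\sqrt{-\log\vert I_n\vert}$, and the argument does not close. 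Your proposed repair also targets the wrong object: sharpening Lemma \ref{lemma:boundcdfstat} by optimizing over $z$ only improves the constant $1/48$ in the exponent of the deviation bound, which affects how fast the failure probability decays but not whether it decays; it has no effect on $q_{k_n}$, which is defined by \eqref{eq:significancelevel}--\eqref{eq:balancing} and enters the argument only through Lemma \ref{lemma:boundcritval}. The factor you are missing is the $1/2$ given away in the split, not anything in the tail bound.

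The paper's proof avoids this with an asymmetric split: it dichotomizes at $\valmu_0+\theta_n$ with $\theta_n=\sqrt{\gamma_n/n}$ chosen so that $\sqrt{\gamma_n\vert I_n\vert}/\valsd_n=c\,\epsilon_n\sqrt{-\log\vert I_n\vert}$ for some $c<1$, i.e.\ $\theta_n=o(\delta_n)$. The bump interval then sees essentially the full jump, $\sqrt{n\vert I_n\vert}(\delta_n-\theta_n)/\valsd_n\geq\bigl(4+(1-c)\epsilon_n\bigr)\sqrt{-\log\vert I_n\vert}$, which beats $\sqrt{2q_{k_n}}\leq 4\sqrt{-\log\vert I_n\vert}+4\sqrt{-\log(\alpha_n\beta_{k_n,n})}$ precisely because of the $\limsup<1/4$ condition; and the background interval $J_n\subset I_n^C$, having length bounded below, still detects the tiny residual $\theta_n$ because $\sqrt{n\vert J_n\vert}\,\theta_n\asymp\valsd_n\sqrt{\vert I_n^C\vert/\vert I_n\vert}\cdot\epsilon_n\sqrt{-\log\vert I_n\vert}\to\infty$. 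This last step is where \eqref{eq:onesignalratecond3} and \eqref{eq:onesignalratecond4} genuinely enter --- not, as in your sketch, to make $\sqrt{n}\vert\delta_n\vert$ large, but to guarantee power against the much smaller deviation $\theta_n$ on the background. With this split, Lemma \ref{lemma:boundcdfstatfurther} with its $1/48$ is perfectly adequate and no sharpening of the deviation bounds is required.
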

Conditions \eqref{eq:onesignalrate} and \eqref{eq:onesignalrateIn} are the main assumptions of the theorem to detect the vanishing signal on $I_n$. We discussed them together with the conditions of Theorem \ref{theorem:multiscalesignalsrate} in Section \ref{sec:rates}. We also need the weak technical conditions \eqref{eq:onesignalratecond3} and \eqref{eq:onesignalratecond4} on the size of $\vert I_n^C\vert$ and the minimal weight $\beta_{\min,n}$ to ensure that the detection power on the complement $I_n^C$ is large enough, too. Condition \eqref{eq:onesignalratecond4} is for instance fulfilled by uniform weights $\beta_{1,n}=\cdots=\beta_{\snn,n}=1/\snn$, but many other choices are possible, too. We further assumed $I_n\in\mathcal{D}$, otherwise we have to replace $I_n$ by the largest subinterval which is an element of the dyadic partition. Such an interval exists always and has at least length $n^{-1}2^{\lfloor\log_2(n\vert I_n\vert/2)\rfloor}>\vert I_n\vert/4$. Therefore, omitting the condition $I_n\in\mathcal{D}$ would not change the rate. It is possible to strengthen \eqref{eq:detectionsinglesignal} further to $\lim_{n\rightarrow \infty}{\Pj_{(\mu_n, \sigma_n^2)}(\hat{K}_n \geq K)}=1$ if we increase all constants a little bit.
\begin{proof}[Proof of Theorem \ref{theorem:onesignalrate}]
We denote by $J_n$ the longest subinterval $J_n\subset I_n^ C$ which is part of the dyadic partition. Such an interval exists (at least for $n$ large enough) always, since $\vert I_n\vert\to 0$, and has at least length $\vert I_n^ C \vert / 8$. Moreover, let $k_n:=\log_2(n\vert I_n\vert)$ and $l_n:=\log_2(n\vert J_n\vert)$. Then, the Lemmas 7.1 in \citep{SMUCE} and \ref{lemma:boundcdfstatfurther} yield for any $\theta_n>0$
\begin{align*}
&\lim_{n\rightarrow \infty}{\Pj_{(\mu_n,\sigma^2_n)}\left(\hat{K}_n > 0\right)}\\
&=  \lim_{n\rightarrow \infty}{1 - \Pj_{(\mu_n,\sigma^2_n)}\left(\hat{\mu} \text{ is constant}\right)}\\
&\geq  \lim_{n\rightarrow \infty}{1 - \Pj_{(\mu_n,\sigma^2_n)}\left(\exists\ \hat{\valmu} \leq \valmu_0 + \theta_n : T_{I_n}(Y,\hat{\valmu}) \leq q_{k_n} \text{ or } \exists\ \hat{\valmu} \geq \valmu_0 + \theta_n : T_{J_n}(Y,\hat{\valmu}) \leq q_{l_n}\right)}\\
&
\begin{aligned}
\geq  \lim_{n\rightarrow \infty}1 &- \Pj_{(\mu_n,\sigma^2_n)}\left(\exists\ \hat{\valmu} \leq \valmu_0 + \theta_n : T_{I_n}(Y,\hat{\valmu}) \leq q_{k_n}\right)\\
& - \Pj_{(\mu_n,\sigma^2_n)}\left(\exists\ \hat{\valmu} \geq \valmu_0 + \theta_n : T_{J_n}(Y,\hat{\valmu}) \leq q_{l_n}\right)
\end{aligned}\\
&
\begin{aligned}
\geq  \lim_{n\rightarrow \infty}1 &- \Pj_{(\mu_n,\sigma^2_n)}\left(\overline{Y}_{I_n}\leq \valmu_0 + \theta_n\right) - \Pj_{(\mu_n,\sigma^2_n)}\left(T_{I_n}(Y,\valmu_0 + \theta_n) \leq q_{k_n}\right)\\
& - \Pj_{(\mu_n,\sigma^2_n)}\left(\overline{Y}_{J_n}\geq \valmu_0 + \theta_n\right) - \Pj_{(\mu_n,\sigma^2_n)}\left(T_{J_n}(Y,\valmu_0 + \theta_n) \leq q_{l_n}\right)
\end{aligned}\\
&\geq  \lim_{n\rightarrow \infty}{1 - 2\Pj_{(\mu_n,\sigma^2_n)}\left(T_{I_n}(Y,\valmu_0 + \theta_n) \leq q_{k_n}\right) -2\Pj_{(\mu_n,\sigma^2_n)}\left(T_{J_n}(Y,\valmu_0 + \theta_n) \leq q_{l_n}\right)}\\
&\geq  \lim_{n\rightarrow \infty}{1 - 4\exp\left(-\frac{1}{48}\left(\Gamma_{I_n}\right)_+^2\right) -4\exp\left(-\frac{1}{48}\left(\Gamma_{J_n}\right)_+^2\right)}
= 1,
\end{align*}
if
\begin{equation*}
\Gamma_{I_n}:=\sqrt{n\vert I_n \vert} \frac{\delta_n-\theta_n}{s_n}-\sqrt{2q_{k_n}} \to \infty \text{ and }
\Gamma_{J_n}:=\sqrt{n\vert J_n \vert }\theta_n-\sqrt{2q_{l_n}} \to \infty,
\end{equation*}
and if the conditions of Lemma \ref{lemma:boundcdfstatfurther} are satisfied. This is the case, since $n\vert I_n\vert\to\infty$ and $n\vert J_n\vert\to\infty$, because of \eqref{eq:onesignalrateIn} and $\vert I_n\vert \to 0$, as well as $q_{k_n}/(n\vert I_n\vert)\leq 1/8$ and $q_{l_n}/(n\vert J_n\vert)\leq 1/8$ hold at least for $n$ large enough:
The first one is a direct consequence of Lemma \ref{lemma:boundcritval} and \eqref{eq:onesignalrateIn}
\begin{equation*}
\frac{q_{k_n}}{n\vert I_n\vert}\leq \frac{8\log\left(\frac{1}{\vert I_n\vert \alpha_n\beta_{k_n,n}}\right)}{n\vert I_n\vert}\leq \frac{1}{8},
\end{equation*}
since then the assumptions of Lemma \ref{lemma:boundcritval} are also fulfilled. The second inequality follows from Lemma \ref{lemma:boundcritval}, \eqref{eq:onesignalrateIn} and \eqref{eq:onesignalratecond4} as well as the fact that $\vert I_n\vert / \vert J_n\vert \to 0$
\begin{equation*}
\lim_{n\to\infty}{\frac{q_{l_n}}{n\vert J_n\vert}}\leq \lim_{n\to\infty}{\frac{8\log\left(\frac{1}{\vert J_n\vert \alpha_n\beta_{l_n,n}}\right)}{n\vert J_n\vert}}\leq \lim_{n\to\infty}{\frac{8\log\left(\frac{1}{\vert J_n\vert \alpha_n\beta_{l_n,n}}\right)}{8\log\left(\frac{1}{\vert I_n\vert \alpha_n\beta_{k_n,n}}\right)}\frac{\vert I_n\vert}{\vert J_n\vert}\frac{8\log\left(\frac{1}{\vert I_n\vert \alpha_n\beta_{k_n,n}}\right)}{n\vert I_n\vert}}\to 0,
\end{equation*}
since then the assumptions of Lemma \ref{lemma:boundcritval} are also fulfilled.\\
We define now $\theta_n=\sqrt{\gamma_n / n}$ via the equation
\begin{equation*}
\sqrt{\frac{\gamma_n\vert I_n \vert}{s_n^2}} = c\epsilon_n \sqrt{\log\left(\frac{1}{\vert I_n\vert}\right)}
\end{equation*}
for $0< c < 1$. Then, it follows from Lemma \ref{lemma:boundcritval} and from $\sqrt{x+y}\leq \sqrt{x}+\sqrt{y}$ for $x,y>0$ together with the assumptions of the theorem that
\begin{equation*}
\begin{split}
\Gamma_{I_n}=&\sqrt{n\vert I_n \vert} \frac{\delta_n-\theta_n}{s_n}-\sqrt{2q_{k_n}}\\
= &\sqrt{n\vert I_n\vert \delta_n^2/s_n^2} - \sqrt{\gamma_n \vert I_n\vert / s_n^2} - \sqrt{2q_{k_n}}\\
\geq & \sqrt{n\vert I_n\vert \Delta_n^2} - \sqrt{\gamma_n \vert I_n\vert / s_n^2} - \sqrt{16\log\left(\frac{1}{\vert I_n\vert \alpha_n\beta_{k_n,n}}\right)}\\
\geq & (4+\epsilon_n)\sqrt{\log\left(\frac{1}{\vert I_n\vert}\right)} - c\epsilon_n\sqrt{\log\left(\frac{1}{\vert I_n\vert}\right)} - 4\sqrt{\log\left(\frac{1}{\vert I_n\vert}\right)} - 4 \sqrt{\log\left(\frac{1}{\alpha_n\beta_{k_n,n}}\right)}\\
\geq & (1-c)\epsilon_n \sqrt{\log\left(\frac{1}{\vert I_n\vert}\right)} - 4 \sqrt{\log\left(\frac{1}{\alpha_n\beta_{k_n,n}}\right)}
\to \infty,
\end{split}
\end{equation*}
since the conditions of Lemma \ref{lemma:boundcritval} are satisfied, as shown above.\\
Moreover, we have $\Gamma_{J_n}:=\sqrt{n\vert J_n \vert}\theta_n-\sqrt{2q_{l_n}}=\sqrt{\vert J_n \vert \gamma_n}-\sqrt{2q_{l_n}} \to \infty$ if
\begin{equation*}
\sqrt{\frac{\vert J_n\vert \gamma_n}{2q_{l_n}}} \geq \sqrt{\frac{\vert J_n\vert \gamma_n}{16\log\left(\frac{1}{\vert J_n\vert \alpha_n\beta_{l_n,n}}\right)}} \to \infty,
\end{equation*}
where we used Lemma \ref{lemma:boundcritval} again. Finally, it follows from the assumptions of the theorem that
$\liminf_{n\to\infty}{\vert J_n\vert}\geq \liminf_{n\to\infty}{\vert I_n^C\vert/8} > 0$ and thus
\begin{equation*}
\begin{split}
\sqrt{\frac{\vert J_n\vert \gamma_n}{\log\left(\frac{1}{\alpha_n\beta_{l_n,n}}\right)}}
= & \frac{\sqrt{\vert I_n\vert \gamma_n}}{s_n\sqrt{\log\left(\frac{1}{\alpha_n\beta_{k_n,n}}\right)}}\frac{s_n\sqrt{\vert J_n\vert}}{\sqrt{\vert I_n \vert}}\frac{\sqrt{\log\left(\frac{1}{\alpha_n\beta_{k_n,n}}\right)}}{\sqrt{\log\left(\frac{1}{\alpha_n\beta_{l_n,n}}\right)}}\\
\geq  & \frac{c\epsilon_n \sqrt{\log\left(\frac{1}{\vert I_n\vert}\right)}}{\sqrt{\log\left(\frac{1}{\alpha_n\beta_{k_n,n}}\right)}}\frac{s_n\sqrt{\vert I_n^C\vert / 8}}{\sqrt{\vert I_n \vert}}\frac{\sqrt{\log\left(\frac{1}{\alpha_n\beta_{k_n,n}}\right)}}{\sqrt{\log\left(\frac{1}{\alpha_n\beta_{\min,n}}\right)}}\to \infty.
\end{split}
\end{equation*}
\end{proof}
\begin{proof}[Proof of Theorem \ref{theorem:multiscalesignalsrate}]
It follows from Theorem \ref{theorem:boundunderestimate} that
\begin{equation*}
\Pj_{(\mu_n,\sigma^2_n)}\left(\hat{K}_n < K_n\right)
\leq 1-\left[1-3\exp\left(-\frac{1}{48}\left(\Gamma_n\right)_+^2\right)\right]_+^{2K_n}
\leq 6K_n\exp\left(-\frac{1}{48}\left(\Gamma_n\right)_+^2\right),
\end{equation*}
with 
\begin{equation*}
\Gamma_n:=\sqrt{\frac{n\lambda_n\Delta_n^2}{32}}-\sqrt{16\log\left(\frac{8}{\lambda_n\alpha_n\beta_{k_n,n}}\right)},
\end{equation*}
since the assumptions of Theorem \ref{theorem:boundunderestimate} are satisfied by \eqref{conditionmultiscalesignalsrate}.\\ 
In case (1) it is enough to show $\Gamma_n\to \infty$, because $K_n$ is bounded. Finally, $\Gamma_n\to \infty$ follows from 
\begin{equation*}
\frac{n\lambda_n\Delta_n^2}{\log\left(\frac{8}{\lambda_n\alpha_n\beta_{k_n,n}}\right)}\to\infty.
\end{equation*}
In case (2) for bounded $K_n$, $\Gamma_n\to\infty$ follows from
\begin{equation*}
\begin{split}
\Gamma_n= & \sqrt{\frac{n\lambda_n\Delta_n^2}{32}}-\sqrt{16\log\left(\frac{8}{\lambda_n\alpha_n\beta_{k_n,n}}\right)}\\
\geq & \left(\frac{\sqrt{512}}{\sqrt{32}}+\frac{\epsilon_n}{\sqrt{32}}\right)\sqrt{\log\left(\frac{1}{\lambda_n}\right)}-\sqrt{16}\sqrt{\log\left(\frac{1}{\lambda_n}\right)}-\sqrt{16}\sqrt{\log\left(\frac{8}{\alpha_n\beta_{k_n,n}}\right)}\\
= & \frac{1}{\sqrt{32}}\left(\epsilon_n\sqrt{\log\left(\frac{1}{\lambda_n}\right)}-\sqrt{512}\sqrt{\log\left(\frac{8}{\alpha_n\beta_{k_n,n}}\right)}\right) \to \infty.
\end{split}
\end{equation*}
For unbounded $K_n$ we have $K_n\leq 1/\lambda_n$. It follows
\begin{equation*}
\begin{split}
& K_n\exp\left(-\frac{1}{48}\left(\Gamma_n\right)_+^2\right)\\
\leq & \exp\left(\log\left(\frac{1}{\lambda_n}\right)-\frac{1}{48}\left(\frac{C}{\sqrt{32}}\sqrt{\log\left(\frac{1}{\lambda_n}\right)}+\frac{1}{\sqrt{32}}\epsilon_n\sqrt{\log\left(\frac{1}{\lambda_n}\right)}-\sqrt{16}\sqrt{\log\left(\frac{8}{\alpha_n\beta_{k_n,n}}\right)}\right)_+^2\right)\\
\leq &\exp\Bigg(-\frac{1}{48}\left(\frac{1}{\sqrt{32}}\epsilon_n\sqrt{\log\left(\frac{1}{\lambda_n}\right)}-\sqrt{16}\sqrt{\log\left(\frac{8}{\alpha_n\beta_{k_n,n}}\right)}\right)^2\Bigg)\to  0.
\end{split}
\end{equation*}
\end{proof}

\subsection{Proofs of Section \ref{sec:computation}}
\begin{proof}[Proof of Theorem \ref{theorem:convergencecritval}]
We prove the assertion with \citep[Theorem 5.9]{vanderVaart07} which states three conditions for the convergence of a Z-estimator. Note that the convergence in probability can be replaced by almost sure convergence, if the assumptions hold almost surely. We define 
\[\Psi\left(\theta\right):=\left\vert F\left(\theta\right)-(1-\alpha)\right\vert+\sum_{k=2}^{\sn}{\left\vert\frac{1-F_1\left(\theta_1\right)}{\beta_1}-\frac{1-F_k\left(\theta_k\right)}{\beta_k}\right\vert}\]
and
\[\Psi_M\left(\theta\right):=\left\vert F_M\left(\theta\right)-(1-\alpha)\right\vert+\sum_{k=2}^{\sn}{\left\vert\frac{1-F_{M,1}\left(\theta_1\right)}{\beta_1}-\frac{1-F_{M,k}\left(\theta_k\right)}{\beta_k}\right\vert}\]
as well as $\Theta:=[0,\infty)^\sn$, $\theta_0:=\textbf{q}$ and $\hat{\theta}_M:=\widehat{\textbf{q}}_M$. Now, \eqref{eq:empsignificancelevel} and \eqref{eq:empbalancing} yield
\[\Psi_M\left(\widehat{\textbf{q}}_M\right)\leq \frac{1}{M}\left(1+\frac{\sn-1}{\min\{\beta_1,\ldots,\beta_\sn\}}\right)=\oo(1)\] almost surely. In addition, Lemma \ref{lemma:uniqunesscritval} shows that the vector of critical values $\textbf{q}$ is unique. Moreover, $\sup_{\theta\in [0,\infty)^\sn}{\Vert F_M(\theta)-F(\theta)\Vert}$ and $\sup_{\theta_k\geq 0}{\Vert F_{M,k}(\theta_k)-F_k(\theta_k)\Vert}$ for all $k\in\{1,\ldots,\sn\}$ converge to zero almost surely. Thus, all assumptions of \citep[Theorem 5.9]{vanderVaart07} are satisfied and the assertion follows.
\end{proof}
\begin{proof}[Proof of Lemma \ref{lemma:computationtime}]
The computation time for the bounds $\underline{b}_{i,j}$ and $\overline{b}_{i,j}$ is $\OO(1)$ for every fixed interval $[i/n,j/n]\in\mathcal{D}$, since they depend only on the sums $\sum_{l=i}^j{Y_l}$ as well as $\sum_{l=i}^j{Y_l^2}$ and these can be obtained from (precomputed) cumulative sums. The computation time for the intersected bounds $\underline{B}_{i,j}$ and $\overline{B}_{i,j}$ are also $\OO(1)$ for a fixed interval $[i/n,j/n]$, since they can be computed iteratively. Therefore, the total time to compute the bounds is $\OO(n)$, since the dyadic partition contains less than $n$ intervals.\\
It follows from its iterative definition that the left limits $L_1,\ldots,L_{\hat{K}}$ can be computed in $\OO(n)$. Therefore, the dynamic programming algorithm has cost $\OO(\sum_{k=1}^{\hat{K}-1}(R_k-L_k+1)$ $(R_{k+1}-L_{k+1}+1))$ besides some linear costs, since for each point in the interval $[L_{k+1},R_{k+1}]$ the optimal change-point in the interval $[L_{k},R_{k}]$ has to be determined by computing the cost functional for each of these points. But, for a single interval the computation time for the restricted maximum likelihood estimator and for the cost functional is $\OO(1)$ if the constraints $\underline{B}_{i,j}$ and $\overline{B}_{i,j}$ are given, since the restricted maximum likelihood estimator and the cost functional depend besides these constraints again only on the sums $\sum_{l=i}^j{Y_l}$ and $\sum_{l=i}^j{Y_l^2}$. This proves the assertion.
\end{proof}

\onehalfspacing

\bibliographystyle{plainnat}
\bibliography{LiteratureHSMUCE}

\end{document}